\author{Olivier Peltre}
\date{\small
Univ. Artois, CNRS, CRIL, F-62300 Lens, France}
\title{{\bf Local Max-Entropy and Free Energy Principles,\\
Belief Diffusions and their Singularities}}
\newcommand{\N}{\mathbb{N}}
\newcommand{\R}{\mathbb{R}}
\newcommand{\Z}{\mathbb{Z}}
\DeclareMathOperator{\e}{\mathrm{e}}
\newcommand{\Pow}{\mathcal{P}}
\newcommand{\incl}{\subseteq}
\newcommand{\cont}{\supseteq}
\newcommand{\eqvl}{\Leftrightarrow}
\newcommand{\Set}{\mathbf{Set}}
\newcommand{\Alg}{\mathbf{Alg}}
\newcommand{\Vect}{\mathbf{Vect}}
\DeclareMathOperator{\Ker}{Ker}
\DeclareMathOperator{\Img}{Im}
\newcommand{\cork}{{\bf cork}}
\newcommand{\gl}{\mathfrak{gl}}
\newcommand{\Perm}{\mathfrak{S}}
\renewcommand{\Prob}{\mathrm{Prob}}
\newcommand{\E}{\mathbb{E}}
\newcommand{\Tg}{\mathrm{T}}
\newcommand{\F}{F}
\newcommand{\varF}{\mathcal{F}}
\newcommand{\U}{\mathcal{U}}
\newcommand{\DF}{\mathcal{D}}
\renewcommand{\aa}{\mathrm{a}}
\renewcommand{\bb}{\mathrm{b}}
\newcommand{\cc}{\mathrm{c}}
\newcommand{\dd}{\mathrm{d}}
\newcommand{\Om}{\Omega}
\newcommand{\ph}{\varphi}
\newcommand{\eps}{\varepsilon}
\newcommand{\Fix}{\mathcal{M}}
\newcommand{\Fixt}{\mathcal{M}_+}
\newcommand{\fix}{{\cal N}}
\newcommand{\Beliefs}{{\cal B}}
\newcommand{\trito}{\triangleright}
\renewcommand{\thesection}{\Roman{section}}
\renewcommand{\thesubsection}{\Alph{subsection}}
\renewcommand{\thesubsubsection}{\arabic{subsubsection}}
\renewcommand{\p@subsection}{\thesection.}
\newcounter{mytheorem}
\newtheorem{MyTheorem}[mytheorem]{Theorem}
\newcounter{theorem}
\newtheorem{Theorem}[theorem]{Theorem}
\newtheorem{Corollary}[theorem]{Corollary}
\newcounter{definition}
\newtheorem{Definition}[definition]{Definition}
\newtheorem{Proposition}[definition]{Proposition}
\newtheorem{Lemma}[definition]{Lemma}
\newcounter{problem}
\newtheorem{Problem}[problem]{Problem}
\begin{document}

\maketitle

\begin{multicols}{2}

\begin{abstract}
A comprehensive picture of three Bethe-Kikuchi variational principles 
including their relationship to belief propagation (BP) algorithms on 
hypergraphs is given. 
The structure of BP equations is generalized to define continuous-time 
diffusions, solving 
localized versions of
the max-entropy principle (A), the variational free energy principle (B), 
and a less usual equilibrium free energy principle (C), Legendre dual to A. 
Both critical points 
of Bethe-Kikuchi functionals and stationary beliefs
are shown to lie at the non-linear intersection of two constraint surfaces, enforcing
energy conservation and marginal consistency respectively.
The hypersurface of singular beliefs, 
accross which equilibria 
become unstable as the constraint surfaces meet tangentially, 
is described by polynomial equations in the convex polytope 
of consistent beliefs. This polynomial is expressed 
by a loop series expansion for graphs of binary variables. 

\end{abstract}

\section{Introduction}

Boltzmann-Gibbs principles describe the equilibrium state
$p_\Om = \frac 1 {Z_\Om} \e^{- H_\Om}$ of a statistical sytem, given its hamiltonian 
or energy function $H_\Om : E_\Om \to \R$, as solution to 
a collection of variational problems  \cite{Jaynes-57, Marle-16}.
Each one corresponds to a different set of external constraints 
(energy or temperature, volume or pressure...)
yet they are all related by Legendre transforms on the constraint parameters. 
However, evaluating the
partition function $Z_\Om$ is impossible for large configuration spaces 
$E_\Om$, and the design of efficient algorithms for estimating 
$p_\Om$ or a subset of its marginals is a challenge with countless
applications. 

Given a hypergraph $K \subseteq \Pow(\Om)$ with vertices in the set of 
variables $\Om$,
the Bethe-Kikuchi principles \ref{S-crit}, \ref{varF-crit} and 
\ref{F-crit} below yield tractable variational 
problems, where instead of the global distribution $p_\Om$,
one optimizes over a field of {\it consistent local beliefs} 
$(p_\aa)_{\aa \in K}$.
Controlling the range of interactions independently 
of the size of the system, they exploit the asymptotic 
additivity of \emph{extensive} thermodynamic
functionals. 
The free energy principle~\ref{varF-crit}, 
also known as the cluster variation method 
(CVM)~\cite{Bethe-35,Kikuchi-51,Morita-57}, 
was notably shown to have exponential convergence 
on the Ising model when $K$ grows 
coarse \cite{Schlijper-83}, the choice of $K$ thus
offers a compromise between precision and complexity.
It was already known that CVM solutions 
may be found by the Generalized Belief Propagation (GBP) algorithm 
of Yedidia, Freeman and Weiss~\cite{Yedidia-2005,gsi19,phd} \nocite{gsi21}
(and known for even longer when $K$ is a graph~\cite{Ikeda-04}). 
This algorithm is however far from being optimal, 
and there lacked a comprehensive understanding of their correspondence 
including the relationship with a Bethe-Kikuchi max-entropy principle 
(\ref{S-crit}) and its Legendre-dual free energy principle (\ref{F-crit}). 

We here describe continuous-time diffusion equations on belief networks 
which smooth out most convergence issues of GBP, recovered as 
a time-step 1 Euler integrator.  
We then show 
that they solve the three 
different Bethe-Kikuchi variational problems, \ref{S-crit},
\ref{varF-crit} and \ref{F-crit}, whose 
critical points are shown to lie at the intersection of 
two constraint manifolds, enforcing {\it energy conservation} 
and {\it belief consistency} respectively. 
The former consists of homology classes in a chain 
complex of local observables $(C_\bullet, \delta)$, 
the latter consists of cohomology classes in 
the dual cochain complex of local measures $(C_\bullet^*, d)$, 
but these two are related by the non-linear correspondence 
mapping energy functions to local Gibbs states.  
While solutions to the max-entropy 
principle \ref{S-crit} 
are stationary 
states of \emph{adiabatic} diffusion algorithms, 
preserving the mean energy $\E[H_\Om]$, 
the variational free energy principle \ref{varF-crit} 
(CVM) is solved by \emph{isothermal} diffusions, 
preserving the inverse temperature $\beta = 1 / T$.
The equilibrium free energy principle \ref{F-crit} is 
dual 
to the other two,  
as it optimizes 
over fibers of gauge-transformed energy functions, 
and not over consistent beliefs. 
Its solutions retract onto those of \ref{varF-crit} 
and are also be found by isothermal diffusions. 

From a physical perspective, Bethe-Kikuchi principles 
could be viewed as mere tools
to approximate 
the global and exact Boltzmann-Gibbs principles, although
precise and powerful.
The possible coexistence of multiple energy values 
at a fixed temperature still reminds counter-intuitive yet 
physical phenomena,
such as surfusion and metastable equilibria.  
Recently, free energy principles, Bethe-Kikuchi approximations 
and BP algorithms
also made their way to neuroscience 
\cite{Friston-Parr-2017, Parr-19,Rudrauf-17, Sanchez-21}, 
a context where it is not very clear what a global 
probability distribution $p_\Om$ ought to describe.  
The locality of interactions is yet somehow materialized 
by neuron dendrites and larger scale brain connectivity.
Message-passing in search  
of consensus is
an interesting and working
metaphor of neuronal behaviours,
which might in this case hold more reality 
than global variational principles. 
The remarkable success of BP algorithms in decoding 
and stereovision 
applications demonstrates the potential of message-passing schemes 
on graphs 
to solve difficult problems. 
However the success of BP algorithms on loopy networks is often 
presented as an empirical coincidence, 
and a deeper understanding of their different regimes 
could provide the missing theoretical guarantees. 

Whatever the perspective, singularities 
of Bethe-Kikuchi functionals and belief diffusions
(in finite size)
are an important and interesting 
feature. 
They happen when the two constraint manifolds meet tangentially. 
A stationary state crossing the singular surface will generically become 
unstable, attracted towards a different sheet of the intersection. 
This would appear as a discontinuous jump in 
the convex polytope  $\Gamma_0$ of consistent beliefs. 
We show that the singular stratification 
$\bar \Sigma^1 = \bigsqcup_{k \geq 1} \Sigma^k$ 
is described by polynomial equations in $\Gamma_0$. 
They compute 
the corank $k$ of linearized diffusion, restricted to 
the subspace of infinitesimal gauge transformations. 
For graphs of binary variables, 
this polynomial is written explicitly 
in terms of a loop series expansion.

\nocite{Bethe-35,Kikuchi-51,Morita-57}
\nocite{Gallager-63, Pearl-82, Yedidia-2005}

\subsection{Related work}

The first occurence 
of BP as an approximate bayesian inference scheme dates back 
to Gallager's 1962 thesis \cite{Gallager-63} on decoding, 
although it is often attributed to Pearl's 1982 article 
on bayesian trees \cite{Pearl-82}, where it is exact. 
BP has received a lot of attention and new applications 
since then although it is still mostly famous in the 
decoding context, on this see for instance 
\cite{Weiss-97, Murphy-Weiss-99, Kschischang-01, Yedidia-2001,
Tehrani-Jego-08} and \cite{Kschischang-01} for an excellent review. 
In telecommunications, BP is thus used to reconstruct a 
parity-check encoded signal by iteratively updating beliefs 
until all the local constraints are satisfied. 
Although the associated factor graph has loops, 
BP works surprisingly well at reconstructing the signal. 
See \cite{Knoll-17} for a numerical study of loopy BP and its singularities. 
As a marginal estimation algorithm, usecases for BP 
and its generalizations to hypergraphs are quite universal. 
Other interesting applications for instance include (but are not limited to) 
computer stereovision \cite{Sun-03} and conditional Boltzmann machines \cite{Ping-2017}. 
Gaussian versions of BP also exist \cite{Mezard-2017}, from 
which one could for instance recover the well known K{\'a}lm{\'a}n filter 
on a hidden Markov chain \cite{Kschischang-01}. 

The relationship with Bethe-Kikuchi approximations 
is covered in the reference works
\cite{Yedidia-2001, Ikeda-04, Mezard-Montanari} in the case of graphs,  
yet a true correspondence with the CVM on hypergraphs 
$K \subseteq \Pow(\Om)$ could bot be stated 
before the GBP algorithm of Yedidia, Freeman and Weiss 
in 2005 \cite{Yedidia-2005},  
whose work bridged two subjects with a long history.  
The idea to replace the partition function $Z_\Om$ by 
a sum of local terms $\sum_{\aa \in K} c_\aa Z_\aa$, where 
coefficients $c_\aa \in \Z$ take care of eliminating redundancies, 
was first introduced 
by Bethe in 1935,  
and generalized by Kikuchi in 1951 \cite{Kikuchi-51}. 
The truncated Möbius inversion formula was only recognized by Morita 
in 1957 \cite{Morita-57}, laying the CVM on systematic 
combinatorial foundations. 
Among recent references, see \cite{Pelizzola-05} 
for a general introduction to the CVM. 
The convex regions of Bethe-Kikuchi free energies 
and loopy BP stability are studied in \cite{Mooij-07,Watanabe-09}, while 
very interesting loop series expansions 
may be found in \cite{Mori-13} and \cite{Sudderth-07}.
For applications of Bethe-Kikuchi free energies  
to neuroscience and active bayesian learning, see also 
\cite{Friston-Parr-2017, Parr-19,Rudrauf-17}.

The unifying notion of graphical model
describes Markov random fields by their factorization properties, which are
in general stronger than their conditional independence properties 
obtained by the 
Hammersley-Clifford theorem \cite{Hammersley-Clifford-71}.
This work sheds a different light from the usual probabilistic 
interpretation,  so as to make the most of 
the local structure of observation. 
The mathematical constructions below
thus mostly borrow from algebraic topology
and combinatorics. The reference on combinatorics is Rota \cite{Rota-64}, 
and the general construction 
for the cochain complex $(C_\bullet^*, d)$ dates back to 
Grothendieck and Verdier \cite{SGA-4-V}. It has been given 
a very nice description by Moerdijk in his short book \cite{Moerdijk}.
When specializing the theory to localized statistical systems, 
one quickly arrives at the so-called \emph{pseudo-marginal extension problem}
\cite{Kellerer-64,Vorobev-62,Abramsky-2011} 
whose solution is closely related to the \emph{interaction decomposition theorem} 
\cite{Kellerer-64,Matus-88}. This fundamental result 
yields direct sum decompositions for the functor of 
local observables, used in the proof of theorem \ref{thm:Gauss} below.

\subsection{Methods and Results}

This work brings together concepts and methods from 
algebraic topology, combinatorics,
statistical physics and information theory. 
It should particularly interest users of belief propagation algorithms, 
although we hope it will also motivate a broader use of Bethe-Kikuchi information
functionals beyond their proven decoding applications. 
It is intended as a comprehensive but high-level reference on
the subject for a pluridisciplinary audience. 
We expect some readers might lack specific vocabulary from 
homological algebra, although we do not believe it necessary for understanding 
the correspondence theorems. We provide 
specific references to theory and applications where needed, and 
longer proofs are laid in appendix to avoid burdening the main text 
for readers mostly interested by the results. 

The main object of theory here consists 
in what we call the {\it combinatorial chain complex} 
$(C_\bullet, \delta, \zeta)$. 
This is a graded vector space $\bigoplus_{r\geq 0} C_r$, 
a codifferential $\delta : C_r \to C_{r-1}$, 
and a combinatorial automorphism $\zeta : C_r \to C_r$, 
attached to any hypergraph $K$ 
with vertices in the set of variables $\Om$.  
The operators $\zeta$ and $\delta$ acting 
on $C_0$ and $C_1$ generate belief propagation equations, 
and an efficient implementation is made
available at \href{https://github.com/opeltre/topos}{github.com/opeltre/topos}.  
Although deeper numerical 
experiments are not in the scope of this article, 
this library was used to produce the level curves of a Bethe-Kikuchi free energy 
in figure \ref{fig:singular-contours} 
and the benchmarks of figure \ref{fig:benchmark}.

Initial motivations were to arrive at a concise factorization 
of GBP algorithms, and at a rigorous proof of the correspondence 
between GBP fixed points and CVM solutions. Although 
described earlier \cite{gsi19, phd, gsi21},
there lacked
their relationship to a Bethe-Kikuchi max-entropy
principle \ref{S-crit} and its dual (equilibrium) 
free energy principle \ref{F-crit}.
The polynomial description 
of singular sets $\Sigma^k \subseteq \Gamma_0$ is also new, 
and their explicit formulas on binary graphs yield 
a very satisfying and most expected relationship with 
the subject of loop series expansions \cite{Watanabe-09,Mori-13,Sudderth-07}. 

The article is structured as follows. 
\begin{itemize}[label=$-$,leftmargin=1em]
\item \ref{section:GMs}. {\small\uppercase{Graphical Models and Generalized Belief Propagation}}
defines Gibbsian ensembles 
as positive Graphical Models (GMs). The factorization property 
of the probability density translates as a linear spanning property of 
its log-likelihood, called a $K$-local observable. 
GBP equations are then provided along with classical examples. 

\item \ref{section:Principles}. {\small\uppercase{Max-Entropy and Free Energy Principles}}
briefly reviews Boltzmann-Gibbs and Bethe-Kikuchi principles 
so as to formulate the variational problems \ref{S-crit}, \ref{varF-crit} 
and \ref{F-crit}, localized versions 
of the fundamental principles defining thermodynamic equilibrium in statistical physics. 
 
\item\ref{section:Local}. {\small\uppercase{Local Statistical Systems}} is 
the core technical section, 
where we define the chain complex $(C_\bullet, \delta)$ 
of local observables, its dual complex $(C_\bullet^*, d)$ of
local densities, and the combinatorial automorphisms 
$\zeta$ and $\mu = \zeta^{-1}$ acting on all the degrees of $C_\bullet$.
These higher degree combinatorics of $C_r$ for ${r > 1}$ 
were described previously in \cite[chap. 3]{phd}.
We here propose an integral notation for the zeta transform 
making the analogy with geometry more intuitive, 
in the spirit of Rota \cite{Rota-64}. 

\item \ref{section:Diffusions}. {\small\uppercase{Belief Diffusions}}
uses the codifferential $\delta$ and its conjugate 
under $\zeta$ to generate diffusion equations on the 
complex $(C_\bullet, \delta, \zeta)$.
We explain under which 
conditions a flux functional $\Phi : C_0 \to C_1$
yields solutions to problems \ref{S-crit}, \ref{varF-crit} and \ref{F-crit} 
as stationary states of the diffusion $\frac{dv}{dt} = \delta \Phi(v)$
on $C_0$. Its purpose is to explore a homology class $[v]$ 
({\it energy conservation})
until meeting the preimage $\Fix^\beta \subseteq C_0$ 
of cohomology classes under the Gibbs state map 
at inverse temperature $\beta$ ({\it marginal consistency}).  

\item \ref{section:Equilibria}. {\small\uppercase{Message-Passing Equilibria}}
states the rigorous correspondence between critical points 
and stationary beliefs with theorems \ref{thm:S}, \ref{thm:varF} and \ref{thm:F}.
Singular subsets $\Sigma^k \subseteq \Gamma_0$ are defined 
by computing the dimension of the intersection of the two 
tangent constraint manifolds, almost everywhere transverse. 
The fact that it may be described by polynomial equations 
allows for numerical exploration of singularities, and motivate 
a deeper study relating them to the topology of $K$.
\end{itemize}

\subsection{Notations} 

Let $\Om$ denote a finite set of indices, 
which we may call the \emph{base space}.
We view the partial order of \emph{regions} $(\Pow(\Om), \subseteq)$ 
as a category with a unique arrow $\bb \to \aa$ whenever $\bb \subseteq \aa$. 
We write $\bb \subset \aa$ only when $\bb$
is a strict subset of $\aa$, and use consistent 
alphabetical order in notations as possible.  
The opposite category, denoted $\Pow(\Om)^{op}$, 
has arrows $\aa \to \bb$ 
for $\bb \subseteq \aa$. 

A free sheaf of \emph{microstates} $E : \Pow(\Om)^{op} \to \Set$ 
will then map every region 
$\aa \subseteq \Om$ to a finite local configuration
space $E_\aa = \prod_{i \in \aa} E_i$. 
In other words, the 
sections of $E_\aa$ are vertex colourings $x_\aa = (x_i)_{i\in \aa}$ of 
a subset $\aa \subseteq \Om$, with colours $x_i \in E_i$ for $i \in \aa$ 
(called local microstates or configurations in physical terminology). 
As a contravariant functor, $E$ maps every inclusion 
$\bb \incl \aa$ to the canonical restriction $\pi^{\aa\to\bb}$,
projecting $E_\aa$ onto  $E_\bb$.
Given a local section 
$x_\aa \in E_\aa$, we write 
$x_{\aa | \bb} = \pi^{\aa\to\bb}(x_\aa)$.

For every region $\aa \subseteq \Om$, we will write $\R^{E_\aa}$ for the finite dimensional
algebra of real observables on $E_\aa$,
write $\R^{E_\aa *}$ for its linear dual, 
and write $\bar \Delta_\aa = \Prob(E_\aa) \subset \R^{E_\aa *}$ 
for the topological simplex of probability measures on $E_\aa$, also called 
{\it states} of the algebra $\R^{E_\aa}$.
The open simplex of {\it positive} states
will be denoted $\Delta_\aa \subset \bar \Delta_\aa$.

\section{Graphical Models and Generalized Belief Propagation} 
\label{section:GMs}

\subsection{Graphical Models}

In this paper, our main object of study is a special class of 
Markov Random Fields (MRFs), called Gibbsian ensembles in  
\cite{Hammersley-Clifford-71}
although they are now more commonly called 
Graphical Models (GMs) in the decoding and machine learning contexts. 
Given a set of vertices $\Om$ and a random colouring 
$x_\Om = (x_i)_{i \in \Om}$, 
a GM essentially captures the locality of interactions 
by a hypergraph $K \subseteq \Pow(\Om)$ over which the 
density of $x_\Om$ should factorize. 
This property is in general stronger than the Markov properties 
obtained by the Hammersley-Clifford theorem
(as conditional independence of separated regions only ensures factorization 
over cliques \cite{Hammersley-Clifford-71}).

\begin{Definition}
    Given $K \subseteq \Pow(\Om)$ and a collection of {\em factors} 
    $f_\aa : E_\aa \to \R$ for $\aa \in K$, 
    the {\em graphical model} parameterized by $(f_\aa)_{\aa \in K}$ 
    is the probability distribution
    \begin{equation} 
    p_\Om(x_\Om) =  \frac 1 {Z_\Om} 
    \prod_{\aa \in K} f_\aa(x_{\Om| \aa}),
    \end{equation}
    where $Z_\Om = \sum_{E_\Om} \prod_\aa f_\aa$ 
    is an (unknown) integral over $E_\Om$, 
    called the {\em partition function}. 
    \end{Definition}

It is common to represent GMs by their
{\it factor graph} (figure \ref{fig:region-graphs}.b), a bipartite 
graph where variable nodes $i \in \Om$ carry variables $x_i \in E_i$, 
and factor nodes $\aa \in K$ carry local functions $f_\aa : E_\aa \to \R$. 
Factors $\aa \in K$ are then linked to all nodes $i \in \aa \subseteq \Om$. 
However, this graph structure should not be confused with the partial order 
structure we use for message-passing 
(figures \ref{fig:region-graphs}.c and \ref{fig:region-graphs}.d). 

In statistical physics, where the notion of graphical model originates from, 
it is more common to write $p_\Om$ as a normalized 
exponential density $\frac 1 {Z_\Om} \e^{- \beta H_\Om}$ called the 
{\it Gibbs density}. The function $H_\Om : E_\Om \to \R$ 
is called {\it hamiltonian} or {\it total energy} and 
the scalar parameter $\beta = 1 / T$ is called {\it inverse temperature}. 
This variable energy scale is often set to 1 and omitted. 
We recommend references \cite{Jaynes-57} and \cite{Marle-16} 
for deeper thermodynamic background. 

Assuming positivity of $p_\Om$, 
the factorization property of a graphical model $p_\Om$ 
translates to a linear spanning property on the 
global hamiltonian $H_\Om = \sum_\aa - \ln f_\aa$. 
    
\begin{Definition}\label{def:K-local}
We say that a global observable $H_\Om : E_\Om \to \R$ is $K$-{\em local} 
with respect to the hypergraph $K \subseteq \Pow(\Om)$, when there exists 
a family of {\em interaction potentials} $h_\aa : E_\aa \to \R$ 
such that for all $x_\Om \in E_\Om$,
\begin{equation} \label{eq:K-local}
H_\Om(x_\Om) = \sum_{\aa \in K} h_\aa(x_{\Om|\aa}).
\end{equation}
The {\em Gibbs state} of the hamiltonian $H_\Om$ at inverse temperature $\beta > 0$ is 
the positive probability distribution
\begin{equation}
p_\Om(x_\Om) = \frac 1 {Z_\Om} \e^{- \beta H_\Om(x_\Om)},
\end{equation}
and we denote the surjective Lie group morphism from global observables 
to Gibbs states by
\begin{equation} 
    \rho^\beta_\Om : \R^{E_\Om} \to \Delta_\Om.
\end{equation}
The $K$-local {\em Gibbsian ensemble} is the 
image of $K$-local observables under $\rho^\beta_\Om$ (for any $\beta$). 
\end{Definition}

The notions of Gibbsian ensembles \cite{Hammersley-Clifford-71} and graphical models are only equivalent 
up to a positivity assumption on $p_\Om$. 
We always assume positivity of $p_\Om$, 
although this is not always the case in 
decoding applications.  

Let $C_0(K) = \prod_{\aa \in K} \R^{E_\aa}$ denote the 
space of interaction potentials, and write $C_0 = C_0(K)$ by 
assuming $K \subseteq \Pow(\Om)$ to be fixed. 
In general, the map from $(h_\aa) \in C_0$ to 
the global hamiltonian $H_\Om \in \R^{E_\Om}$ has a low-dimensional source space, 
but fails to be injective.
In section \ref{section:Local} we construct 
a chain complex $(C_\bullet, \delta)$, i.e. a graded 
vector space $C_\bullet = \bigoplus_{r=0}^n C_r$ 
and a degree -1 square-null operator $\delta$,
\begin{equation} \label{dgm:resolution}
    \begin{tikzcd}
\R^{E_\Om}  & C_0 \lar[dashed,swap]{\zeta_\Om} & C_1 \lar[swap]{\delta} & \dots \lar[swap]{\delta} 
& C_n \lar[swap]{\delta},
    \end{tikzcd}
\end{equation}
We showed that this complex is acyclic 
is an exact sequence in previous work \cite{gsi19, phd}. 
In other words, $\Ker(\delta) = \Img(\delta)$ and $\Ker(\zeta_\Om) = \delta C_1$.  
Degree-0 homology classes $[h] \in C_0 / \delta C_1$ 
thus yield a bijective parameterization of $K$-local observables, 
this is the statement of theorem \ref{thm:Gauss}. 
The complex structure moreover underlies belief propagation algorithms, 
which explore a homology class $h + \delta C_1 \subseteq C_0$ of 
interaction potentials, and 
GBP is generalized by diffusion equations of the form $\dot{v} = \delta \Phi(v)$ 
in section \ref{section:Diffusions}.

\begin{figure*}[t] 
\begin{center}

\includegraphics[width=.85     \textwidth]
{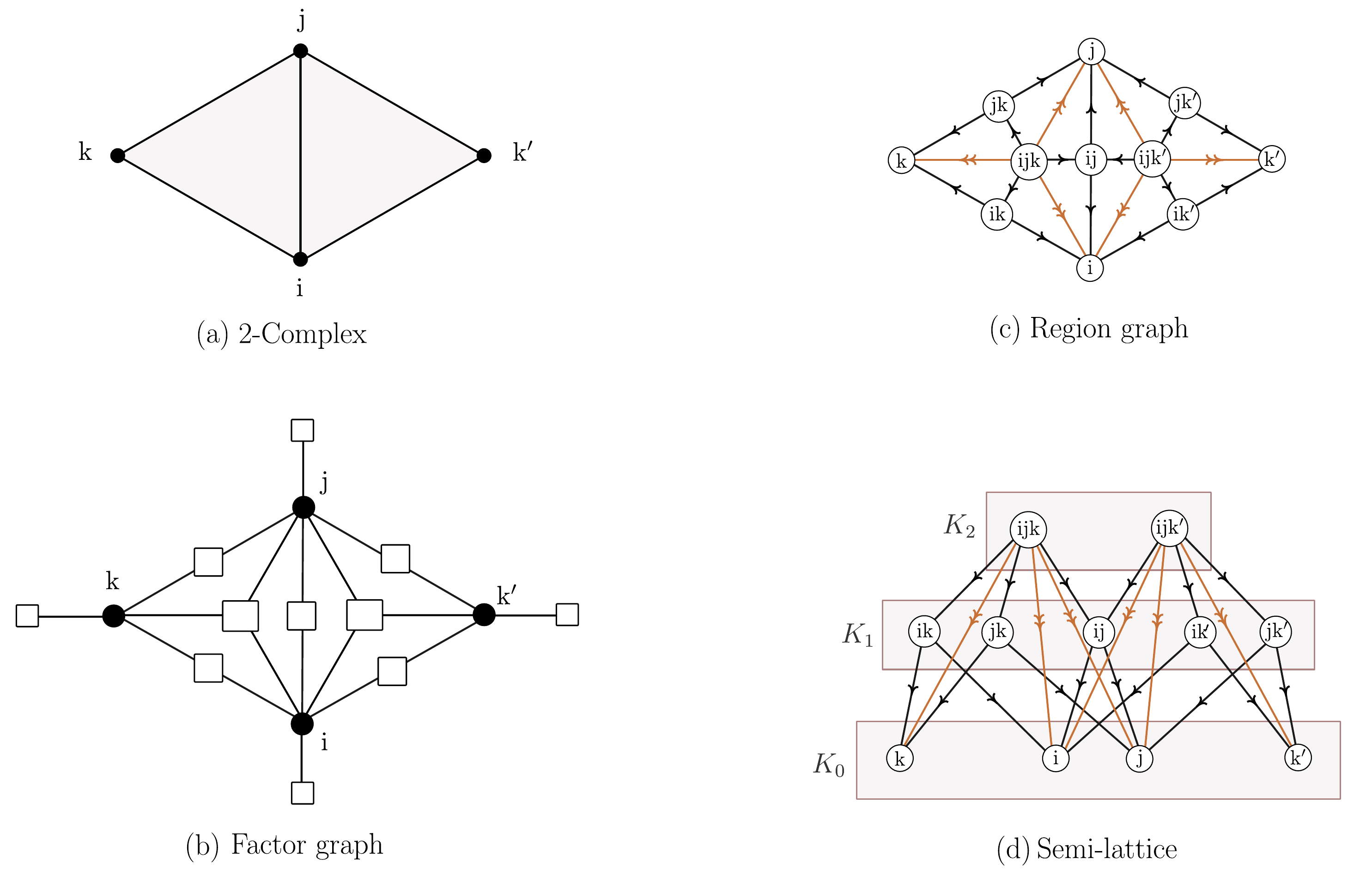}

\centering
\vspace{.2em}
    \caption{\label{fig:region-graphs}{\small
        Different representations of $K \subseteq \Pow(\Om)$. 
        Here $K$ is a 2-dimensional simplicial 
        complex (a). The factor graph (b) is constructed by joining 
        every factor node $\aa \in K$ (squares) to its variables $i \in \aa$ (dots). 
        The region graph (c) instead inserts a directed edge 
        between any two region nodes $\aa, \bb \in K$ such that $\aa \supseteq \bb$.
        The partial order structure is equivalently represented by (d), 
        where height gives a better impression of the ordering. 
        In (c) and (d), non-primitive arrows (i.e. having a non-trivial factorization) 
        are in orange, and the terminal region $\varnothing$ is not represented. 
    }}
\end{center}
\end{figure*}

\subsection{Generalized Belief Propagation}

The purpose of Belief Propagation (BP) or sum-product algorithms is to estimate 
the marginals $(p_\aa)_{\aa \in K}$ 
of a GM $p_\Om \in \Delta_\Om$.
Let us precise 
what is meant by {\it beliefs} early: this terminology 
distinguishes them from {\it true marginals} 
of a global distribution $p_\Om$, 
which would admit a genuine probabilistic interpretation,
were they practically computable by partial integration. 
{\it Consistent beliefs} are also called fields of {\it pseudo-marginals}, 
because they satisfy the same gluing axioms as true marginals. 
Their extension to a global measure on $E_\Om$ may not always be 
chosen positive: 
this is known as the {\it pseudo-marginal extension problem}
(see \cite{Vorobev-62,Kellerer-64} and
\cite{Abramsky-2011} for a modern approach with sheaves).
Beliefs relax the consistency constraint, although being purposed 
to reach a consensual state. 

\begin{Definition} \label{def:beliefs}
{\em Beliefs} (over $K$) are collection 
of local positive probability densities $q \in \Delta_0 = \prod_{\aa \in K} \Delta_\aa$. 

Beliefs are said {\em consistent}, written $q \in \Gamma_0$,
if $q_\bb$ is the pushforward of $q_\aa$ under 
$\pi^{\aa \to \bb} : E_\aa \to E_\bb$ for all $\bb \subset \aa$ in $K$. 
\end{Definition}

The Generalized Belief Propagation (GBP) algorithm of Yedidia et al.~\cite{Yedidia-2005} is 
an important generalization of BP equations
to hypergraphs $K \subseteq \Pow(\Om)$ 
(region graphs in their terminology). 
Unlike BP on factor graphs, where messages $M_{\aa \to i}(x_i)$ 
respect the bipartite structure 
and flow only from factors nodes to vertices,  
GBP iterates upon messages $M_{\aa\to \bb}(x_\bb)$ between any two 
regions such that $\aa \supset \bb$, see figure~\ref{fig:region-graphs}. 

\begin{Definition}
Given positive {\it factors} $f \in C_0$ 
(the model parameters, representing conditional kernels and local priors)
and positive {\it initial messages} 
$M^{(0)} \in C_1$ (which may be set to $1$ or absorbed in $f$), the GBP equations
\begin{equation}\label{eq:GBP-beliefs}
    q^{(t)}_\bb(x_\bb) = \frac 1 {Z_\bb} 
    \:\prod_{\cc \subseteq \bb} f_\cc (x_{\bb | \cc})
    \:\prod_{\substack{
            \aa \supseteq \cc\\[.2em]
            \aa \not\subseteq \bb}} M^{(t)}_{\aa\to\cc}(x_{\bb | \cc}),
    \end{equation}
    \begin{equation}\label{eq:GBP-messages}
    M^{(t+1)}_{\aa \to \cc}(x_\cc)
    = M^{(t)}_{\aa \to \cc}(x_\cc) \cdot \frac
    {\displaystyle \sum_{x_{\aa|\cc} = x_\cc} q_\aa^{(t)}(x_\aa)}
    {q^{(t)}_\cc(x_\cc)}
    \end{equation}
define a sequence $(q, M) \in (\Delta_0 \times C_1)^\N$.
\end{Definition}

The evolution of $q^{(t)}$ only depends on the geometric increment of messages 
$M^{(t+1)} / M^{(t)}$. 
Setting $M^{(0)} = 1$, GBP equations therefore also define a dynamical system 
${\rm GBP} : \Delta_0 \to \Delta_0$, such that
$q^{(n)} = {\rm GBP}^{n}(q^{(0)})$.
It is clear from \eqref{eq:GBP-messages} that {\it consistency of
beliefs} $q \in \Gamma_0$ is equivalent to {\it stationarity of messages}. 
However it is not obvious that {\it stationarity 
of beliefs} implies stationarity of messages, and this explains why 
GBP is usually viewed as a dynamical system on messages. 

We showed that stationarity of beliefs does imply 
consistency and stationarity of messages \cite{gsi19,phd}, and
called this property {\it faithfulness 
of the GBP diffusion flux} 
(see definition~\ref{def:proj-faithful} and proposition \ref{prop:GBP-faithfulness} below). 
This property is crucial to retrieve the consistent polytope
$\Gamma_0 \subseteq \Delta_0$ as stationary states of the 
dynamical system on beliefs, 
and thus properly draw the analogy between GBP and diffusion.

\subsection{Examples}

\subsubsection{Markov chains and trees}

A length-$n$ {\em Markov chain} on $\Om = \{ 0, \dots, n \}$ is local with respect 
to the 1-graph $K = K_0 \sqcup K_1 \subseteq \Pow(\Om)$ 
linking every vertex $0 \leq i < n$ 
to its successor $i+1$. Individual states of the Markov chain are denoted $x_i \in E_i$ for $i \in \Om$. 
This data extends to a contravariant functor $E : (K, \subseteq)^{op} \to \Set$
mapping edges $ij \in K$ to pairwise joint states $E_{ij} = E_i \times E_j$, 
and with canonical projections as arrows. 

Given an input prior $f_0(x_0) := \mathbb{P}(x_0)$ and Markov transition kernels 
$f_{{j},{j-1}}(x_j, x_{j-1}) := \mathbb{P}(x_{j} | x_{j-1})$ for $1 \leq j \leq n$, 
set other factors and messages to 1. BP and GBP then coincide to exactly compute in $n$-steps
the output and hidden posteriors 
$q_j(x_j) = \mathbb{P}(x_j)$ for $j \leq n$ as:
\begin{equation} \label{eq:Markov-chain}
\mathbb{P}(x_j) = \sum_{x_{j-1} \in E_{j-1}} \mathbb{P}(x_j | x_{j-1}) \: \mathbb{P}(x_{j-1}).
\end{equation}
In this particular case, the sum-product update rule 
thus simply consists in a recurrence of matrix-vector multiplications 
for $M_{j-1, j \to j}(x_j) = q_j(x_j) = \mathbb{P}(x_j)$.
The integrand of \eqref{eq:Markov-chain} 
also yields the exact pairwise probability
$\mathbb{P}(x_{j-1}, x_j)$ by the Bayes rule. 

The situation is very much similar for {\em Markov trees}, i.e.
when $K$ is an acyclic graph. 
In this case, choosing any leaf node $x_0$ as root, BP recursively
computes the bayesian posteriors $\mathbb{P}(x_j)$ exactly,
by integrating the
product $\mathbb{P}(x_j|x_i) \, \mathbb{P}(x_i)$
over the parent state $x_i \in E_i$ 
when computing the message $M_{ij \to j}(x_j)$ 
by \eqref{eq:GBP-messages}. 
This is the content of Pearl's original article on 
Bayesian trees \cite{Pearl-82}. 

A famous particular class of Markov trees consists 
of hidden Markov chain models, for instance used by 
Friston in neuroscience \cite{Friston-Parr-2017}, 
and of which the K{\'a}lm{\'a}n filter is 
a continuous-valued gaussian version \cite{Kschischang-01}. 

\subsubsection{Spin glasses and Hopfield networks}

A {\em spin glass} is a 1-graph $K \subseteq \Pow(\Om)$
whose vertices carry a binary variable $x_i \in \{ -1, +1\} = E_i$. 
In this case, any $K$-local hamiltonian $H_\Om : E_\Om \to \R$ 
may be uniquely decomposed as: 
\begin{equation} \label{eq:SpinGlass}
H_\Om(x_\Om) = h_\varnothing + \sum_{i \in K_0} b_i x_i 
+ \sum_{ij \in K_1} w_{ij} x_i x_j 
\end{equation}
Uniqueness of this decomposition is a consequence of the {\it interaction decomposition theorem}, 
see \cite{Kellerer-64} and \cite{Matus-88}. 
One often drops the constant factor $h_\varnothing$ 
for its irrelevance in the Gibbs state definition. 

Spin glasses are formally equivalent to the Ising model 
of ferromagnetism, the only conceptual 
differences residing in 
a random sampling of {\it biases} $b_i \sim \mathbb{P}(b_i)$ 
and {\it weights} $w_{ij} \sim \mathbb{P}(w_{ij})$, 
and in allowing for other graphs than cubic lattices which describe 
homogeneous crystals. 
Bethe produced his now famous combinatorial approximation scheme 
in 1935 \cite{Bethe-35} to estimate the Ising model free energy. 

BP works surprisingly well 
at estimating the true likelihoods $\mathbb{P}(x_j)$ and 
$\mathbb{P}(x_i, x_j)$ by 
consistent beliefs $q_i(x_i)$ and $q_{ij}(x_i, x_j)$ 
even on loopy graphs.
It may however show converge issues 
and increased errors as the temperature decreases 
when $K$ has loops, while the number of stationary states grows 
quickly with the number of loops. 
In the cyclic case, each belief aggregates a 
product of messages $\prod_i M_{ij \to j}(x_j)$ 
according to \eqref{eq:GBP-beliefs} 
before being summed over in the computation of messages \eqref{eq:GBP-messages}, 
and therefore does not reduce to a simple matrix-vector multiplication. 
See \cite{Weiss-97,Murphy-Weiss-99,Knoll-17} for more information 
on loopy BP behaviour. 

Spin glasses are also called Boltzmann machines \cite{Ackley-85}
in the context of generative machine learning. 
It is well known that bipartite spin glasses, also called 
restricted Boltzmann machines (RBMs), are equivalent to 
the Hopfield model of associative memory 
whose phase transitions \cite{Bovier-01, Barra-18} 
have received a lot of attention. 
These phase diagrams are generally obtained by the replica method 
of Javanmard and Montanari \cite{Javanmard-13} in the thermodynamic limit. 
We believe a broader understanding of such phase transitions 
in neural networks would be highly beneficial for artificial intelligence, 
and a bayesian framework seems more suitable for such a program 
than one-way parameterized functions and feed-forward neural networks.

\subsubsection{Higher-order relations}

When $K \subseteq \Pow(\Om)$ is a general hypergraph,
we may write $K = \bigsqcup_{r} K_r$ by grading hyperedges 
$\aa = \{i_0, \dots, i_r\}$ according to their dimension $r$,
and write $K_{-1} = \{ \varnothing \}$ when $K$ 
contains the empty region
(which we usually assume along with the $\cap$-closure of $K$).
We call {\it dimension} of $K$ the greatest $n$ such that $K_n \neq \varnothing$,
and call $K$ an $n$-{\it graph}.  

Working with a coarse $n$-graph $K$ with $n \geq 2$ is useful even when
the hamiltonian is local with respect to a 
$1$-subgraph $K' \subseteq K$. 
This is done 
by simply extending $K'$-local factors $(f_\bb)_{\bb \in K'}$ by 
$f_\aa = 1$ for all $\aa \in K \smallsetminus K'$.
The dimension of the coarser hypergraph $K$ used for message-passing  
thus provides greater precision at the cost of local complexity, 
which is the cost of partial integrals on $E_\aa$ for maximal $\aa \in K$.
From a physical perspective, this corresponds to applying Kikuchi's Cluster 
Variation Method (CVM) \cite{Kikuchi-51,Morita-57} to a spin glass 
hamiltonian $H_\Om$, given by \eqref{eq:SpinGlass}. 

There is also greater opportunity in considering higher order
interactions $h_\aa(x_\aa) = h_{i_0, \dots, i_r}(x_{i_0}, \dots, x_{i_r})$ 
for $r \geq 2$ to capture more subtle relations. Taking $r = 2$ with binary variables, 
the interaction terms $w_{ijk} x_i x_j x_k$ mimics attention in transformer 
networks. In a continuous-valued case, 
where $x_i \in \R^3$ for instance describes atomic positions in a 
molecule or crystal, energy couldn't be made dependent on
bond angles without third-order interactions 
$h_{ijk}(x_i, x_j, x_k)$ \cite{a3i-23}. 
Continuous variables are outside the scope of this article 
but we refer the reader to \cite{Weiss-01, Gouillart-13, Mezard-2017} 
the Gaussian version of BP algorithms. 
See also \cite{Sejnowski-87} on third-order Boltzmann machines 
and \cite{Goh-22,Hajij-23} for more recent higher-order attention network architectures. 
\section{Entropy and Free Energy Principles} \label{section:Principles}

\subsection{Boltzmann-Gibbs variational principles}

In classical thermodynamics, a statistical system $\Om$ is described by 
a configuration space $E$ (assumed finite in the following)
and a \emph{hamiltonian} 
$H : E \to \R$ measuring the energy level of each configuration. 
Thermal equilibrium with a reservoir at \emph{inverse temperature}
$\beta = 1 / T$ defines the so-called 
\emph{Gibbs state} $p^\beta \in \Prob(E)$ 
by renormalization of the Boltzmann-Gibbs density $\e^{-\beta H}$:
\begin{equation}\label{eq:gibbs}
    p^\beta(x) = \frac 1 Z \e^{-\beta H(x)} 
    \quad{\mathrm{where}}\quad Z = \sum_{x \in E} \e^{-\beta H(x)}. 
\end{equation}
Two different kinds of variational principles characterise the equilibrium state 
$p^\beta$:
\begin{itemize}
\item the \emph{max-entropy principle}
asserts that the Shannon entropy $S(p) = - \sum_E p \ln(p)$ is 
maximal under an internal energy constraint $\E_p[H] = {\U}$ at equilibrium;
\item the \emph{free energy principle} 
asserts that the 
variational free energy ${\cal F}^\beta(p, H) = \E_p[H] - T S(p)$
is minimal under the temperature constraint $T = 1 / \beta$ at equilibrium; 
it is then equal to the free energy $F^\beta(H) = - \ln \sum_E \e^{- \beta H}$. 
\end{itemize}
Although equation \eqref{eq:gibbs} gives the solution to both optimisation problems, 
computing $Z$ naively is usually impossible for 
the size of $E$ grows exponentially with the number of 
microscopic variables in interaction. 

The Legendre duality relating
Shannon entropy and free energies is illustrated by theorems 
\ref{thm:S-global} and \ref{thm:F-global} below, implying 
the equivalence between entropy and free energy Boltzmann-Gibbs principles
~\cite{Jaynes-57,Marle-16}. 
Properties of thermodynamic functionals may be abstracted 
from the global notion of physical equilibrium, and 
stated for every region $\aa \subseteq \Om$ as below. 
Local functionals will then be recombined 
following the Bethe-Kikuchi approximation scheme in the next 
subsection.  

\begin{Definition} \label{def:functionals}
For every $\aa \incl \Om$ and $\beta > 0$, 
define: 
\begin{itemize}
    \item 
    the {\em Shannon entropy} 
    $S_\aa : \Delta_\aa \to \R$ by 
    \begin{equation}
    S_\aa(p_\aa) = - \sum p_\aa \ln(p_\aa),
    \end{equation}

    \item 
    the {\em variational free energy}
    $\varF^\beta_\aa : \Delta_\aa \times \R^{E_\aa} \to \R$
     by 
    \begin{equation}
    \varF^\beta_\aa(p_\aa, H_\aa) = \E_{p_\aa}[H_\aa] 
    - \frac 1 \beta S_\aa(p_\aa),
    \end{equation}
    \item 
    the {\em free energy}
    $\F^\beta_\aa : \R^{E_\aa} \to \R$ by 
    \begin{equation}
    \F^\beta_\aa(H_\aa) = - \frac 1 \beta \ln \sum \e^{- \beta H_\aa}.
    \end{equation}
\end{itemize}
\end{Definition}

Legendre transforms may be carried with respect to local observables 
and beliefs directly, instead of the usual one-dimensional 
temperature or energy parameters. 
To do so, one should describe tangent fibers of 
$\Delta_\aa$ as 
$\Tg_{p_\aa} \Delta_\aa \simeq \R^{E_\aa} {\rm\; mod\;} \R$, 
so that additive energy constants span Lagrange multipliers for 
the normalisation constraint $\langle p_\aa, 1_\aa \rangle = 1$. 
For a more detailed study of thermodynamic functionals, 
we refer the reader to \cite[chap. 4]{phd} and \cite{Marle-16}.

\begin{Theorem} \label{thm:S-global}
Under the mean energy constraint 
$\langle p_\aa, H_\aa \rangle = {\cal U}$, 
the maximum of Shannon entropy is reached on a Gibbs state 
$p_\aa^* = \frac 1 {Z_\aa} \e^{-\beta H_\aa}$, 
\begin{equation}
    S_\aa(p_\aa^*) = \max_{\substack{ 
        p_\aa \in \Delta_\aa \\
        \langle p_\aa, H_\aa \rangle = {\cal U} }}
    S_\aa(p_\aa),
\end{equation}
for some univocal value of the Lagrange multiplier $\beta \in \R$, 
called {\em inverse temperature}. 
\end{Theorem}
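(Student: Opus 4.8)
The plan is to treat this as a constrained optimisation over the open simplex $\Delta_\aa$ and apply the method of Lagrange multipliers, then exploit the strict concavity of entropy to promote the resulting critical point to the global maximum. First I would form the Lagrangian
\begin{equation}
  L(p_\aa) = S_\aa(p_\aa) - \lambda\big(\langle p_\aa, 1_\aa \rangle - 1\big) - \beta\big(\langle p_\aa, H_\aa \rangle - \U\big),
\end{equation}
which encodes both the normalisation and the mean-energy constraints. Differentiating with respect to each coordinate $p_\aa(x)$ yields $-\ln p_\aa(x) - 1 - \lambda - \beta H_\aa(x) = 0$, so any interior critical point must take the Gibbs form $p_\aa^*(x) = \frac{1}{Z_\aa}\e^{-\beta H_\aa(x)}$ with $Z_\aa = \e^{1+\lambda}$ fixed by normalisation. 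Equivalently, using the tangent description $\Tg_{p_\aa}\Delta_\aa \simeq \R^{E_\aa} \bmod \R$ noted above, the stationarity condition reads $dS_\aa = -(\ln p_\aa + 1) \in \R\,1_\aa + \R\,H_\aa$, which is the same statement intrinsically.

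Next I would confirm that this critical point is the global maximum rather than a saddle. Since $S_\aa$ is strictly concave on $\Delta_\aa$ --- its Hessian is the negative-definite diagonal form $-\mathrm{diag}(1/p_\aa(x))$ --- and both constraints are affine, the feasible set is convex, so a stationary point of a strictly concave function on it is the unique global maximiser. A boundary argument then handles attainment: as any coordinate $p_\aa(x) \to 0^+$ the gradient component $-\ln p_\aa(x) - 1 \to +\infty$ pushes the optimiser away from $\partial \Delta_\aa$, so the maximiser lies in the open simplex and coincides exactly with the positive Gibbs state $p_\aa^*$.

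Finally, for the univocity of $\beta$, I would study the mean-energy map $\beta \mapsto \U(\beta) = \langle p_\aa^\beta, H_\aa \rangle$, where $p_\aa^\beta = \frac{1}{Z_\aa(\beta)}\e^{-\beta H_\aa}$. A direct computation gives $\frac{d\U}{d\beta} = -\mathrm{Var}_{p_\aa^\beta}[H_\aa] < 0$ whenever $H_\aa$ is non-constant, so $\U(\beta)$ is a strictly decreasing bijection from $\R$ onto the open interval $\big(\min_x H_\aa(x), \max_x H_\aa(x)\big)$; hence for every admissible $\U$ there is exactly one inverse temperature $\beta$ realising it. The hard part will not be the critical-point computation but the careful treatment of the boundary of $\Delta_\aa$ together with the feasibility range of $\U$: one must simultaneously verify that the supremum is attained in the positive simplex and that the constraint $\langle p_\aa, H_\aa \rangle = \U$ is satisfiable for the prescribed $\U$, both of which follow from the strict monotonicity argument above.
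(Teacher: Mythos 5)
Your proof is correct. Note first that the paper never proves Theorem \ref{thm:S-global}: it is presented as a classical Boltzmann--Gibbs fact and deferred to \cite{Jaynes-57,Marle-16}, so there is no internal proof to compare against. Your argument --- Lagrange multipliers on the simplex forcing the Gibbs form $p_\aa^* = \frac{1}{Z_\aa}\e^{-\beta H_\aa}$, strict concavity of $S_\aa$ (negative-definite diagonal Hessian) plus affineness of the constraints to promote the critical point to the unique global maximiser, a boundary argument using the blow-up of $-\ln p_\aa(x)$ to show the maximiser lies in the open simplex $\Delta_\aa$, and strict monotonicity of $\beta \mapsto \langle p_\aa^\beta, H_\aa\rangle$ via $\frac{d\U}{d\beta} = -\mathrm{Var}_{p_\aa^\beta}[H_\aa]$ for univocity of $\beta$ --- is exactly the standard proof those references contain, and it is consistent with the machinery the paper does use for its localized analogue: in the appendix proof of Theorem \ref{thm:S}, criticality of entropy is likewise expressed through $\Tg_{p_\bb} S_\bb = \langle \,\cdot\,, -\ln p_\bb \rangle \ \mathrm{mod}\ \R$ with multipliers dual to the normalisation and energy constraints, which matches your intrinsic reformulation on $\Tg_{p_\aa}\Delta_\aa \simeq \R^{E_\aa}\ \mathrm{mod}\ \R$. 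Two caveats are worth recording, both of which you essentially flag: univocity of $\beta$ genuinely fails when $H_\aa$ is constant (every $\beta$ produces the same uniform Gibbs state), and the statement implicitly requires $\U$ to lie in the open interval $\big(\min_x H_\aa(x), \max_x H_\aa(x)\big)$ --- outside this range the constraint set touches only the boundary of the simplex and no positive Gibbs state realises it; your monotonicity argument is precisely what identifies this feasibility range.
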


\begin{Theorem} \label{thm:F-global}
Under the temperature constraint $T = 1 / \beta$, given a hamiltonian 
$H_\aa \in \R^{E_\aa}$, the minimum 
of variational free energy is reached on the Gibbs state 
$p_\aa^* = \frac 1 {Z_\aa} \e^{-\beta H_\aa}$,
\begin{equation} \label{eq:min-varF}
\varF_\aa^\beta(p_\aa^*, H_\aa)
= \min_{p_\aa \in \Delta_\aa} \varF_\aa^\beta(p_\aa, H_\aa). 
\end{equation}
It moreover coincides with the equilibrium free energy $F^\beta_\aa(H_\aa)$,
\begin{equation}
F_\aa^\beta(H_\aa) = \min_{p_\aa \in \Delta_\aa} \varF_\aa^\beta(p_\aa, H_\aa)
\end{equation}
\end{Theorem}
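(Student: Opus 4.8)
The plan is to prove both assertions simultaneously by exhibiting an exact decomposition of the variational free energy as the equilibrium free energy plus a relative-entropy term that vanishes precisely at the Gibbs state. First I would identify $p_\aa^* = \frac 1 {Z_\aa} \e^{-\beta H_\aa}$ as the natural candidate minimizer. Since $\beta > 0$, the functional $p_\aa \mapsto \varF_\aa^\beta(p_\aa, H_\aa)$ is the sum of the linear term $\E_{p_\aa}[H_\aa]$ and the strictly convex term $-\frac 1 \beta S_\aa(p_\aa)$ (as $S_\aa$ is strictly concave), hence strictly convex on the simplex. A Lagrange multiplier computation for the normalization constraint $\langle p_\aa, 1_\aa \rangle = 1$ yields the stationarity condition $H_\aa(x) + \frac 1 \beta (\ln p_\aa(x) + 1) = \lambda$, whose unique positive solution is $p_\aa^*$.

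The key algebraic step is to rewrite the hamiltonian in terms of its own Gibbs state: from $p_\aa^*(x) = \frac 1 {Z_\aa} \e^{-\beta H_\aa(x)}$ one has $H_\aa(x) = - \frac 1 \beta \ln p_\aa^*(x) - \frac 1 \beta \ln Z_\aa$. Substituting this into $\varF_\aa^\beta(p_\aa, H_\aa) = \E_{p_\aa}[H_\aa] + \frac 1 \beta \sum p_\aa \ln p_\aa$ and using $\langle p_\aa, 1_\aa \rangle = 1$, the $\ln Z_\aa$ contribution factors out as $F_\aa^\beta(H_\aa) = - \frac 1 \beta \ln Z_\aa$ while the remaining terms telescope into a relative entropy:
\begin{equation}
\varF_\aa^\beta(p_\aa, H_\aa) = F_\aa^\beta(H_\aa)
+ \frac 1 \beta \sum_{x} p_\aa(x) \ln \frac{p_\aa(x)}{p_\aa^*(x)}.
\end{equation}

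It then remains to invoke Gibbs' inequality for the divergence $D_{\mathrm{KL}}(p_\aa \,\|\, p_\aa^*) = \sum_x p_\aa \ln(p_\aa / p_\aa^*)$, obtained by applying Jensen's inequality to the strictly convex function $-\ln$: this quantity is nonnegative and vanishes if and only if $p_\aa = p_\aa^*$. Since $\beta > 0$, the displayed identity gives $\varF_\aa^\beta(p_\aa, H_\aa) \geq F_\aa^\beta(H_\aa)$ with equality exactly at $p_\aa^*$, establishing both the minimization formula \eqref{eq:min-varF} and its coincidence with the equilibrium free energy in one stroke.

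The main point requiring care is a boundary issue rather than a genuine obstacle: the minimization ranges over the open simplex $\Delta_\aa$ of strictly positive states. I would note that $\varF_\aa^\beta(\cdot, H_\aa)$ extends continuously to the compact closed simplex $\bar \Delta_\aa$ (with the convention $0 \ln 0 = 0$), so that the infimum is attained, and that the minimizer $p_\aa^*$ lies in the interior because $\e^{-\beta H_\aa} > 0$ everywhere. Strict convexity, equivalently the strictness of Gibbs' inequality, guarantees uniqueness of this minimizer.
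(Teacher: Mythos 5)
Your proof is correct. Note that the paper itself gives no proof of Theorem \ref{thm:F-global}: it treats both Boltzmann--Gibbs principles as classical background, deferring to the cited references (Jaynes, Marle, and chapter 4 of the author's thesis), so there is no in-paper argument to compare against. Your route --- the exact identity
\begin{equation}
\varF_\aa^\beta(p_\aa, H_\aa) = F_\aa^\beta(H_\aa)
+ \frac 1 \beta \sum_{x} p_\aa(x) \ln \frac{p_\aa(x)}{p_\aa^*(x)}
\end{equation}
followed by Gibbs' inequality --- is exactly the standard textbook proof those references contain, and it is complete: the decomposition is algebraically right, the positivity and equality case of the relative entropy are correctly invoked, and your handling of the open simplex (the minimizer $p_\aa^*$ is interior since $\e^{-\beta H_\aa} > 0$, so attainment is immediate from the identity itself) closes the only point that could have been glossed over. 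The preliminary Lagrange-multiplier computation is redundant once you have the relative-entropy decomposition, since the latter identifies the minimizer and proves optimality in one stroke, but it does no harm.
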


Although stated for every region $\aa \subseteq \Om$,
theorem \ref{thm:F-global} only makes sense physically when 
applied to the global region $\Om$. Indeed, 
local free energy principles \eqref{eq:min-varF} 
on regions $\aa \subseteq \Om$ totally neglect interactions 
with their surroundings. The local beliefs $(p_\aa)_{\aa \in K}$ 
they define thus have very little chance of being consistent.

\subsection{Bethe-Kikuchi variational principles}

We now proceed to define localised versions 
of the max-entropy and free energy variational principles \ref{thm:S-global} and \ref{thm:F-global}, 
attached to any hypergraph $K \subseteq \Pow(\Om)$. 
We recall that $C_0$ stands for the space of interaction 
potentials $\prod_{\aa \in K} \R^{E_\aa}$, 
and that $\Gamma_0 \subseteq \Delta_0 \subseteq \R^{E_\aa *}$ denotes the convex polytope 
of consistent beliefs (definition \ref{def:beliefs}).
Bethe-Kikuchi principles
will characterise finite sets 
of consistent local beliefs
$\{ p^1, \dots, p^m \} \subseteq \Gamma_0$,
in contrast with their global counterparts
defining the true global Gibbs state 
$p_\Om \in \Delta_\Om$. 

{\em Extensive} thermodynamic functionals
such as entropy and free energies 
satisfy an {\em asymptotic additivity} 
property, e.g. the entropy of a large piece of matter is 
the sum of entropies associated to any division into
large enough constituents. 
Bethe-Kikuchi approximations thus consist of
 computing only local terms 
(that is, restricted to a tractable number of variables) 
before {\em cumulating} them in a large 
weighted sum over regions $\aa \in K$ 
where integral coefficients 
$c_\aa \in \Z$ take care of eliminating redundancies. 
The coefficients $c_\aa$ are uniquely determined by the inclusion-exclusion 
principle $\sum_{\aa \supseteq \bb} c_\aa = 1$ for all $\bb \in K$ 
(corollary \ref{cor:BK-coeffs}).
For a recent introduction to the CVM 
\cite{Kikuchi-51,Morita-57}, we refer to 
Pelizzola's article \cite{Pelizzola-05}. 

Let us first introduce the local max-entropy principle
we shall be concerned with.
As in the exact global case, the max-entropy principle 
(problem \ref{S-crit}) describes an isolated system. It therefore 
takes place with constraints 
on the Bethe-Kikuchi mean energy 
$\check {\cal U} : \Delta_0 \times C_0 \to \R$, given 
for all $p \in \Delta_0$ and all $H \in C_0$ by:
\begin{equation}\label{eq:U-Bethe}
\check U(p, H) = \sum_{\aa \in K} c_\aa \: \E_{p_\aa}[H_\aa]
\end{equation}

\begin{Problem}\label{S-crit}
Let $H \in C_0$ denote local hamiltonians and 
choose a mean energy $\cal U \in \R$.
\\
Find beliefs $p \in \Delta_0$  critical for the {\em Bethe-Kikuchi entropy}
$\check S : \Delta_0 \to \R$ given by:
\begin{equation}
\check S(p) = \sum_{\aa \in K} c_\aa \: S_\aa(p_\aa),
\end{equation}
under the consistency constraint 
$p \in \Gamma_0$ and the mean energy constraint $\check\U(p, H) = \U$.
\end{Problem} 

The Bethe-Kikuchi variational free energy principle, 
problem~\ref{varF-crit} below, instead 
serves as a sound local substitute for describing 
a thermostated system
at temperature $T = 1/\beta$. 

\begin{Problem} \label{varF-crit}
Let $H \in C_0$ denote local hamiltonians
and choose an inverse temperature $\beta > 0$.
\\
Find beliefs $p \in \Delta_0$ critical for the 
{\em Bethe-Kikuchi variational free energy} 
$\check \varF : \Delta_0 \times C_0 \to \R$ given by:
\begin{equation}
\check\varF^\beta(p, H) = \sum_{\aa \in K} c_\aa \: \varF^\beta_\aa(p_\aa, H_\aa),
\end{equation}
under the consistency constraint $p \in \Gamma_0$. 
\end{Problem}

Problems \ref{S-crit} and \ref{varF-crit} are
both variational principles on $p \in \Gamma_0$
with the consistency constraint in common,
but with dual temperature and energy constraints respectively.
In contrast, the following free energy principle 
(problem \ref{F-crit}) explores a subspace of
local hamiltonians $V \in C_0$, satisfying what we shall view 
as a global energy conservation constraint
in the next section. Let us write $V \check\sim H$ if and 
only if $\sum_\aa c_\aa V_\aa = \sum_{\aa}c_\aa H_\aa$ 
as global observables on $E_\Om$. 

Problem \ref{F-crit} also describes a system at equilibrium with a thermostat
at fixed temperature $T = 1 / \beta$.

\begin{Problem} \label{F-crit}
Let $H \in C_0$ denote local hamiltonians and choose an inverse
temperature $\beta > 0$. 
\\
Find local hamiltonians $V \in C_0$
critical for the {\em Bethe-Kikuchi free energy}
$\check F^\beta : C_0 \to \R$ given by:
\begin{equation} \label{F_Bethe}
    \check \F^\beta(V) =
       \sum_{\aa \in K} c_\aa \: F^\beta_\aa(V_\aa),
\end{equation}
under the energy conservation constraint $V \check\sim H$. 
\end{Problem}

In contrast with the {\em convex} optimisation 
problems of theorems \ref{thm:S-global} and \ref{thm:F-global}, 
note that the concavity or convexity 
of information functionals 
is broken 
by the Bethe-Kikuchi coefficients $c_\aa \in \Z$. 
This explains why multiple solutions to problems 
\ref{S-crit}, \ref{varF-crit} and \ref{F-crit} 
might coexist, and why they cannot be found 
by simple convex optimisation algorithms. 
We will instead introduce continuous-time ordinary 
differential equations in section \ref{section:Diffusions}. 
Their structure is 
remarkably similar to diffusion or heat equations, 
although combinatorial transformations may again break 
stability and uniqueness of stationary states. 

On the Ising model, Schlijper showed that the CVM error decays exponentially 
as $K \incl \Pow(\Z^d)$ grows coarse with respect to 
the range of interactions~\cite{Schlijper-83}. This result 
confirms the heuristic argument on the extensivity of entropy, 
and reflects the fast decay of high-order mutual informations 
that are omitted in the Bethe-Kikuchi entropy \cite[chap. 4]{phd}.

\section{Local statistical systems} \label{section:Local}

In the following, we let $K \subseteq \Pow(\Om)$
denote a fixed hypergraph with vertices in 
$\Om = \{1, \dots, N\}$, and  
moreover assume that 
$(K, \subseteq, \cap)$ forms a semi-lattice.
The following constructions could be carried without 
the $\cap$-closure assumption, 
but theorem \ref{thm:Gauss} and the correspondence theorems 
of section~\ref{section:Equilibria} 
would become one-way.

This section carries the construction of what we may call a 
\emph{combinatorial chain complex} 
$(C_\bullet, \delta, \zeta)$ of local observables, 
recalling the necessary definitions and theorems from \cite{phd}.
The first ingredient is the codifferential $\delta$, 
satisfying $\delta^2 = \delta\delta = 0$ 
and of degree -1: 
\begin{equation} \label{eq:C-delta}
    C_0 \overset{\delta}{\longleftarrow} 
    C_1 \overset{\delta}{\longleftarrow} \;\dots\; \overset{\delta}{\longleftarrow}
    C_n.
\end{equation}
The first homology $[C_0] = {\rm H}_0(C_\bullet, \delta)$ 
will yield a bijective parameterization of $K$-local observables 
in $\R^{E_\Om}$, this is the statement of the Gauss theorem \ref{thm:Gauss}.
On the other hand, the dual cochain complex 
\begin{equation} \label{eq:C*-d}
    C_0^* \overset{d}{\longrightarrow} 
    C_1^* \overset{d}{\longrightarrow} \;\dots\; \overset{d}{\longrightarrow}
    C_n^*
\end{equation}
will allow us to describe consistent beliefs $p \in \Gamma_0$ 
by the cocycle equation $dp = 0$, living in the 
dual cohomology $[C_0^*] = {\rm H}^0(C_\bullet^*, d)$.
The construction of $(C_\bullet^*, d)$ may be traced back to 
Grothendieck and Verdier
\cite{SGA-4-V, Moerdijk}, yet we believe the interaction 
of algebraic topology with combinatorics presented here to be quite original.  

The \emph{zeta transform}  is 
here defined as a homogeneous linear automorphism 
$\zeta : C_\bullet \to C_\bullet$, and plays a role very similar 
to that of a discrete spatial integration, 
confirming an intuition of Rota. It satisfies remarkable commutation relations with $\delta$, 
the Gauss/Greene formulas \eqref{eq:Gauss-r} and \eqref{eq:mu-Gauss}.
Its inverse $\mu$ is called the \emph{Möbius transform} and  
the reciprocal pair $(\zeta, \mu)$ extends the 
famous Möbius inversion formulas \cite{Rota-64} to degrees higher than 1 
in the nerve of $K$.
All these operators will come into play when factorizing 
the GBP algorithm and defining Bethe-Kikuchi diffusions 
in section \ref{section:Diffusions}. 

Our localization procedure could be summarized as follows.
First, we restrict the sheaf $E$ 
to a contravariant 
functor $E_K$ over $K$;  
then,
we define a simplicial set $N_\bullet E_K$ 
extending $E_K$ to the categorical\footnote{
    Not to be confused with the {\v C}ech nerve of a covering, 
    often used to define sheaf cohomology. 
    They only differ by a barycentric subdivision so
    that the two cohomology theories 
    are isomorphic~\cite{agt23}.    
}
nerve $N_\bullet K$ by mapping every strictly ordered chain 
$\aa_0 \supset \dots \supset \aa_p$ to its {\it terminal} 
configuration space $E_{\aa_0\dots \aa_p} := E_{\aa_p}$.
In particular, the set $N_1 K$ describes the support of 
GBP messages, while higher order terms provide a 
projective resolution of $K$-local observables. 

\subsection{Algebraic topology}

The functor of {\it local observables}
$\R^E : \Pow(\Om) \to \Alg_{c}$ 
maps every region $\aa \subseteq \Om$ 
to the commutative algebra $\R^{E_\aa}$.
Its arrows consist of natural inclusions $\R^{E_\bb} \incl \R^{E_\aa}$,
when identifying each local algebra with a low dimensional subspace of 
$\R^{E_\Om}$.
We write $\tilde{h}^\aa_\bb : x_\aa \mapsto h_\bb(x_{\aa|\bb})$ when
the extension should be made explicit, and
identify $\tilde{h}^\aa_\bb$ with $h_\bb$ for all $\aa \supseteq \bb$
otherwise. 

One may then define a chain complex
of local observables $C_\bullet = C_\bullet(K, \R^E)$ 
indexed by the nerve of $K$ as follows. 
Its graded components $C_r$ are defined for $1 \leq r \leq n$, 
where $n$ denotes the maximal length of a strictly ordered chain in $K$, by:
\begin{equation} \label{eq:C-r}
    C_r = 
    \prod_{\aa_0 \supset \dots \supset \aa_r} \R^{E_{\aa_r}}\,.
\end{equation}
For every strict chain $\vec\aa = \aa_0 \supset \dots \supset \aa_r$ 
in $N_r K$, and every $0 \leq j \leq r$, let us denote
by $\vec\aa^{(j)}$ the $j$-\emph{face} of $\vec\aa$, 
obtained by removing $\aa_j$. 

\begin{Definition} \label{def:chainComplex}
    The chain complex of {\rm local observables} $(C_\bullet, \delta)$ 
    is defined by \eqref{eq:C-r} 
    and the degree -1 boundary operator $\delta$,
    whose action $\delta : C_1 \to C_0$ is given by 
    \begin{equation} \label{eq:delta-1}
        \delta\ph_\bb(x_\bb) = \sum_{\aa \supset \bb} \ph_{\aa \to \bb}(x_\bb)
        - \sum_{\bb \supset \cc} \ph_{\bb \to \cc}(x_{\bb|\cc}),
    \end{equation}
    while $\delta : C_{r+1} \to C_r$ acts by
    \begin{equation} \label{eq:delta-r}
        \delta\psi_{\vec\bb}(x_{\bb_r}) = 
        \sum_{j = 0}^{r+1}
        (-1)^j \sum_{\vec\cc^{(j)} = \vec\bb} 
        \psi_{\vec\cc}(x_{\bb_r | \cc_{r+1}}).
    \end{equation}
    Note that $\bb_r = \cc_{r+1}$ for all 
    $j < r + 1$ in the first $r$ sums of \eqref{eq:delta-r}.
\end{Definition}

The classical identity 
${\bar \aa}^{(i)(j)} = {\bar \aa}^{(j+1)(i)}$ for all $i < j$, 
together with linearity and functoriality of the inclusions
$\R^{E_\bb} \subseteq \R^{E_\aa}$,
implies the differential rule $\delta^2 = \delta \circ \delta = 0$. 

One may see in formula \eqref{eq:delta-1} 
an analogy between $\delta$ and a discrete graph divergence, 
or the divergence operator of geometry : 
theorem \ref{thm:A} below is a discrete yet statistical version of
the Gauss theorem on a manifold without boundary. 
It gives a local criterion for the 
global equality of $K$-local hamiltonians 
and is proven in appendix \ref{section:apx-Gauss}.
The chain complex $(C_\bullet, \delta)$ can moreover be proven acyclic 
(see \cite[thm 2.17]{phd} and \cite{SGA-4-V,Moerdijk}) 
when $K$ is $\cap$-closed, 
the exact sequence \eqref{eq:C-delta} then describes
the linear subspace of 
$K$-local energies,
through a \emph{projective resolution} of
the quotient $C_0 / \delta C_1$.
\begin{Theorem}[Gauss] \label{thm:A}\label{thm:Gauss}
Assume $K \subseteq \Pow(\Om)$ is $\cap$-closed. 
Then the following are equivalent for all $h, h' \in C_0$:

\begin{itemize}
\item[(i)] the equality 
$\sum_{\aa \in K} h'_\aa = \sum_{\aa \in K} h_\aa$ holds in $\R^{E_\Om}$,
\item[(ii)] there exists $\ph \in C_1$ such that 
$h' = h + \delta \ph$.
\end{itemize}
In other words $h'$ and $h$ are homologous, written 
$h' \sim h$ or $h' \in [h]$, if and only if they 
define the same global hamiltonian. 
\end{Theorem}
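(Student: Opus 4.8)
The plan is to reduce the stated equivalence to a single exactness statement and then prove the two implications separately. Writing $\zeta_\Om : C_0 \to \R^{E_\Om}$ for the augmentation $\zeta_\Om(h) = \sum_{\aa \in K} \tilde h_\aa$ that extends each local potential to the global configuration space, condition (i) for the pair $(h,h')$ is precisely $\zeta_\Om(h'-h) = 0$, while (ii) is $h'-h \in \Img(\delta)$. By linearity it therefore suffices to set $g = h'-h$ and establish the single identity $\Ker(\zeta_\Om) = \Img(\delta : C_1 \to C_0) = \delta C_1$.

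First I would dispatch the inclusion $\delta C_1 \incl \Ker(\zeta_\Om)$, which requires no hypothesis on $K$ and is the discrete avatar of the fact that a divergence integrates to zero over a domain without boundary. Given $\ph \in C_1$, I substitute the explicit formula \eqref{eq:delta-1} for $\delta\ph_\bb$ into $\zeta_\Om(\delta\ph) = \sum_{\bb} \widetilde{\delta\ph_\bb}$ and observe that, after extension to $E_\Om$, each message component $\ph_{\aa\to\bb}$ (for $\aa \supset \bb$) occurs exactly twice: with a $+$ sign in $\widetilde{\delta\ph_\bb}$, coming from the first sum of \eqref{eq:delta-1}, and with a $-$ sign in $\widetilde{\delta\ph_\aa}$, coming from its second sum at $\cc = \bb$. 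Both terms equal $\ph_{\aa\to\bb}(x_{\Om|\bb})$ once pulled back to $E_\Om$, so the whole sum telescopes and $\zeta_\Om(\delta\ph) = 0$.

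The substance of the theorem is the reverse inclusion $\Ker(\zeta_\Om) \incl \delta C_1$, and this is where the $\cap$-closure of $K$ becomes indispensable. The most economical route is to invoke the acyclicity of the combinatorial chain complex $(C_\bullet, \delta)$, established for $\cap$-closed $K$ in \cite[thm 2.17]{phd} and going back to the resolution of \cite{SGA-4-V, Moerdijk}: exactness of the augmented sequence \eqref{dgm:resolution} at $C_0$ is exactly the assertion $\Ker(\zeta_\Om) = \delta C_1$. For a self-contained argument I would instead build $\ph$ explicitly. Using the interaction decomposition theorem \cite{Kellerer-64, Matus-88} I decompose every $g_\aa$ into its pure components indexed by subregions and run a downward induction on the poset $(K,\subseteq)$: for a maximal $\aa \in K$, the hypothesis $\zeta_\Om(g) = 0$ forces the pure-$\aa$ component of $g_\aa$ to vanish (no other region of $K$ contains $\aa$, so none can contribute a genuinely $\aa$-dependent term), whence $g_\aa$ is supported on proper subregions and can be transferred into lower components by a suitable message $\ph_{\aa \to \cdot}$, strictly decreasing the support of $g$ modulo $\delta C_1$.

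I expect this inductive transfer step to be the main obstacle, and it is exactly the point that fails without $\cap$-closure: the pure components of a maximal $g_\aa$ live on the maximal proper faces $\aa \smallsetminus \{i\}$, which need not belong to $K$, so there is in general no single message absorbing them. Making the induction go through requires the meet-closure to guarantee that the relevant intersections stay inside $K$ --- equivalently, it is the higher terms $C_r$ ($r \geq 2$) of the resolution that organize these obstructions and make them cancel. This is why, as noted in the text, dropping the $\cap$-closure assumption degrades the theorem to the single easy implication (ii) $\Rightarrow$ (i).
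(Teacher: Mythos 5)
Your reduction to the single exactness statement $\Ker(\zeta_\Om) = \delta C_1$ and your telescoping proof of the inclusion $\delta C_1 \incl \Ker(\zeta_\Om)$ are exactly the paper's first step (each $\ph_{\aa\to\bb}$ occurs once with each sign, cf. \eqref{eq:phi-sum}), and you correctly identify the interaction decomposition theorem as the key input for the converse. The gap is in the converse itself: the inductive transfer step that you flag as the ``main obstacle'' is genuinely incomplete, and the two things you say to fill it are not the right mechanism. First, at a maximal region $\aa$, vanishing of the pure-$\aa$ component of $g_\aa$ is far weaker than what the transfer needs: $g_\aa$ may still have pure components on proper subsets $\cc \subset \aa$ whose only enveloping region in $K$ is $\aa$ itself (take $K = \{\{1,2,3\},\{1\},\varnothing\}$ and a pure-$\{2,3\}$ component of $g_{\{1,2,3\}}$), and no message $\ph_{\aa\to\bb}$ with $\bb \in K$, $\bb \subset \aa$ can absorb these; they too must be shown to vanish, which your top-component argument does not give. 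Second, the higher terms $C_r$, $r \geq 2$, of the resolution play no role in this theorem; attributing the cancellation to them is a red herring.

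What $\cap$-closure actually buys, and what the paper's appendix proof uses, is that every subset $\cc \incl \aa$ has a smallest enveloping region of $K$ inside the cone $K^\aa$, so the pure interaction subspaces can be regrouped by this closure into a \emph{consistent family} ${\bf Z}_\bb$ indexed by $\bb \in K$, with $\R^{E_\aa} = \bigoplus_{\bb \incl \aa} {\bf Z}_\bb$ and compatible projections ${\bf Z}_{\aa \to \bb}$, such that anything supported on a region $\aa'$ with $\bb \not\incl \aa'$ projects to zero in ${\bf Z}_\bb$. Once this is in place no induction is needed: the paper sets $\ph_{\aa\to\bb} = -{\bf Z}_{\aa\to\bb}(h_\aa)$ for \emph{all} pairs $\aa \supset \bb$ simultaneously, and then in \eqref{eq:phi-Z} the outbound sum $\sum_{\cc \subseteq \bb} {\bf Z}_{\bb\to\cc}(h_\bb)$ reconstructs $h_\bb$, while the inbound sum $\sum_{\aa \supseteq \bb} {\bf Z}_{\aa\to\bb}(h_\aa)$ is precisely the ${\bf Z}_\bb$-projection of $\sum_{\aa} h_\aa = 0$, whence $\delta\ph = h$. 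Your downward induction can be repaired with the same regrouped decomposition (at a maximal $\aa$ the whole ${\bf Z}_\aa$-component of $g_\aa$ vanishes, and everything left lies in some $\R^{E_\bb}$ with $\bb \in K$, $\bb \subset \aa$), but without stating this consistent $K$-indexed decomposition --- which is the actual content borrowed from \cite{Kellerer-64,Matus-88} --- the transfer step remains a gap. Your alternative route, citing acyclicity of $(C_\bullet,\delta)$ from prior work, is a legitimate appeal to the literature but is not the self-contained argument the appendix supplies.
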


The functor of {\it local densities} 
$\R^{E*}: \Pow(\Om)^{op} \to \Vect$ is then defined by duality, 
mapping every $\aa \subseteq \Om$ to the vector space of linear forms on $\R^{E_\aa}$.
Its arrows consist of partial integrations 
$\pi_*^{\aa \to \bb} : \R^{E_\aa *} \to \R^{E_\bb *}$,
also called marginal projections. 
This functor generates a dual cochain complex $(C_\bullet^*, d)$
which shall serve to describe pseudo-marginals 
$(p_\aa)_{\aa \in K} \in \Gamma_0 \subset C_0^*$, 
used as substitute for global probabilities 
$p_\Om \in \Delta_\Om$. 

\begin{Definition}
The cochain complex of \emph{local densities} $(C_\bullet^*, d)$
is defined by $C_r^* = L(C_r)$
and the degree $+1$ differential $d$, whose 
action $d : C_0^* \to C_1^*$ is given by
\begin{equation}
    d p_{\aa \to \bb}(x_\bb) = p_\bb(x_\bb) - \sum_{x_{\aa|\bb} 
    =\, x_\bb} p_\aa(x_\aa).
\end{equation}
while $d : C_r^* \to C_{r+1}^*$ acts by
\begin{equation} 
    \begin{split}
    dq_{\vec \aa}(x_{\aa_{r+1}}) =& \sum_{j = 0}^{r} 
    (-1)^j \, q_{\vec \aa^{(j)}}(x_{\aa_{r+1}}) \\
    &+ 
    (-1)^{r+1} \sum_{x_{\aa_r|\aa_{r+1}} = x_{\aa_{r+1}}} 
    q_{\vec \aa^{(r+1)}}(x_{\aa_{r}})
    \end{split}
\end{equation}
Densities satisfying $dp = 0$ are called \emph{consistent}. 
The convex polytope of consistent beliefs 
is $\Gamma_0 = \Delta_0 \cap \Ker(d) \subseteq C_0^*$.
\end{Definition}

The purpose of BP algorithms and their generalizations 
is to converge towards consistent beliefs $p \in \Gamma_0$. 
The non-linear {\it Gibbs correspondence}
relating potentials $h \in C_0$ 
to beliefs $p \in C_0^*$ will thus
be essential to 
the dynamic of GBP:
\begin{equation}
    p_\aa(x_\aa) = \frac 1 {Z_\aa} 
    \e^{- \beta H_\aa(x_\aa)}
    \quad\mathrm{with}\quad
    H_\aa(x_\aa) = \sum_{\bb \incl \aa} h_\bb(x_{\aa|\bb}).
\end{equation}
The mapping $h \mapsto H$ above is an invertible Dirichlet convolution 
\cite{Rota-64}
on $C_0$, analogous to a discrete integration over cones $K^\aa \subseteq K$.
Although seemingly simple, this mapping $\zeta$ and its inverse 
$\mu$ surely deserve 
proper attention.

\subsection{Combinatorics}

The combinatorial automorphisms $\zeta$ and $\mu$ 
we describe below generalize well-known
\emph{Möbius inversion formulas} on $C_0$ and $C_1$. 
The convolution structure in degrees $r \leq 1$ 
originates from works of Dirichlet in number theory, 
and was thoroughly described by 
Rota \cite{Rota-64} on general partial orders; 
let us also mention
the interesting extension \cite{Leinster-08} to more general categories.

Heuristically, $\zeta$ and $\mu$ 
might be viewed as combinatorial mappings from intensive to extensive local 
observables and reciprocally, which 
systematically solve what are known as 
\emph{inclusion-exclusion principles}.

\begin{figure*}[t]
\begin{center}

\includegraphics[width=.8\textwidth]
{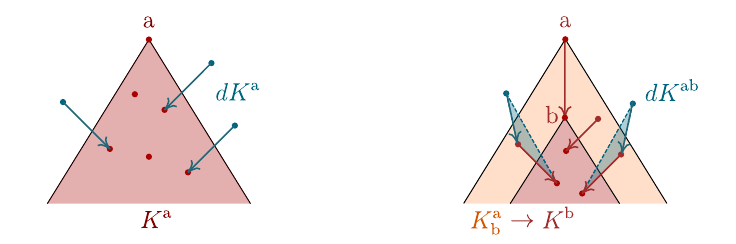}

\centering
\vspace{.2em}
    \caption{\label{fig:gauss}{\small
        Schematic pictures of the Gauss and Greene formulas. Left: 
        in degree 0, the red circles represent regions in the cone $K^\aa$ 
        being summed over when evaluating $\zeta(v)_\aa$, while blue arrows 
        represent coboundary terms of $dK^\aa$ being summed over when 
        evaluating $\zeta(\delta \ph)_\aa$, for $v \in C_0$ and $\ph \in C_1$.
        Right: in degree 1, red arrows represent flux terms in the 1-cone
        $K^{\aa\bb} = K^\aa_\bb \to K^\bb$ summed by $\zeta(\ph)_{\aa\to\bb}$, 
        while blue triangles represent the coboundary 2-chains 
        of $dK^{\aa\bb}$ summed by $\zeta(\delta\psi)_{\aa \to \bb}$, 
        for $\ph \in C_1$ and $\psi \in C_2$.
    }}
\end{center}
\end{figure*} 

\begin{Definition} \label{def:zeta}
The \emph{zeta transform} $\zeta : C_\bullet \to C_\bullet$ 
is the linear homogeneous morphism acting on $C_0$ by letting
$\zeta : h \mapsto H$,
\begin{equation} 
    H_\aa(x_\aa) = \sum_{\bb \subseteq \aa} 
    h_\bb(x_{\aa|\bb}),
\end{equation}
and acting on $C_r$ by letting $\zeta : \ph \mapsto \Phi$,
\begin{equation}\label{eq:zeta-r}
\Phi_{\vec \aa}(x_{\aa_r}) 
= \sum_{\bb_r \subseteq \aa_r} \dots 
\sum_{\substack{\bb_0 \subseteq \aa_0 \\ \bb_0 \not\subseteq \aa_1}}
\ph_{\vec \bb}(x_{\aa_r|\bb_r})
\end{equation}
Note that $\bb_0 \supset \dots \supset \bb_r$ is also
implicitly assumed in \eqref{eq:zeta-r}.
\end{Definition}

Definition \ref{def:zeta}
extends to $C_\bullet$ the action of $\zeta$ on $C_0$. 
The action $\zeta : C_1 \to C_1$ 
defined by
\eqref{eq:zeta-r} should not be confused with the convolution 
product of the incidence algebra $\tilde C_1(K, \Z)$, 
obtained by including degenerate chains (identities) $\aa \supseteq \bb$ 
in $\tilde N_1 K \simeq N_0 K \sqcup N_1 K$, 
and restricting to 
integer coefficients.
See for instance Rota \cite{Rota-64} for details on Dirichlet convolution, 
and \cite[chap. 3]{phd} for the 
module structure considered here. 

Remember we assume $K \incl \Pow(\Om)$ to be $\cap$-closed\footnote{
    This coincides with the {\em region graph property} 
    of Yedidia, Freeman and Weiss \cite{Yedidia-2005}, when 
    describing the hypergraph $K$ in their language of 
    bipartite region graphs.
}
for theorem \ref{thm:A} to hold. This
is also necessary to obtain the explicit 
Möbius inversion formula \eqref{eq:mu-r} below. 

\begin{Theorem} \label{thm:Mobius}
    The \emph{Möbius transform} $\mu = \zeta^{-1}$ is given  
    in all degrees by a finite sum: 
    \begin{equation}\label{eq:mu-rec}
    \mu = \sum_{k = 0}^n (-1)^k (\zeta - 1)^k
    \end{equation}
    The action of $\mu : C_0 \to C_0$ may be written 
    $\mu : H \mapsto h$, 
    \begin{equation}\label{eq:mu-0}
        h_\aa(x_\aa) = \sum_{\bb \incl \aa} 
        \mu_{\aa \to \bb} \: H_{\bb}(x_{\aa|\bb}),
    \end{equation}
    and the action $\mu : C_r \to C_r$ may be written 
    $\mu : \Phi \to \ph$, 
    \begin{equation} \label{eq:mu-r}
    \ph_{\vec \aa}(x_{\aa_r})= 
    \sum_{\bb_r \subseteq \aa_r} 
    \mu_{\aa_r \to \bb_r}
    \dots \sum_{\substack{\bb_0 \subseteq \aa_0\\
    \bb_0 \not\subseteq \bb_1}}
    \mu_{\aa_0 \to \bb_0} \Phi_{\vec \bb^\cap}(x_{\aa_r | 
    \bb^\cap_r})
    \end{equation}
    where $\vec\bb^\cap := \bb_0 \supset 
    (\bb_0 \cap \bb_1) \supset \dots \supset (\bb_0 \cap \dots \cap \bb_r)$ 
    in \eqref{eq:mu-r}. 
\end{Theorem}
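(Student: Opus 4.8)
The plan is to establish the three claims of Theorem~\ref{thm:Mobius} in turn: invertibility of $\zeta$ together with the Neumann-type formula \eqref{eq:mu-rec}, then the closed forms \eqref{eq:mu-0} and \eqref{eq:mu-r} for the action of $\mu = \zeta^{-1}$ in degree $0$ and in degree $r$. The logical backbone is that a single nilpotency statement yields both the global inversion formula and the existence of $\mu$, after which the explicit formulas are verified by direct composition with $\zeta$.

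First I would prove that $\zeta - 1$ is nilpotent on each $C_r$, which immediately gives invertibility and \eqref{eq:mu-rec}: writing $\zeta = 1 + (\zeta - 1)$ and telescoping, $\zeta \sum_{k \geq 0} (-1)^k (\zeta - 1)^k = 1$ as soon as a high enough power of $\zeta - 1$ vanishes, so the finite alternating series is a genuine two-sided inverse. Nilpotency follows from a triangularity argument on the nerve. By Definition~\ref{def:zeta} the diagonal contribution to $\zeta$ (the subchain $\vec\bb = \vec\aa$) is the identity, since $\aa_0 \not\subseteq \aa_1$ holds for any strict chain and $x_{\aa_r|\aa_r} = x_{\aa_r}$, while every other term of \eqref{eq:zeta-r} sends $\vec\aa$ to a strictly smaller chain $\vec\bb \prec \vec\aa$ for the coordinatewise order $\bb_i \subseteq \aa_i$. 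Hence $\zeta - 1$ strictly lowers the total rank $\sum_i \mathrm{rk}(\aa_i)$ in $(K, \subseteq)$, which is bounded by the maximal chain length, so the series truncates. This is the cleanest step.

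For degree $0$ I would verify \eqref{eq:mu-0} by composing with $\zeta$ directly. Substituting $H_\bb = \sum_{\cc \subseteq \bb} h_\cc$ into $h_\aa = \sum_{\bb \subseteq \aa} \mu_{\aa \to \bb} H_\bb$ and exchanging the order of summation, the claim collapses to the single identity $\sum_{\cc \subseteq \bb \subseteq \aa} \mu_{\aa \to \bb} = \delta_{\aa\cc}$, i.e. the defining recursion of the Möbius function of $(K, \subseteq)$, together with Philip Hall's expression of $\mu_{\aa \to \bb}$ as the alternating count of chains from $\aa$ down to $\bb$ — which is exactly the coefficient extracted from $\sum_k (-1)^k (\zeta - 1)^k$. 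This is classical Rota~\cite{Rota-64} once triangularity is in place; the only subtlety is that the inclusions $\R^{E_\bb} \subseteq \R^{E_\aa}$ compose functorially, so the restriction maps $x_{\aa|\bb}$ do not interfere with the combinatorial cancellation.

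The substantial step is the degree-$r$ formula \eqref{eq:mu-r}. Here I would again check $\mu\zeta = \mathrm{id}$ on $C_r$ by substitution, but organized \emph{slot by slot}: the zeta transform \eqref{eq:zeta-r} sums over subchains with a free terminal slot $\bb_r \subseteq \aa_r$ and coupled upper slots constrained by $\bb_0 \not\subseteq \aa_1$, while the proposed inverse applies the one-variable Möbius coefficient $\mu_{\aa_i \to \bb_i}$ in each slot and records the intersection chain $\vec\bb^\cap$. The crucial ingredient is that $\cap$-closure of $K$ guarantees each $\bb_0 \cap \dots \cap \bb_i$ lies in $K$, so $\vec\bb^\cap$ is a legitimate chain of $N_r K$ and the degree-$0$ cancellation can be run independently in every slot, the strict conditions $\bb_0 \not\subseteq \bb_1$ playing the role of the $\not\subseteq$ conditions of $\zeta$ to prevent double counting. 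I expect the main obstacle to be precisely this bookkeeping: reconciling the forward constraint $\bb_0 \not\subseteq \aa_1$ with the inverse constraint $\bb_0 \not\subseteq \bb_1$, and confirming that the telescoping Möbius sums collapse to the identity in all slots simultaneously. This is also exactly where $\cap$-closure is indispensable — without it the intersection chains escape $K$, the slots fail to decouple, and the explicit inverse breaks down, consistent with the remark preceding the theorem.
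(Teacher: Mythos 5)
Your skeleton — nilpotency of $\zeta - 1$ to get invertibility and \eqref{eq:mu-rec}, then verification of $\mu\zeta = \mathrm{id}$ degree by degree, with $\cap$-closure entering only in the degree-$r$ formula — is essentially the paper's skeleton, and your degree-$0$ argument is the classical one and is fine. The genuine gap is that the substantial claim of the theorem, the degree-$r$ formula \eqref{eq:mu-r}, is exactly the step you do not carry out: you assert that ``the degree-$0$ cancellation can be run independently in every slot'' and then concede that confirming this is the main obstacle. That guiding assertion is not correct as stated, and it is precisely where the work lies. The slots do not decouple: in \eqref{eq:mu-r} the operand is evaluated at the intersection chain $\vec\bb^\cap = \bb_0 \supset (\bb_0\cap\bb_1) \supset \dots$, so when you substitute $\Phi = \zeta\ph$, the inner sums of \eqref{eq:zeta-r} run over chains subordinate to the regions $\bb_0\cap\dots\cap\bb_i$, whose entries depend jointly on \emph{all} of the outer summation variables; moreover the exclusion constraints on the two sides are attached to different data ($\bb_0\not\subseteq\aa_1$ in $\zeta$, $\bb_0\not\subseteq\bb_1$ in $\mu$). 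The cancellation therefore has to be run \emph{sequentially}, one slot at a time, each step being a classical Möbius inversion in the first slot with the remaining slots carried along as a $C_{r-1}$-valued coefficient. This is exactly what the paper's proof does: it writes $\zeta$ and $\mu$ recursively through the operators $i_{\aa}$ and $\nu_{\aa}$ and proves the intertwining relation $\nu_{\aa_0}\circ\zeta = \zeta\circ i_{\aa_0}$, then composes over slots. That relation is not bookkeeping to be ``expected''; it is the technical core of the theorem, and without it your step for $C_r$ is a restatement of the claim rather than a proof.

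A secondary imprecision: your triangularity argument does give nilpotency of $\zeta - 1$ (the diagonal of \eqref{eq:zeta-r} is the identity and every other term strictly lowers the chain in the coordinatewise order), hence invertibility and \emph{some} truncating alternating series. But your bound is wrong: the total rank $\sum_i \mathrm{rk}(\aa_i)$ is not bounded by the maximal chain length $n$ — already for a graph the chain $ij \supset i$ has total rank $3$ while $n = 2$ — so what you actually prove is $\mu = \sum_{k=0}^{N}(-1)^k(\zeta - 1)^k$ for some $N$ possibly larger than $n$. To obtain \eqref{eq:mu-rec} with upper limit exactly $n$ you need the finer fact that the constraints of \eqref{eq:zeta-r} (the chain condition on $\vec\bb$ together with $\bb_0 \not\subseteq \aa_1$) forbid realized strictly decreasing sequences of more than $n+1$ chains; for a graph this is the paper's remark that $(\zeta-1)^3 = 0$. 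This part is fixable, but as written the truncation at $k = n$ is unjustified.
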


In \eqref{eq:mu-rec}, $n$ is the maximal length of a strict 
chain in $K$. 
In degree 0, this yields the usual recurrence formulas 
for $\mu_{\aa \to \bb}$ in terms of all the strict factorisations of $\aa \to \bb$.
In practice, the matrix $\mu$ can be computed efficiently 
in a few steps by using any 
sparse tensor library. 
When $K$ describes a graph, 
one for instance has $(\zeta - 1)^3 = 0$. 
The nilpotency of $\zeta - 1$ 
furthermore ensures that $\mu = \zeta^{-1}$ is given 
by \eqref{eq:mu-rec} even without the $\cap$-closure assumption. 
The reader is referred to \cite[thm 3.11]{phd}
for the detailed proof of~\eqref{eq:mu-r}, 
briefly summarized below.  

\begin{proof}[Sketch of proof]
The definitions of $\zeta$ and $\mu$ both exhibit 
a form of recursivity in the degree $r$. Letting
$i_{\aa} : C_r \to C_{r-1}$ denote evaluation of the 
first region on $\aa \in K$, one for instance has on $C_r$ for $r \geq 1$:
\begin{equation}
    \zeta(\ph)_{\aa_0 \dots \aa_r} = \sum_{\bb_0 \in K^{\aa_0}_{\aa_1}} 
    \zeta(i_{\bb_0} \ph)_{\aa_1 \dots \aa_r}. 
\end{equation}
The above extends to $C_0$ by 
agreeing to let 
$\zeta$ act as identity on 
$C_{-1} := \R^{E_\Om} \supseteq \R^{E_{\bb_0}}$, 
and letting $\aa_1 = \varnothing$ i.e. $K^{\aa_0}_{\aa_1} = K^{\aa_0}$.

On the other hand, the Möbius transform is recovered from operators $\nu_{\aa} : C_r \to C_{r-1}$ 
as $\mu(\Phi)_{\aa_0 \dots \aa_r} = \nu_{\aa_r} \dots \nu_{\aa_0} \Phi$, where:
\begin{equation} 
\nu_{\aa_0}(\Phi)_{\aa_1 \dots \aa_r} = 
\sum_{\bb_0 \in K^{\aa_0}_{\aa_1}} \mu_{\aa_0 \to \bb_0} \, \Phi_{\bb_0 \cap(\aa_1 \dots \aa_r)}
\end{equation}
The technical part of the proof then consists 
in proving $\nu_{\aa_0} \circ \zeta = \zeta \circ i_{\aa_0}$, using classical 
Möbius inversion formulas \cite[lemma 3.12]{phd}. 
One concludes that $\mu(\zeta \ph)_{\aa_0 \dots \aa_r}$ 
is computed as 
$\nu_{\aa_r} \dots \nu_{\aa_0}(\zeta \ph) = 
\zeta(i_{\aa_r} \dots i_{\aa_0} \ph) = \ph_{\aa_0 \dots \aa_r}$. 
\end{proof}

\begin{Corollary} \label{cor:BK-coeffs}
    \emph{Bethe-Kikuchi coefficients} 
    are equivalently defined by the 
    inclusion-exclusion principle \eqref{eq:inclusion-exclusion} 
    and the explicit Möbius inversion formula \eqref{eq:Bethe-numbers}:
    \begin{equation} \label{eq:inclusion-exclusion}
    \sum_{\aa \supseteq \bb} c_\aa = 1 
    \quad\hspace{1.4em}{\rm \;for\;all\;} \bb \in K,
    \end{equation}
    \begin{equation} \label{eq:Bethe-numbers}
    c_\bb = \sum_{\aa \cont \bb} \mu_{\aa \to \bb} 
    \quad{\rm \;for\;all\;} \bb \in K.
    \end{equation}
\end{Corollary}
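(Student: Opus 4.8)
The plan is to recognize both \eqref{eq:inclusion-exclusion} and \eqref{eq:Bethe-numbers} as the two halves of a single scalar Möbius inversion on the poset $(K, \incl)$, so that the asserted equivalence collapses to the identity $\mu = \zeta^{-1}$ already provided by Theorem~\ref{thm:Mobius}. The first step is to extract the \emph{scalar} layer of the degree-$0$ operators. Reading Definition~\ref{def:zeta} and formula~\eqref{eq:mu-0} at the level of coefficients, the only relevant data are the incidence matrix $Z_{\aa\bb} = 1$ if $\bb \incl \aa$ and $0$ otherwise, together with the Möbius matrix $M_{\aa\bb} = \mu_{\aa\to\bb}$, which vanishes unless $\bb \incl \aa$. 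Theorem~\ref{thm:Mobius} specializes in degree $0$ to $M Z = Z M = \mathrm{Id}$, that is $\sum_{\bb \incl \cc \incl \aa} \mu_{\aa \to \cc} = \delta_{\aa\bb}$; this is the only input I will need.

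Second, I would rewrite the inclusion-exclusion constraint \eqref{eq:inclusion-exclusion} as a single linear equation. Viewing $c = (c_\aa)_{\aa \in K}$ as a row vector, one has $\sum_{\aa \cont \bb} c_\aa = \sum_\aa c_\aa Z_{\aa\bb} = (cZ)_\bb$, so that \eqref{eq:inclusion-exclusion} is exactly $cZ = \mathbf{1}$, where $\mathbf{1}$ denotes the all-ones vector indexed by $K$. Since $Z$ is invertible with inverse $M$, this determines $c$ uniquely as $c = \mathbf{1} \, M$, and reading off the $\bb$-component gives $c_\bb = \sum_\aa \mu_{\aa \to \bb} = \sum_{\aa \cont \bb} \mu_{\aa \to \bb}$, the last equality holding because $\mu_{\aa \to \bb}$ is supported on $\aa \cont \bb$. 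This is precisely \eqref{eq:Bethe-numbers}.

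For the converse direction I would simply run the computation backwards: starting from the definition $c = \mathbf{1} \, M$ given by \eqref{eq:Bethe-numbers}, right multiplication by $Z$ yields $cZ = \mathbf{1} \, M Z = \mathbf{1}$, which is \eqref{eq:inclusion-exclusion} componentwise. Integrality of the $c_\bb$ then follows as a bonus from the integrality of the Möbius coefficients $\mu_{\aa \to \bb}$.

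I do not expect a genuine obstacle here, since the content is linear algebra once $\mu = \zeta^{-1}$ is in hand. The one point demanding care is bookkeeping: distinguishing the scalar incidence and Möbius matrices from the full operators $\zeta, \mu$ acting on fibered observables in $C_0$, and fixing the row-versus-column convention so that the summation index $\aa \cont \bb$ in both \eqref{eq:inclusion-exclusion} and \eqref{eq:Bethe-numbers} matches the support of $\mu_{\aa \to \bb}$. Alternatively, one could bypass matrices entirely and invoke classical Möbius inversion on $(K, \incl)$ in the manner of Rota~\cite{Rota-64}, applied to the constant function $\bb \mapsto 1$; this exhibits the statement as a textbook instance of inversion, but I prefer the explicit implication $cZ = \mathbf{1} \Rightarrow c = \mathbf{1} M$ for transparency.
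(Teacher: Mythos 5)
Your proposal is correct and is essentially the paper's own proof: the row-vector identities $cZ = \mathbf{1} \Leftrightarrow c = \mathbf{1}M$ are exactly the paper's statement $\zeta^*(c) = 1 \Leftrightarrow c = \mu^*(1)$, where $\zeta^*, \mu^*$ are the dual (order-reversed) automorphisms, with Rota's classical inversion as the common underlying input. The only difference is presentational — you spell out the transpose/matrix bookkeeping that the paper compresses into the word ``dual''.
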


\begin{proof} 
This classical result
is just $\zeta^*(c) = 1 \Leftrightarrow c = \mu^*(1)$, 
denoting by $\zeta^*$ and $\mu^*$ the dual automorphisms 
obtained by reversing the partial order on $K$. 
See \cite{Rota-64}. 
\end{proof}

For every $H \in C_0$ given by 
$H = \zeta  h \eqvl h = \mu H$, 
a consequence of~\eqref{eq:Bethe-numbers} is that for all $x_\Om \in E_\Om$, 
the total energy is exactly computed by the 
Bethe-Kikuchi approximation:
\begin{equation} \label{eq:U_Bethe}
    \sum_{\aa \in K} h_\aa(x_{\Om|\aa}) = 
    \sum_{\bb \in K} c_\bb \: H_\bb(x_{\Om|\bb}).
\end{equation} 
Including the maximal region $\Om$ 
in $\tilde K = \{ \Om \} \cup K$ yields the 
total energy of \eqref{eq:U_Bethe} 
as $H_\Om = \zeta(h)_\Om$, while $h_\Om = 0$ 
ensures the exactness of the Bethe-Kikuchi energy 
by \eqref{eq:Bethe-numbers}. 
In contrast, writing $S = \zeta s \in \R^K$ for the Möbius inversion of local entropies,
the total entropy $\sum c_\bb S_\bb$ computed by
problems \ref{S-crit} and \ref{varF-crit} 
neglects a global mutual information summand 
$s_\Om = \sum \mu_{\Om \to \aa} S_\aa$, which does not cancel on
loopy hypergraphs\footnote{
    Generalizing a classical result on trees, 
    we proved that $s_\Om$ also vanishes on acyclic or {\it retractable}
    hypergraphs~\cite[chap. 6]{phd}, although a deeper topological 
    understanding of this unusual notion of acyclicity in degrees $n \geq 2$ 
    is called for.
}. 
This relationship between Bethe-Kikuchi approximations and a 
truncated Möbius inversion formula was first recognized by 
Morita in \cite{Morita-57}. 

Let us now describe the remarkable commutation relations 
satisfied by $\zeta$ and $\delta$. 
Reminiscent of Greene formulas, they confirm the intuition of Rota
who saw in Möbius inversion 
formulas a discrete analogy with the fundamental theorem of calculus 
\cite{Rota-64}. They also strengthen the resemblence of GBP
with Poincaré's {\em balayage} algorithm \cite{Poincare-94}, 
used to find
harmonic forms on Riemannian manifolds, by solving 
local harmonic problems on subdomains and
iteratively updating boundary conditions.

The Gauss formula~\eqref{eq:Gauss-0} below
has been particularly useful to factorize the GBP algorithm
through $\zeta \circ \delta$ in \cite{gsi19}. 
Its simple proof will help understand the more general 
Greene formula \eqref{eq:Gauss-r} below. 

\begin{Proposition}[Gauss formula]\label{prop:Gauss}
For all $\ph \in C_1$ we have:
\begin{equation} \label{eq:Gauss-0}
\sum_{\cc \subseteq \bb} \delta \ph_\cc \;
= \sum_{\cc \subseteq \bb} 
\sum_{\substack{\aa \supset \cc \\[.3em] \aa \not \subseteq \bb}}
\ph_{\aa \to \cc}
\end{equation}
When $\Om \in K$, both sides of \eqref{eq:Gauss-0} involve the 
action of $\zeta$:
\begin{equation}
\zeta(\delta \ph)_\bb = \zeta(\ph)_{\Om \to \bb} 
\end{equation}
\end{Proposition}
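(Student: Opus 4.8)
The plan is to read the left-hand side of \eqref{eq:Gauss-0} as the value of the zeta transform $\zeta(\delta\ph)_\bb$, since by Definition \ref{def:zeta} applied on $C_0$ one has $\zeta(\delta\ph)_\bb(x_\bb) = \sum_{\cc\subseteq\bb}\delta\ph_\cc(x_{\bb|\cc})$. First I would substitute the defining formula \eqref{eq:delta-1} for each $\delta\ph_\cc$ and regroup the resulting double sum into three families of strict $1$-chains: internal incoming fluxes $\aa\to\cc$ with $\cc\subset\aa\subseteq\bb$, external incoming fluxes $\aa\to\cc$ with $\aa\supset\cc$ but $\aa\not\subseteq\bb$, and outgoing fluxes $\cc\to\dd$, which are automatically internal since $\dd\subset\cc\subseteq\bb$.

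The key step is a term-by-term cancellation. I would observe that the internal incoming family $\{(\aa,\cc):\cc\subset\aa\subseteq\bb\}$ and the outgoing family $\{(\cc,\dd):\dd\subset\cc\subseteq\bb\}$ index exactly the same set of strict $1$-chains contained in $\bb$ once one relabels $(\cc,\dd)$ as $(\aa,\cc)$. Since $\ph_{\aa\to\cc}$ is carried by its terminal space $E_\cc$ and is evaluated at $x_{\bb|\cc}$ in both families, the two contributions coincide up to the opposite signs produced by the two sums in \eqref{eq:delta-1}, and therefore cancel. What survives is precisely the external incoming family, namely $\sum_{\cc\subseteq\bb}\sum_{\aa\supset\cc,\,\aa\not\subseteq\bb}\ph_{\aa\to\cc}$, which is the right-hand side of \eqref{eq:Gauss-0}. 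This establishes the first identity.

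For the second identity I would unfold the degree-$1$ zeta transform \eqref{eq:zeta-r} at the chain $\Om\supset\bb$, obtaining $\zeta(\ph)_{\Om\to\bb}(x_\bb) = \sum_{\bb_1\subseteq\bb}\sum_{\bb_0\subseteq\Om,\,\bb_0\not\subseteq\bb}\ph_{\bb_0\to\bb_1}(x_{\bb|\bb_1})$ with $\bb_0\supset\bb_1$ implicit. The only point to check is that the constraint $\bb_0\subseteq\Om$ is vacuous: since $\Om$ is the maximal region of $K$, every $\bb_0\in K$ satisfies $\bb_0\subseteq\Om$, so that restriction drops out and the surviving condition is exactly $\bb_0\not\subseteq\bb$. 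Relabeling the summation indices $\bb_0,\bb_1$ as $\aa,\cc$ then reproduces the right-hand side of \eqref{eq:Gauss-0}, whence $\zeta(\delta\ph)_\bb=\zeta(\ph)_{\Om\to\bb}$.

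The main obstacle is purely bookkeeping rather than conceptual. One must track the restriction maps $x_\bb\mapsto x_{\bb|\cc}$ carefully so that the pairing of internal incoming and outgoing fluxes is genuinely term-by-term (same function, same argument, opposite sign), and one must keep strict and non-strict inclusions consistent throughout, since the incoming sums of \eqref{eq:delta-1} range over strict $\aa\supset\cc$ while the outer index $\cc\subseteq\bb$ is non-strict; in particular the incoming fluxes at $\cc=\bb$ are all external and feed directly into the survivor, so no boundary term is lost.
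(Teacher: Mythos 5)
Your proof is correct and follows essentially the same route as the paper: recognize the left-hand side as $\zeta(\delta\ph)_\bb$, expand $\delta$ via \eqref{eq:delta-1}, and cancel the internal inbound fluxes against the outbound ones so that only the external inbound terms survive. Your explicit relabeling bijection and the verification of the second identity via \eqref{eq:zeta-r} merely spell out details the paper's terser proof leaves implicit.
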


\begin{proof}
Note that inclusions $\R^{E_\cc} \subseteq \R^{E_\bb}$ are implicit 
in \eqref{eq:Gauss-0} and below. By definition of 
$\zeta : C_0 \to C_0$ and $\delta : C_1 \to C_0$,
\begin{equation} 
\zeta(\delta \ph)_\bb = 
\sum_{\cc \subseteq \bb} \bigg(\, 
\sum_{\aa \supset \cc} \ph_{\aa \to \cc}
- \sum_{\dd \subset \cc} \ph_{\cc \to \dd} \,
\bigg).
\end{equation}
Inbound flux terms $\ph_{\aa \to \cc}$ such that 
$\aa \subseteq \bb$ compensate 
all outbound flux terms $\ph_{\cc \to \dd}$, which always
satisfy $\cc \subseteq \bb$. 
Therefore only inbound flux terms $\ph_{\aa \to \cc}$ such that 
$\aa \not \subseteq \bb$ and $\cc \subseteq \bb$ remain.
\end{proof}

For every $K_0, \dots , K_r \subseteq K$, 
let us write $K_0 \to \dots \to K_r$ 
for the subset
$N_r K \cap (K_0 \times \dots \times K_r)$
of strictly ordered chains. 
We then define the following subsets of $N_\bullet K$ 
to strengthen the analogy with geometry. They are depicted in 
figure \ref{fig:gauss}.

\begin{Definition} \label{def:hypercones}
For all $\aa, \bb$ in $K$ let us define: 
\begin{enumerate}[label=(\alph*)]
    \item $K^\bb = \{ \cc \in K \,|\, \cc \subseteq \bb \} \subseteq C_0$
    the 
    \emph{cone below} $\bb$,
    \item $K^\aa_\bb = K^\aa \smallsetminus K^\bb$ 
    the \emph{intercone from $\aa$ to $\bb$},
    \item $dK^\bb = \bigcup_{\aa \in K} (K^\aa_\bb \to K^\bb)$ 
    the \emph{coboundary} of $K^\bb$
\end{enumerate}
Then $K^\bb, K^\aa_\bb \subseteq N_0 K$ 
and $dK^\bb \subseteq N_1 K$. 
Given $\vec \aa = \aa_0 \dots \aa_r$ in $N_r K$ 
and $\vec \bb = \bb_1 \dots \bb_r$ in $N_{r-1} K$, 
we then recursively define the following subsets 
of $N_r K$ for $r \geq 1$:
\begin{enumerate}[label=(\alph*)]
    \setcounter{enumi}{3}
    \item $K^{\aa_0 \dots \aa_r} = 
    K^{\aa_0}_{\aa_1} \to K^{\aa_1 \dots \aa_r}$ 
    the \emph{$r$-hypercone below} $\vec \aa$, 
    \item $dK^{\bb_1 \dots \bb_r} = 
    \bigcup_{\bb_0 \in K} K^{\bb_0\bb_1 \dots \bb_r}$
    the \emph{coboundary} of $K^{\bb_1 \dots \bb_r}$.
\end{enumerate}

\end{Definition}

Following Rota, we want to think of $\zeta$ as a combinatorial 
form of spatial integration. 
For every $r$-field $\ph \in C_r$ and 
integration domain $\Sigma \subseteq N_r K$,
let us introduce the following notation:
\begin{equation}
\int_{\Sigma} \ph = \sum_{\vec \aa \in \Sigma} 
j_{\Sigma \leftarrow \vec \aa}(\ph_{\vec \aa})
\end{equation}
where $j_{\Sigma \leftarrow \vec \aa}$ embeds 
$\R^{E_{\aa_r}}$ into the linear colimit 
$\bigcup_{\bb \in \Sigma} \R^{E_{\bb_r}}$ 
of the local observables functor over $\Sigma$,
here a linear subspace of the algebra $\R^{E_\Om}$.

It follows that $\int_{K^{\aa_0 \dots \aa_r}}$ defines a map 
$C_r \to \R^{E_{\aa_r}}$, 
which coincides with the evaluation of $\zeta : C_r \to C_r$ on 
$\aa_0 \dots \aa_r$,  
\begin{equation} 
\int_{K^{\aa_0 \dots \aa_r}} \ph = \zeta(\ph)_{\aa_0 \dots \aa_r}.
\end{equation}
The Gauss formula \eqref{eq:Gauss-0} may then be generalized 
to all the degrees of $C_\bullet$ by the pleasant form below, 
which is only a rewriting of \cite[thm 3.14]{phd}.

\begin{Theorem}[Greene formula] \label{thm:Greene}
For all $\ph \in C_r$ we have:
\begin{equation} \label{eq:Gauss-r}
    \int_{K^{\bb_0 \dots \bb_r}} \delta \ph\;
    = \int_{dK^{\bb_0 \dots \bb_r}} \ph
\end{equation}
\end{Theorem}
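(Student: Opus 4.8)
The plan is to prove the Greene formula \eqref{eq:Gauss-r} by establishing it as a purely combinatorial identity of integration domains, reducing the degree-$r$ statement to the already-proven degree-$0$ Gauss formula \eqref{eq:Gauss-0} via the recursive structure of the hypercones from Definition \ref{def:hypercones}. The key observation is that both sides of \eqref{eq:Gauss-r} are sums of terms $\ph_{\vec\cc}$ (suitably restricted to $\R^{E_{\bb_r}}$), so the identity will follow once I show that the multiset of chains appearing on the left, after applying $\delta$ and integrating over the hypercone $K^{\bb_0\dots\bb_r}$, coincides with the multiset of chains in the coboundary $dK^{\bb_0\dots\bb_r}$ appearing on the right. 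In other words, I want to lift the cancellation argument of Proposition \ref{prop:Gauss} from $N_0 K$ to $N_r K$.

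First I would unwind both sides using the definitions. The left side $\int_{K^{\bb_0\dots\bb_r}}\delta\ph = \zeta(\delta\ph)_{\bb_0\dots\bb_r}$ expands by \eqref{eq:delta-r} into an alternating sum over the face index $j$; each face $\vec\cc^{(j)} = \vec\bb$ contributes inbound and outbound flux terms much as in the degree-$0$ case. The right side $\int_{dK^{\bb_0\dots\bb_r}}\ph = \sum_{\bb_{-1}\in K} \zeta(\ph)_{\bb_{-1}\bb_0\dots\bb_r}$ by part (e) of Definition \ref{def:hypercones}. The strategy is to exploit the recursive factorization of $\zeta$ recorded in the sketch of Theorem \ref{thm:Mobius}, namely $\zeta(\ph)_{\aa_0\dots\aa_r} = \sum_{\bb_0\in K^{\aa_0}_{\aa_1}} \zeta(i_{\bb_0}\ph)_{\aa_1\dots\aa_r}$, which expresses integration over an $r$-hypercone as an iterated integration peeling off the leading region. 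Applying this together with the intercone decomposition $K^{\aa_0}_{\aa_1} = K^{\aa_0}\smallsetminus K^{\aa_1}$ should let me reduce the leading-index contributions to an instance of the degree-$0$ Gauss formula applied fiberwise, while the remaining indices are handled by an induction on $r$.

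Concretely, I would induct on $r$. The base case $r=0$ is precisely Proposition \ref{prop:Gauss}. For the inductive step, I would split the alternating sum defining $\delta\ph$ into the term removing $\aa_0$ (the $j=0$ face) and the terms removing a later region. The $j=0$ term, after integrating over $K^{\bb_0}_{\bb_1}$ in the leading slot, reproduces the telescoping inbound/outbound cancellation of the degree-$0$ argument, leaving exactly the chains whose leading region lies outside $K^{\bb_0}$ — that is, the coboundary contribution $\bigcup_{\bb_{-1}}K^{\bb_{-1}\bb_0\dots\bb_r}$. The later-index terms, by the recursive formula for $\zeta$, are governed by $\int_{K^{\bb_1\dots\bb_r}}\delta\ph'$ for a restricted field $\ph' = i_{\bb_0}\ph$, to which the induction hypothesis applies, and these should cancel in pairs or reassemble into the remaining coboundary faces. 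The main obstacle I anticipate is bookkeeping the signs and the restriction maps $x_{\bb_r|\cc_{r+1}}$ correctly through the recursion: the alternating signs $(-1)^j$ in \eqref{eq:delta-r} must be matched against the sign conventions implicit in the coboundary $dK^{\bb_0\dots\bb_r}$, and one must verify that the intercone constraints $\bb_0\not\subseteq\aa_1$ appearing in the definition \eqref{eq:zeta-r} of $\zeta$ are exactly what guarantees the telescoping at each stage. Since this is a rewriting of \cite[thm 3.14]{phd}, I expect the combinatorial identity to hold verbatim, so the real work is packaging the recursion cleanly rather than discovering a new cancellation.
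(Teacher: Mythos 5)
Your high-level route --- induction on $r$ with Proposition \ref{prop:Gauss} as base case, driven by the recursion $K^{\bb_0\dots\bb_r} = K^{\bb_0}_{\bb_1}\to K^{\bb_1\dots\bb_r}$ --- is genuinely different from the paper's proof, which never inducts: the paper writes the left-hand side as $\sum_j (-1)^j \sum_{\vec\aa\in{\cal E}_j}\ph_{\vec\aa}$, where ${\cal E}_j\subseteq N_{r+1}K$ is the set of chains whose $j$-th face lies in the hypercone, proves ${\cal E}_j\subseteq{\cal E}_{j-1}\cup{\cal E}_{j+1}$ for $j>0$ together with ${\cal E}_j\cap{\cal E}_k=\varnothing$ for $|j-k|>1$, so that everything cancels in consecutive pairs except ${\cal E}_0\smallsetminus{\cal E}_1$, which it identifies with $dK^{\bb_0\dots\bb_r}$. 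An induction of the kind you propose \emph{can} be made to work, but not with the decomposition you describe: as written, each half of your inductive step is claimed to close on its own, and neither claim is true.

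Concretely, the $j=0$ face contribution to $\int_{K^{\bb_0\dots\bb_r}}\delta\ph$ is the sum of $\ph_{\dd_0\cc_0\vec\cc'}$ over chains with $(\cc_0\vec\cc')\in K^{\bb_0\dots\bb_r}$ and $\dd_0\supset\cc_0$ \emph{arbitrary}. Splitting on whether $\dd_0\subseteq\bb_0$, this equals the coboundary integral \emph{plus a surplus}: the chains with $\dd_0\subseteq\bb_0$ (note $\dd_0\supset\cc_0\not\subseteq\bb_1$ forces $\dd_0\in K^{\bb_0}_{\bb_1}$, so both leading regions lie in the intercone). No telescoping happens inside the $j=0$ term --- in Proposition \ref{prop:Gauss} the cancellation is between inbound ($j=0$) and outbound ($j=1$) terms, never within $j=0$ alone --- so it is not ``exactly the chains whose leading region lies outside $K^{\bb_0}$''. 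Dually, there are no ``remaining coboundary faces'' for the later-index terms to reassemble into: every coboundary chain $\dd_0\cc_0\vec\cc'$ (with $\dd_0\not\subseteq\bb_0$) has its $0$-face inside the hypercone and its $1$-face outside it, so the entire right-hand side of \eqref{eq:Gauss-r} already arises at $j=0$. What the later-index terms actually give, for each leading region $\cc_0\in K^{\bb_0}_{\bb_1}$, is $-\delta'(i_{\cc_0}\ph)$ integrated over the part of $K^{\bb_1\dots\bb_r}$ lying below $\cc_0$, where $\delta'$ is the boundary operator of the poset strictly below $\cc_0$ and the slice is $i_{\cc_0}\ph$ (one slice per $\cc_0$, not your single $\ph'=i_{\bb_0}\ph$ on all of $K^{\bb_1\dots\bb_r}$). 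Applying the induction hypothesis to each such slice produces minus the coboundary of these truncated cones, which, after relabeling $(\cc_0,\dd)\mapsto(\dd_0,\cc_0)$, is \emph{exactly minus the surplus} of the $j=0$ term. The proof therefore closes only through a cross-cancellation between your two halves; a plan that disposes of them independently stalls at precisely this point. (A secondary repair: the induction hypothesis below $\cc_0$ involves the apex chain $(\bb_1\cap\cc_0,\dots,\bb_r\cap\cc_0)$, which need not be strict, so the statement you induct on must be formulated for degenerate apex chains, allowing empty hypercones.)
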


Theorem \ref{thm:Greene} is proved in appendix \ref{section:apx-Gauss}.
When $\Om$ is in $K$, remark that 
the coboundary $dK^\bb$ of $K^\bb$ coincides with
$K^\Om_\bb$ by definition \ref{def:hypercones}. 
More generally, $dK^{\aa_0 \dots \aa_r}$ then coincides with 
$K^{\Om \aa_0 \dots \aa_r}$ 
and the Greene formula \eqref{eq:Gauss-r} takes the very succinct form 
$\zeta \circ \delta = i_\Om \circ \zeta$, where 
$i_\Om$ denotes evaluation of the first region on $\Om$.

Let us finally express the conjugate 
of $\delta$ by the graded automorphism $\zeta$ 
in terms of Bethe-Kikuchi numbers. 
The following proposition will yield 
a concise and efficient expression 
for the Bethe-Kikuchi diffusion algorithm 
in section \ref{section:Diffusions}. 

\begin{Theorem} \label{prop:mu-Gauss}
Let $\check \delta = \zeta \circ \delta \circ \mu$ 
and $\Phi \in C_r$ such that $\Phi = \zeta \phi$. 
Then for all $\aa_1 \dots \aa_r$ in $N_{r-1} K$, one has: 
\begin{equation} \label{eq:mu-Gauss}
\check \delta \Phi_{\aa_1 \dots \aa_r} = 
\int_{dK^{\aa_1 \dots \aa_r}} \phi \\[.3em]
= \check \Phi_{\Om \aa_1 \dots \aa_r}
\end{equation} 
where $\check \Phi_{\Om \aa_1 \dots \aa_r}$ 
denotes a Bethe-Kikuchi approximation of the total inbound flux 
to $K^{\aa_1 \dots \aa_r}$, given by
\begin{equation} \label{eq:check-Phi} 
\check \Phi_{\Om \aa_1 \dots \aa_r} = 
\sum_{\aa_0 \not\subseteq \aa_1} c_{\aa_0} 
    \Phi_{\aa_0 (\aa_0 \cap \aa_1) \dots (\aa_0 \cap \aa_r)}.
\end{equation}
\end{Theorem}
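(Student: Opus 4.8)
The plan is to unfold the definition $\check\delta = \zeta\circ\delta\circ\mu$ and reduce the whole statement to the Greene formula together with the inclusion-exclusion identity defining the coefficients $c_\aa$. Since $\Phi = \zeta\phi$ we have $\mu\Phi = \phi$, hence $\check\delta\Phi = \zeta\,\delta\,\mu\,\Phi = \zeta(\delta\phi)$. Evaluating on the chain $\aa_1\dots\aa_r \in N_{r-1}K$ and using the identification of $\zeta$ with hypercone integration (the identity $\int_{K^{\vec\aa}}\psi = \zeta(\psi)_{\vec\aa}$, applied to $\psi = \delta\phi \in C_{r-1}$), I would first record
\[
\check\delta\Phi_{\aa_1\dots\aa_r} = \zeta(\delta\phi)_{\aa_1\dots\aa_r} = \int_{K^{\aa_1\dots\aa_r}}\delta\phi .
\]

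The first equality of \eqref{eq:mu-Gauss} then follows immediately from the Greene formula \eqref{eq:Gauss-r} of Theorem~\ref{thm:Greene}, applied to $\phi \in C_r$ and the hypercone indexed by $\aa_1\dots\aa_r$: it converts the integral of $\delta\phi$ over $K^{\aa_1\dots\aa_r}$ into the integral of $\phi$ over the coboundary $dK^{\aa_1\dots\aa_r}$, giving $\check\delta\Phi_{\aa_1\dots\aa_r} = \int_{dK^{\aa_1\dots\aa_r}}\phi$. When $\Om \in K$ this collapses to $\zeta\delta = i_\Om\zeta$, i.e. $\int_{dK^{\aa_1\dots\aa_r}}\phi = \Phi_{\Om\aa_1\dots\aa_r}$, which I would keep as a sanity check for the general formula \eqref{eq:check-Phi}.

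The substance is the second equality, identifying the coboundary integral with the Bethe-Kikuchi sum \eqref{eq:check-Phi}. Here I would expand both sides as sums over strict chains and compare multiplicities. By definition $dK^{\aa_1\dots\aa_r} = \bigcup_{\aa_0 \in K} K^{\aa_0\aa_1\dots\aa_r}$, so the left side sums $\phi_{\vec\bb}$ exactly once over each chain $\bb_0 \supset \dots \supset \bb_r$ with $\bb_0 \not\subseteq \aa_1$ and $\bb_1\dots\bb_r \in K^{\aa_1\dots\aa_r}$. On the right side, rewriting each $\Phi_{\aa_0(\aa_0\cap\aa_1)\dots(\aa_0\cap\aa_r)}$ as $\int_{K^{\aa_0(\aa_0\cap\aa_1)\dots(\aa_0\cap\aa_r)}}\phi$ through $\Phi = \zeta\phi$, the term $\phi_{\vec\bb}$ occurs with weight $c_{\aa_0}$ precisely when $\vec\bb$ lies in that hypercone. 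The key bookkeeping is to check that, for a chain $\vec\bb$ already lying in the coboundary, the membership conditions $\bb_j \subseteq \aa_0\cap\aa_j$ and $\bb_j \not\subseteq \aa_0\cap\aa_{j+1}$ reduce to the single condition $\bb_0 \subseteq \aa_0$: the inclusions $\bb_j \subseteq \aa_0\cap\aa_j$ collapse to $\bb_0 \subseteq \aa_0$ since $\bb_0$ dominates the chain and $\bb_j \subseteq \aa_j$ already holds, while each strict condition $\bb_j \not\subseteq \aa_0\cap\aa_{j+1}$ is automatic from $\bb_j \not\subseteq \aa_{j+1}$.

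Granting this reduction, the total coefficient of a fixed coboundary chain $\vec\bb$ on the right side is $\sum_{\aa_0 \supseteq \bb_0,\ \aa_0 \not\subseteq \aa_1} c_{\aa_0}$. Since $\bb_0 \not\subseteq \aa_1$, any $\aa_0 \supseteq \bb_0$ automatically satisfies $\aa_0 \not\subseteq \aa_1$, so the constraint $\aa_0 \not\subseteq \aa_1$ is redundant and the sum is simply $\sum_{\aa_0 \supseteq \bb_0} c_{\aa_0} = 1$ by the inclusion-exclusion principle \eqref{eq:inclusion-exclusion} of Corollary~\ref{cor:BK-coeffs}. A symmetric check shows that every chain contributing on the right side is a coboundary chain, so the supports coincide and each chain carries total weight $1$ on both sides, proving $\int_{dK^{\aa_1\dots\aa_r}}\phi = \check\Phi_{\Om\aa_1\dots\aa_r}$. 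I expect the main obstacle to be exactly this last regrouping: one must verify carefully that the intersections $\aa_0\cap\aa_j$ appearing in \eqref{eq:check-Phi} neither introduce spurious chains nor destroy the strict-inclusion conditions when the top region $\aa_0$ is summed out, after which the proof closes by the defining property of the Bethe-Kikuchi coefficients.
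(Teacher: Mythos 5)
Your proof is correct, but it proves the key second equality by a genuinely different route than the paper. The paper's proof works in the extended complex $\tilde C_\bullet = C_\bullet(K \cup \{\Om\}, \R^E)$: it reads the Greene formula as a truncated Möbius inversion $\tilde\zeta(\mu\Phi)_{\Om\aa_1\dots\aa_r}$, splits into the cases $\Om \in K$ (where everything collapses, as in your sanity check) and $\Om \notin K$, and in the latter case establishes the algebraic identity $\phi - \tilde\mu\Phi = \tilde\mu\check\Phi$ using the relation $c_{\bb_0} = -\tilde\mu_{\Om \to \bb_0}$ and locality of the extended Möbius transform, before applying $\tilde\zeta$ to conclude. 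You instead expand both sides as sums of $\phi_{\vec\bb}$ over strict chains and match multiplicities: your central observation — that for a coboundary chain $\vec\bb$ the membership conditions in $K^{\aa_0(\aa_0\cap\aa_1)\dots(\aa_0\cap\aa_r)}$ collapse to $\bb_0 \subseteq \aa_0$, and that the constraint $\aa_0 \not\subseteq \aa_1$ is then redundant — reduces everything to $\sum_{\aa_0 \supseteq \bb_0} c_{\aa_0} = 1$, i.e.\ to \eqref{eq:inclusion-exclusion}, and I verified that both the collapse and the converse support check go through exactly as you claim. Your argument is more elementary and self-contained (no extended complex, no dual Möbius identity), and it makes visible why the Bethe-Kikuchi coefficients appear at all: they are exactly the weights that eliminate the redundancy of overlapping cones $K^{\aa_0}$. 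The paper's approach buys structural reusability — the identity $\phi - \tilde\phi = \tilde\mu\check\Phi$ and the interpretation of $c_{\aa_0}\Phi$ as flux cumulants — which it exploits elsewhere. One small caveat common to both proofs: when the chain $\aa_0 \supset (\aa_0\cap\aa_1) \supset \dots$ degenerates (repeated regions), the evaluation of $\Phi = \zeta\phi$ on it must be read through the zeta formula, where the corresponding hypercone is empty and the term vanishes; your counting argument is consistent with this convention, but it would be worth one sentence to say so.
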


Remark that \eqref{eq:check-Phi} 
reduces to $\check \Phi_{\Om \aa_1 \dots \aa_r} = \Phi_{\Om \aa_1 \dots \aa_r}$ 
when $\Om\in K$, as in this case $c_\Om = 1$ 
and $c_\aa = 0$ for all $\aa \neq \Om$. 
The proof of theorem \ref{prop:mu-Gauss} is 
carried in appendix \ref{section:apx-Gauss}.
\section{Belief Diffusions} \label{section:Diffusions}

In this section, we describe dynamical equations 
that solve Bethe-Kikuchi variational principles. 
Their common structure will shed 
light on the correspondence theorems \ref{thm:S}, \ref{thm:varF} 
and \ref{thm:F} proven in section \ref{section:Equilibria}. 
Their informal statement is that 
solutions to free energy principles 
(problems \ref{varF-crit} and \ref{F-crit}) can both be found 
by conservative or \emph{isothermal} transport equations on $C_0$
which we described in  
\cite{gsi21}. 
In contrast, solving the max-entropy principle (problem~\ref{S-crit}) 
will require to let
temperature vary until equilibrium, 
so as to satisfy the mean energy constraint. 
This calls for a new kind of \emph{adiabatic} diffusion equations 
on $C_0$, satisfying energy conservation up to scalings only.

To make the most of the available linear structures on $(C_\bullet, \delta, \zeta)$, 
evolution is first described either at the level of potentials $v \in C_0$ 
or at the level of local hamiltonians $V = \zeta v \in C_0$. 
We conclude this section by translating the dynamic 
on beliefs $q = \rho(V) \in \Delta_0$,
for a clearer comparison with GBP. 
Bethe-Kikuchi diffusion will not only differ from GBP by an arbitrary choice of 
time step $\eps > 0$, but also from a degree-1 Möbius inversion on 
messages. 

\subsection{Isothermal diffusion}

The purpose of isothermal diffusion is to solve the local free energy principles 
\ref{varF-crit} and \ref{F-crit} by enforcing two 
different types of constraints simultaneously on $v \in C_0$: 
\begin{enumerate}[label=(\roman*)] 
\item \label{item:energy-conservation} \emph{energy conservation},
asking that $\sum_{\aa \in K} v_\aa$ is fixed to 
a given global hamiltonian $H_\Om \in \R^{E_\Om}$,
\item \label{item:consistency} \emph{belief consistency},
asking that the local Gibbs states $p = \rho^\beta(\zeta v)$ 
agree and satisfy $dp = 0$, i.e. $p \in \Gamma_0$.
\end{enumerate}
If a potential $h \in C_0$ satisfies~\ref{item:energy-conservation}, 
theorem~\ref{thm:A} implies 
that there should always exist a \emph{heat flux} $\ph \in C_1$ 
such that $v = h + \delta \ph$, in other words $v \in [h]$ 
is homologous to $h$. 
This naturally led us to generalize GBP update rules 
by continuous-time transport equations on $C_0$ \cite{gsi19,phd,gsi21}: 
\begin{equation}\label{eq:diffusion}
\frac {dv}{dt} = \delta \Phi(v).
\end{equation}

\begin{Definition}\label{def:isothermal-diffusion}
    Given a flux functional 
    $\Phi : C_0 \to C_1$ and an inverse temperature 
    $\beta \in \R$, we 
    call \emph{isothermal diffusion}
    the vector field $X^\beta_\Phi : C_0 \to C_0$ defined by:
    \begin{equation}\label{eq:X-Phi-beta}
    X^\beta_\Phi(v) = \frac 1 \beta \: \delta \Phi( \beta v)
    \end{equation}
\end{Definition}

The analytic submanifold $\Fix^\beta \subseteq C_0$ defined below 
should be stationary 
under~\eqref{eq:diffusion} for diffusion to solve 
Bethe-Kikuchi optimisation problems, as the
belief consistency constraint \ref{item:consistency} 
imposes restrictions 
on the flux functionals $\Phi : C_0 \to C_1$ suited for diffusion. 
The submanifold $\fix^\beta \subseteq \Fix^\beta$ describes
stronger constraints, by assuming Gibbs densities normalized to 
a common mass. 

\begin{Definition} \label{def:Fix}
Denote by $\e^{-V} \in C_0^*$ 
the unnormalized Gibbs densities of $V \in C_0$. 
For every inverse temperature $\beta > 0$, 
we call {\em consistent manifold} 
the subspace $\fix^\beta \subseteq C_0$ defined by
\begin{equation}\label{eq:fix-beta}
\fix^\beta= \{ v \in C_0 \,|\, 
\e^{- \beta \, \zeta v} \in \Ker(d) \}
\end{equation}
and call {\em projectively consistent manifold}
the larger space:
\begin{equation}\label{eq:Fix-beta}
\Fix^\beta= \{ v \in C_0 \,|\, 
\rho^{\beta}(\zeta v) \in \Gamma_0 \}
\end{equation}
We write $\fix = \fix^1$ and $\Fix = \Fix^1$.
\end{Definition}

Note that $\Fix^\beta$ only consists of a thickening of $\fix^\beta$ 
by the action of additive constants, which span a subcomplex $R_\bullet \subseteq C_\bullet$. 
Also remark that $\fix^\beta = \beta^{-1} \cdot \fix^1$ 
is a scaling of $\fix$ for all $\beta > 0$, and 
$\Fix^\beta = \beta^{-1} \cdot \Fix^1$ as well. It is hence sufficient to study isothermal 
diffusions at temperature 1, see appendix \ref{section:apx-fix}
for more details on the consistent manifolds. 

\begin{figure*}[t]
\begin{center}

\includegraphics[width=\textwidth]
{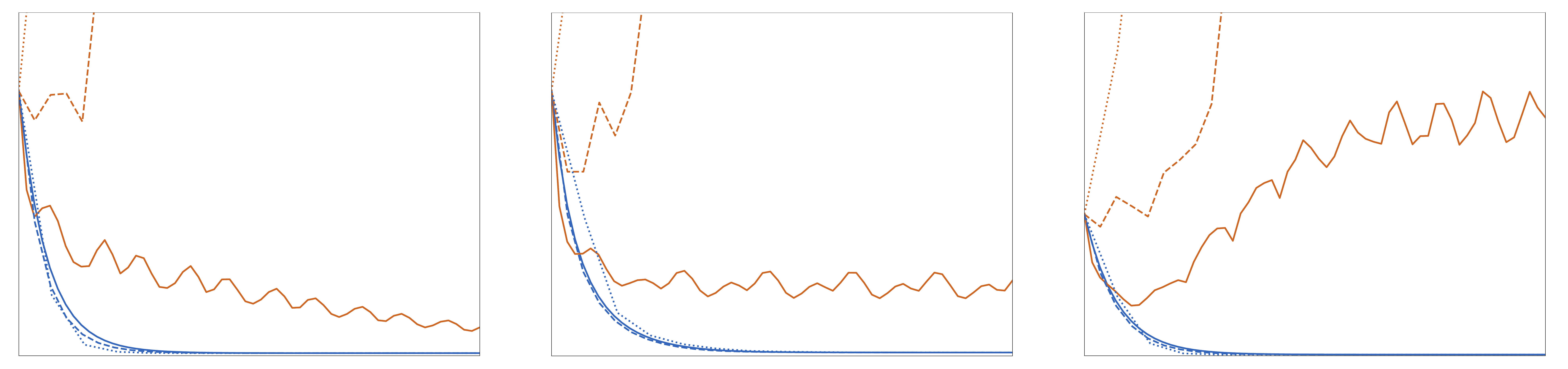}

\centering
\vspace{.2em}
    \caption{\label{fig:traces}
        {\small
        Norm of the free energy gradient $|| \DF V ||$ over 15 time units 
        for Bethe-Kikuchi diffusions (blue) and GBP diffusions (red), 
        for different values of the time step parameter $\lambda$. 
        For $\lambda = 1/4$ (solid line), 
        GBP diffusion sometimes converges in spite of oscillations; 
        for $\lambda = 1/2$ (dashed line), 
        GBP diffusion almost never converges; and for $\lambda = 1$ 
        (dotted line) GBP diffusion explodes even faster. 
        On the other hand, $\lambda$ has very little effect on the fast convergence of 
        Bethe-Kikuchi diffusion. 
        The hypergraph $K$ is the 2-horn -- join of 
        three 2-simplices $\{ (012), (013), (023) \}$ -- and coefficients 
        of the initial potential $h \in C_0$ are sampled from normal gaussians. 
    }}
\end{center}
\end{figure*} 

\begin{Definition} \label{def:proj-faithful}\label{def:faithful}
    We say that a flux functional $\Phi : C_0 \to C_1$ is
    \begin{enumerate}[label=(\alph*)]
    \item\label{item:Phi-consistent} {\em consistent} 
    at $\beta > 0$ if:
    \begin{equation}
    v \in \fix^\beta \Rightarrow \Phi(v) = 0
    \end{equation}
    \item\label{item:Phi-faithful} {\em faithful} 
    at $\beta > 0$ if moreover:
    \begin{equation} 
    \delta \Phi(v) = 0 \Leftrightarrow v \in \fix^\beta.
    \end{equation}
    \end{enumerate}
    We say that $\Phi$ is projectively consistent (resp. faithful) 
    if \ref{item:Phi-consistent} (resp. \ref{item:Phi-faithful}) 
    holds when $\Fix^\beta$ replaces $\fix^\beta$. 
    \end{Definition}

Let us now construct flux functionals 
$\Phi : C_0 \to C_1$ admissible for diffusion. 
Although consistency may be easily enforced via factorization 
through the following operator, 
as the next proposition shows, 
proving faithfulness remains a more subtle matter. 

\begin{Definition}(Free energy gradient) \label{def:DF}
Let $\DF : C_0 \to C_1$ denote the smooth functional
defined by:
\begin{equation}
\DF V_{\aa \to \bb}(x_\bb) = V_\bb(x_\bb) 
+ \ln \sum_{x_{\aa|\bb} = x_\bb} \e^{- V_\aa(x_\aa)}
\end{equation}
\end{Definition}

\begin{Proposition} \label{prop:d-D}
For all $V \in C_0$, let $q = \e^{-V} \in C_0^*$. Then 
\begin{equation}
dq = 0 \quad\Leftrightarrow\quad \DF V = 0.
\end{equation}
It follows that the 
flux functional $\Phi_f = f \circ \DF \circ \zeta$ 
is consistent for any smooth $f : C_1 \to C_1$.  
\end{Proposition}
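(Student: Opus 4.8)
The plan is to prove the biconditional $dq = 0 \Leftrightarrow \DF V = 0$ by directly comparing the two expressions componentwise, and then to derive the consistency statement about $\Phi_f$ as an immediate corollary. First I would unwind the definition of $q = \e^{-V} \in C_0^*$, meaning each component is $q_\aa(x_\aa) = \e^{-V_\aa(x_\aa)}$. The cochain differential $d : C_0^* \to C_1^*$ acts by $dq_{\aa \to \bb}(x_\bb) = q_\bb(x_\bb) - \sum_{x_{\aa|\bb} = x_\bb} q_\aa(x_\aa)$, so that the vanishing condition $dq = 0$ reads, for every arrow $\aa \to \bb$ in $N_1 K$,
\begin{equation}
\e^{-V_\bb(x_\bb)} = \sum_{x_{\aa|\bb} = x_\bb} \e^{-V_\aa(x_\aa)}.
\end{equation}

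Next I would take the logarithm of this identity. Since $\e^{-V_\bb(x_\bb)} > 0$ and the right-hand sum of positive terms is strictly positive, the logarithm is well defined on both sides, and the equation becomes $-V_\bb(x_\bb) = \ln \sum_{x_{\aa|\bb} = x_\bb} \e^{-V_\aa(x_\aa)}$, which rearranges to exactly $\DF V_{\aa \to \bb}(x_\bb) = V_\bb(x_\bb) + \ln \sum_{x_{\aa|\bb} = x_\bb} \e^{-V_\aa(x_\aa)} = 0$. Because this chain of equivalences holds pointwise for each $x_\bb$ and for each arrow $\aa \to \bb$, the componentwise vanishing of $dq$ is equivalent to the componentwise vanishing of $\DF V$, giving $dq = 0 \Leftrightarrow \DF V = 0$. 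The key analytic point justifying the step is the strict positivity of the exponential, which makes passing to the logarithm a reversible operation; this is the only place any care is needed, and it is mild.

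For the final claim, I would argue that $\Phi_f = f \circ \DF \circ \zeta$ is consistent at $\beta = 1$ in the sense of definition~\ref{def:faithful}\ref{item:Phi-consistent}. Suppose $v \in \fix$, so that by definition~\ref{def:Fix} we have $\e^{-\zeta v} \in \Ker(d)$, i.e. with $V = \zeta v$ and $q = \e^{-V}$ we have $dq = 0$. The first part of the proposition then gives $\DF V = \DF(\zeta v) = 0$, whence $\Phi_f(v) = f(\DF(\zeta v)) = f(0)$. To conclude $\Phi_f(v) = 0$ one uses that $f$ is a smooth functional mapping $C_1 \to C_1$ for which $f(0) = 0$ is implicitly assumed (as is natural for a flux); more precisely, consistency only requires that $f$ vanish at the origin, so the statement should be read with that normalization on $f$. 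The main obstacle is thus not analytic but a matter of stating precisely which normalization on $f$ the word ``any'' intends, and I would make the hypothesis $f(0) = 0$ explicit to close the argument cleanly.
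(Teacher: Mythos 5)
Your proof is correct and takes essentially the same route as the paper's: componentwise, $dq = 0$ says $\e^{-V_\bb} = \pi^{\aa \to \bb}_*(\e^{-V_\aa})$, and strict positivity of the exponential makes taking logarithms reversible, giving $\DF V = 0$. Your remark that consistency of $\Phi_f$ needs the normalization $f(0) = 0$ is a fair catch — the paper leaves this implicit, but its intended instances ($f = -\mathrm{id}$ for $\Phi_{GBP}$ and $f = -\mu$ for $\Phi_{BK}$) are linear, so the normalization holds in every case actually used.
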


\begin{proof}
For all $\aa \supset \bb$ in $K$, it is clear that 
$\e^{-V_\bb} = \pi^{\aa \to \bb}_*(\e^{-V_\aa})$ 
if and only if $V_\bb = -\ln \pi^{\aa \to \bb}_*(\e^{-V_\aa})$. 
\end{proof}

In section \ref{subsection:dynamic-beliefs}, 
we recover the discrete dynamic of GBP by using the 
flux functional $\Phi_{GBP} = - {\cal D} \circ \zeta$. 
The smooth dynamic on potentials integrates
the vector field $X_{GBP} = \delta \Phi_{GBP}$ on $C_0$,
computed by the diagram:
\begin{equation}
\begin{tikzcd} \label{eq:X-GBP}
    \hspace{-3em}X_{GBP}:\; C_0 \rar{\zeta} & C_0 \dar{-\DF} \\
    \, & C_1 \ular{\delta} 
\end{tikzcd}
\end{equation}
The flux functional $\Phi_{GBP}$ 
is faithful (at $\beta = 1$). 
See proposition \ref{prop:GBP-faithfulness} in appendix for a proof,
involving duality and monotonicity arguments 
which we already gave in \cite{gsi19} and 
~\cite{phd}.
Although seemingly optimal when $K$ is a graph, we argued 
in \cite[chap. 5]{phd} that the heat flux $\Phi_{GBP}$ introduces 
redundancies on higher-order hypergraphs, which explain 
the explosion of normalization constants. 

The {\it Bethe-Kikuchi diffusion flux} $\Phi_{BK} = 
- \mu \circ \DF \circ \zeta$ adds 
a degree-1 Möbius inversion \eqref{eq:mu-r} of GBP messages.
The isothermal vector field 
$X_{BK} = \delta \Phi_{BK}$ is then computed by:
\begin{equation}\label{eq:X-BK}
\begin{tikzcd}
    \hspace{-3.2em}X_{BK}:\;\, C_0 \rar{\zeta} & C_0 \dar{-\DF} \\
    C_1 \uar{\delta} & C_1 \lar{\mu} 
\end{tikzcd}
\end{equation}
The conjugate 
codifferential $\check{\delta} = \zeta \circ \delta \circ \mu$ is 
thus occurring in \eqref{eq:X-BK}, and one may substitute 
the result of theorem \ref{prop:mu-Gauss} to arrive 
at a very concise and efficient expression of 
the conjugate vector field $\zeta \circ X_{BK} \circ \mu$, 
governing the evolution of local hamiltonians.  

From the perspective of local hamiltonians $V = \zeta v$, 
first remark that 
in both cases 
the evolution $\dot{v} = \delta \phi$
under diffusion reads 
$\dot{V} = \zeta(\delta \phi)$, 
conveniently computed by 
the Gauss formula \eqref{eq:Gauss-0} on a cone $K^\bb$:
\begin{equation} 
\frac{dV_\bb}{dt}
= \zeta(\delta \phi)_\bb 
= \int_{dK^\bb} \phi
\end{equation}
We argue that the GBP flux $\Phi = -\DF(\zeta v) = - \DF V$ belongs to the 
"extensive" side, and should not be integrated as is on the coboundary 
of $K^\bb$. In fact, if one were able to compute the global free 
energy gradient $\DF V_{\Om \to \bb} = - \Phi_{\Om \to \bb}$ 
for all $\bb \in K$, 
the effective hamiltonians $V'_\bb = V_\bb + \Phi_{\Om \to \bb}$ 
would yield the sought for Gibbs state marginals exactly. 

Using the Bethe-Kikuchi flux $\phi = \mu \Phi$ thus allows 
to sum only "intensive" flux terms entering $K^\bb$.  
Their integral over $dK^\bb$ yields the 
Bethe-Kikuchi approximation $\check \Phi_{\Om \to \bb}$ 
of the global free energy gradient 
term $\Phi_{\Om \to \bb}$ by theorem \ref{prop:mu-Gauss}: 
\begin{equation} 
\frac {dV_\bb}{dt} 
= \int_{dK^\bb} \phi 
= \check \Phi_{\Om \to \bb} 
= \sum_{\aa \not\subseteq \bb} c_\aa \Phi_{\aa \to \aa \cap \bb}.
\end{equation}
Substituting $- \DF V$ for $\Phi$ yields, 
after small combinatorial rearrangements, the explicit formula 
\cite[thm 5.33]{phd}:
\begin{equation} \label{eq:BK-diffusion}
\frac{dV_\bb(x_\bb)}{dt}
= \sum_{\aa \in K}  c_\aa \bigg[- \ln 
\sum_{\substack{\hspace{.65em} 
      x_{\aa|\aa\cap\bb} \\ = x_{\bb|\aa \cap \bb}}}
\e^{-V_\aa(x_\aa)}  \bigg]
- V_\bb(x_\bb)
\end{equation}
In \cite{gsi21}, we empirically showed that the Bethe-Kikuchi diffusion flux
improves 
convergence and relaxes the need to normalize beliefs at each step.
Figure \ref{fig:traces} also shows that discrete integrators of $X_{BK}$ quickly converge
for large time steps $\lambda$ where integrators of $X_{GBP}$ require a value of $\lambda < 1/4$
on the 2-horn $K$ (the simplest 2-complex for which GBP is not exact). 
The flux $\Phi_{BK}$ may me proven faithful at least in a neighbourhood of 
$\fix$, see proposition \ref{prop:BK-faithfulness} in appendix.

\subsection{Adiabatic diffusion}

In order to solve the localised max-entropy principle \ref{S-crit}, 
one needs to let temperature vary so as to enforce the mean energy constraint. 
Theorem~\ref{thm2} will describe solutions as
dimensionless potentials $\bar v \in C_0$, 
satisfying both the consistency constraint~\ref{item:consistency}
at $\beta = 1$ and 
the following conservation contraint 
replacing~\ref{item:energy-conservation}:
\begin{enumerate}[label=(\roman*')]
\item \label{item:proj-energy}{\em projective energy conservation}, asking that 
$\sum_{\aa \in K} \bar v_\aa$ is fixed to a given line 
$\R H_\Om \subseteq \R^{E_\Om}$. 
\end{enumerate}
Fix a reference potential $h \in C_0$ 
and let $H_\Om = \sum_{\aa \in K} h_\aa$. 
By theorem \ref{thm:A}, constraint \ref{item:proj-energy}
is equivalent to the existence of
$\beta \in \R$ 
and $\ph \in C_1$ such that $\bar v = \beta h + \delta \ph$.
We therefore equivalently rewrite~\ref{item:proj-energy} 
as $\bar v \in \R[h] = \R h + \delta C_1$. 
Adding a radial source term to~\eqref{eq:diffusion}, 
we may enforce both \ref{item:proj-energy} and 
the mean energy constraint ${\cal U} = \E[H_\Om]$ 
as follows. 

Given an energy value $\U \in \R$ and 
a flux functional $\Phi : C_0 \to C_1$, 
we call 
\emph{adiabatic diffusion} 
an ordinary differential equation on $C_0$ of the form
\begin{equation}\label{eq:adiabatic-diffusion}
\begin{split}
\frac {d\bar v}{dt} = &\;\; \delta \Phi(\bar v) + 
\big({\cal U} - \langle p, h \rangle \big) \cdot \bar v \\[.3em]
&\;\; {\rm\; where\;} p = \rho(\zeta \bar v).
\end{split}
\end{equation}
It might seem paradoxical that the {\it adiabatic} diffusion includes 
what looks like an energy source term, when isothermal diffusion does not. 
The dynamical variable $\bar v \in C_0$ here describes 
a {\it dimensionless potential} $\bar v = \beta v$ 
given $v \in C_0$, 
measuring energies divided by an unknown temperature scale.  
The source term 
$\Psi^\U(\bar v) \in \R \bar v$ below therefore describes
a variation in temperature rather than a variation in energy.

\begin{Definition}\label{def:adiabatic-diffusion}
    Given a flux functional 
    $\Phi : C_0 \to C_1$, a potential $h \in C_0$ and a mean energy $\U \in \R$, we 
    call \emph{adiabatic diffusion}
    the vector field $Y^\U_\Phi : C_0 \to C_0$ defined by:
    \begin{equation}\label{eq:Y-Phi-U}
    Y^\U_\Phi(\bar v) = \delta \Phi( \bar v) 
    + \Psi^\U(\bar v)
    \end{equation} 
    \begin{equation}\label{eq:Psi-U}
    \Psi^\U(\bar v) = 
    \big[\U - \langle \rho(\zeta \bar v), h \rangle \big] \cdot \bar v
    \end{equation}
    It follows that $Y^\U_\Phi = X^1_\Phi + \Psi^\U$
    relates isothermal and adiabatic diffusions. 
\end{Definition}

The dependency in $h \in C_0$ is implicit in 
\eqref{eq:Y-Phi-U} and \eqref{eq:Psi-U}. 
It cannot be absorbed in the definition of $\Psi^\U$ as 
$\bar v \in C_0$
is unaware of the reference energy scale, 
and the pair $(\U, h) \in \R \times C_0$ is in fact necessary to 
define the mean energy constraint. 
One may write $\Psi^{\U, h}$ and $Y^{\U, h}_\Phi$ 
when the dependency in $h$ ought to be made explicit, 
although $h$ can be assumed fixed for present purposes. 

As in the isothermal case, it is necessary to restrict the 
flux functional $\Phi$ to enforce the belief consistency constraint 
\ref{item:consistency} at equilbrium. 
The same flux functionals may be used for both isothermal and adiabatic 
diffusions. 

\begin{Proposition}
Assume $\Phi : C_0 \to C_1$ is faithful. 
Then for all $\U \in \R$, the adiabatic 
vector field $Y^\U_\Phi$ is stationary on $\bar v \in C_0$ 
if and only if $p = \rho(\zeta \bar v)$ is consistent
and $\langle p, h \rangle = \U$. 
\end{Proposition}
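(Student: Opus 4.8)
The plan is to exploit the linear decomposition
$Y^\U_\Phi(\bar v) = \delta\Phi(\bar v) + c\,\bar v$, where
$c = \U - \langle \rho(\zeta \bar v), h\rangle \in \R$ is a scalar. The two
summands sit in transverse positions: the flux term $\delta\Phi(\bar v)$ always
lies in $\Img(\delta) = \Ker(\zeta_\Om)$ by the exactness of the complex
$(C_\bullet,\delta)$ recorded around theorem \ref{thm:A}, whereas the radial term
$c\,\bar v$ has total hamiltonian $\zeta_\Om(c\,\bar v) = c\,\zeta_\Om(\bar v)$.
Applying the quotient map $\zeta_\Om : C_0 \to \R^{E_\Om}$ to the stationarity
equation therefore separates the two constraints cleanly: the radial part yields
the mean-energy constraint, and the flux part yields consistency.

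First I would treat the forward implication. Assuming
$Y^\U_\Phi(\bar v) = 0$, i.e. $\delta\Phi(\bar v) = -c\,\bar v$, I apply
$\zeta_\Om$. Since the left-hand side lies in $\Ker(\zeta_\Om)$, this forces
$c\,\zeta_\Om(\bar v) = 0$ in $\R^{E_\Om}$. On the relevant constraint surface the
total hamiltonian $\zeta_\Om(\bar v) = \sum_{\aa\in K}\bar v_\aa$ is a nonzero
observable (the regime of finite, nonzero inverse temperature), so $c = 0$; this
is exactly the mean-energy constraint $\langle \rho(\zeta \bar v), h\rangle = \U$.
With $c = 0$ the equation reduces to $\delta\Phi(\bar v) = 0$, and faithfulness of
$\Phi$ (definition \ref{def:faithful}) gives $\bar v \in \fix$, i.e.
$\e^{-\zeta \bar v} \in \Ker(d)$ by definition \ref{def:Fix} and proposition
\ref{prop:d-D}. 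Finally, since each marginal projection $\pi^{\aa\to\bb}_*$
preserves total mass, consistency of the unnormalized densities forces the local
partition functions $Z_\aa$ to coincide, so the normalized beliefs
$p = \rho(\zeta\bar v)$ are themselves consistent, $p\in\Gamma_0$.

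For the converse I would run the same decomposition backwards: if
$p = \rho(\zeta\bar v)$ is consistent and $\langle p, h\rangle = \U$, then $c = 0$,
so $\Psi^\U(\bar v) = 0$ and $Y^\U_\Phi(\bar v) = \delta\Phi(\bar v)$; it remains
to show this vanishes, which by faithfulness amounts to $\bar v \in \fix$. Here
lies the main obstacle. It is not the degenerate case $\zeta_\Om(\bar v) = 0$
(readily excluded on the constraint manifold), but rather the normalization gap
between $\fix$ and $\Fix$: consistency of the normalized state $p$ places
$\bar v$ only in the larger projectively consistent manifold
$\Fix = \fix + R_\bullet$, and one must quotient by the partition-function gauge
$R_\bullet$ to descend to $\fix$. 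The clean way to close this is to record that
$\rho\circ\zeta$ is invariant under the additive-constant subcomplex $R_\bullet$,
so that $p\in\Gamma_0$ is equivalent to $\bar v\in\fix$ up to $R_\bullet$ and
faithfulness may be applied to the normalized representative; alternatively one
invokes the projective form of faithfulness. Threading this normalization
bookkeeping carefully, while keeping the scalar $c$ and the flux
$\delta\Phi(\bar v)$ in their transverse subspaces, is the crux of the argument.
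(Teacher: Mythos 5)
Your proof is correct and is essentially the paper's argument: the stationarity equation is split along the transverse pair $\delta C_1$ and $\R\,\bar v$ --- the paper phrases this as the direct sum decomposition $\delta C_1 \oplus \R\,\bar v$, valid whenever $\bar v$ is not a boundary, while you detect the same decomposition by applying $\zeta_\Om$, whose kernel is exactly $\delta C_1$ by theorem \ref{thm:A}; the two formulations coincide, including the non-degeneracy caveat. The normalization gap you flag in the converse is also genuine, and the paper closes it exactly by your second alternative: its proof silently upgrades the hypothesis from ``faithful'' to ``projectively faithful'' (definition \ref{def:faithful}), which is what the correspondence theorems of section \ref{section:Equilibria} assume anyway. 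One warning: your first suggested patch --- passing to a normalized representative in $\fix$ and applying strict faithfulness there --- does not close the gap, because fluxes of the form $\Phi_{GBP} = -\DF\circ\zeta$ are only $R_0$-equivariant, not $R_0$-invariant (an $R_0$-shift of $V$ produces an $R_1$-shift of $\DF V$), so the vanishing of $\delta\Phi$ at the normalized representative does not force its vanishing at $\bar v$ itself; the projective-faithfulness route is the one to keep.
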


\begin{proof}
Note that the r.h.s. of \eqref{eq:adiabatic-diffusion}
naturally decomposes into the direct sum $\delta C_1 \oplus \R \bar v {(t)}$ 
for all $t \in \R$, 
whenever the initial potential $h = \bar v {(0)}$ is not a boundary 
of $C_\bullet$ (in that case, total and mean energies vanish). 
It follows that 
$\bar v(t)$ is always homologous to a multiple of $h$, and 
that $\sum_{\aa \in K} \bar v_\aa(t)$ is always equal to 
a multiple of
$H_\Om = \sum_{\aa \in K} h_\aa$ by theorem \ref{thm:A}, 
and that $\bar v(t)$ satisfies 
the projective energy conservation constraint \ref{item:proj-energy}.  
When $\Phi$ is projectively faithful, 
the direct sum decomposition furthermore implies that adiabatic diffusion 
is stationary on $v \in C_0$ 
if and only if local Gibbs states $p = \rho(\zeta v)$ 
satisfy both the mean energy constraint 
$\langle p, v \rangle = {\cal U}$ and 
the belief consistency constraint \ref{item:consistency}, $dp = 0$.  
\end{proof}

The inverse temperature $\beta = T^{-1}$ of the system will be 
defined at equilibrium as Lagrange multiplier 
of the mean energy constraint,
by criticality of entropy 
(see e.g. \cite{Jaynes-57, Marle-16} and section \ref{section:Equilibria} below).
As we shall see, 
the Bethe-Kikuchi entropy may have multiple critical points, 
and therefore multiple values of temperature may coexist at
a given value of internal energy.

\subsection{Dynamic on beliefs} \label{subsection:dynamic-beliefs}

Let us now describe the dynamic on beliefs $q \in \Delta_0$ induced 
by isothermal and adiabatic diffusions on $C_0$. They both regularize and 
generalize the GBP algorithm of \cite{Yedidia-2005}. 
In the following, we denote by $R_\bullet \subseteq C_\bullet$ the 
subcomplex spanned by local constants. 

Starting from a potential $h \in C_0$, 
beliefs are recovered through a Gibbs state map 
$\rho^\beta \circ \zeta$ for some $\beta \in \R_+$.  
The affine subspace $[h] = h + \delta C_1$ satisfying 
\ref{item:energy-conservation} is therefore mapped 
to a smooth
submanifold of $\Delta_0$,
\begin{equation} \label{eq:Beliefs-h}
\Beliefs_{\beta h} := \rho^\beta \big(\zeta [h] \big)  \subset \Delta_0.
\end{equation}
On the other hand, the set of consistent beliefs $\Gamma_0 \subseteq \Delta_0$ 
satisfying~\ref{item:consistency} 
is a convex polytope of $C_0^*$ (whose 
preimage under $\rho^\beta \circ \zeta$ is $\Fix^\beta$).  
Conservation~\ref{item:energy-conservation} and
consistency~\ref{item:consistency} constraints
thus describe the intersection 
$\Beliefs_{\beta h} \cap \Gamma_0$ when viewed in $\Delta_0$. 
The non-linearity has shifted from \ref{item:consistency} 
to \ref{item:energy-conservation} in this picture. 

In order to recover well-defined dynamical systems on $\Delta_0$, 
trajectories of potentials $v \in C_0$ under diffusion 
should be assumed to consistently project on the quotient 
$C_0 / R_0$, i.e. to not depend on the addition of local energy constants 
$\lambda \in R_0 \simeq \R^K$.  
Classes of potentials $v + R_0$ are indeed in one-to-one correspondence 
with classes of local hamiltonians $\zeta(v + R_0) = V + R_0$, 
themselves in one-to-one correspondence with 
the local Gibbs states $q = \rho^\beta(V) = \frac 1 Z \e^{-\beta V}$.
Both GBP and Bethe-Kikuchi diffusions naturally define a dynamic on $C_0 / R_0 \simeq \Delta_0$. 

\begin{figure}[H]
\begin{center}

\includegraphics[width=.45\textwidth]
{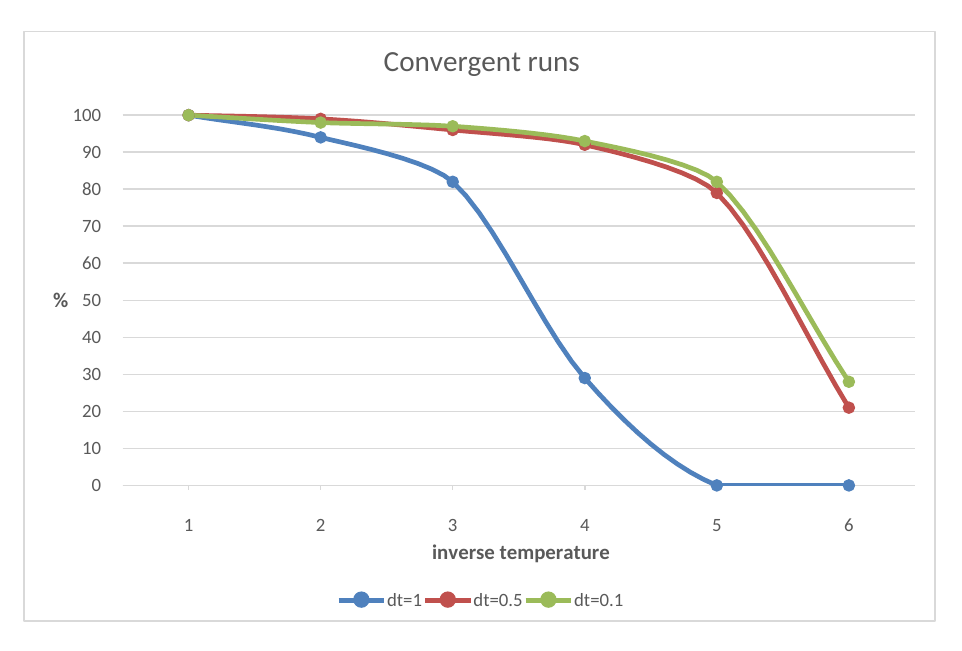}

\centering
\vspace{.2em}
    \caption{\label{fig:benchmark}{\small
        Effect of the time step parameter $\lambda$ on 
        convergence of diffusions on a 50x50 Ising lattice. 
        For $\lambda = 1$ (blue), we recover the GBP algorithm. 
        Results are very similar for $\lambda = 1 / 2$ (red) and $\lambda = 1 / 10$ 
        (green). Coefficients of the 100 initial batched hamiltonians $H \in C_0$
        are sampled from ${\cal N}(0, 1)$.  
        Note that $1 - {\rm tanh}(\beta)$
        estimates the typical distance to the boundary of $\Delta_0$ and  
        $1 - {\rm tanh}(6) \simeq 10^{-5}$ brings numerical instabilities. 
    }}
\end{center}
\end{figure} 

Let us first relate the
GBP equations \eqref{eq:GBP-beliefs} and \eqref{eq:GBP-messages} 
to a discrete integrator of the GBP diffusion flux $X_{GBP}$, 
defined by \eqref{eq:X-GBP}.
The isothermal vector field $X_{GBP} = \delta 
\Phi_{GBP}$ can be approximately 
integrated by the simple Euler scheme
$v^{(t+ \lambda)} = (1 + \lambda X_{GBP}) \cdot v^{(t)}$, which yields
for $q = \rho(\zeta v)$
\begin{equation}\label{eq:GBP2-beliefs}
    q^{(t + \lambda)}_\bb(x_\bb) \propto q_\bb^{(t)}(x_\bb) \cdot
    \:\prod_{\substack{
            \aa \supseteq \cc\\[.2em]
            \aa \not\subseteq \bb}}
            \big[ m^{(t)}_{\aa\to\cc}(x_{\bb | \cc}) \big]^\lambda,
    \end{equation}
    \begin{equation}\label{eq:GBP2-messages}
    m^{(t)}_{\aa \to \cc}(x_\cc)
    = \frac
    {\displaystyle \sum_{x_{\aa|\cc} = x_\cc} q_\aa^{(t)}(x_\aa)}
    {q^{(t)}_\cc(x_\cc)}
    \end{equation}
Exponential versions of the Gauss formula 
\eqref{eq:Gauss-0} should be recognized in 
\eqref{eq:GBP-beliefs} as in \eqref{eq:GBP2-beliefs}.
Remark that the message term
$m^{(t)} = e^{-\Phi_{GBP}(v)} = \e^{-\DF V}$ 
of \eqref{eq:GBP2-messages}
is equal to the geometric increment of messages 
$M^{(t+1)} / M^{(t)}$ computed by the GBP update rule \eqref{eq:GBP-messages}, 
so that the two algorithms coincide for $\lambda = 1$. 
The parameter $\lambda$ otherwise appears as exponent of messages 
in \eqref{eq:GBP2-beliefs}, 
leaving \eqref{eq:GBP2-messages} unchanged.  

Because one is usually only interested in the long-term behaviour of 
diffusion and its convergence to stationary states, the parameter
$\lambda$ usually does not have to be infinitesimal. 
Figure \ref{fig:benchmark} shows the effect of varying $\lambda$ 
on a 50x50 Ising lattice. 
Choosing $\lambda \simeq 1/2$ instead of $1$ 
already greatly improves convergence \cite{gsi21}, 
making belief diffusions suitable for a wider class of initial conditions
than plain GBP. For low temperatures (high $\beta$), numerical instabilities 
occur that are not escaped by values of $\lambda \simeq 0.1$, as $q$ reaches 
distances of order $10^{-5}$ from the boundary of $\Delta_0$.  
Perhaps higher-order integration schemes would be beneficial in this regime.

On hypergraphs, the Bethe-Kikuchi diffusion flux $\Phi_{BK} = \mu \circ \Phi_{GBP}$
also behaves better than $\Phi_{GBP}$ (the evolution of beliefs 
is otherwise identical when $K$ is a graph). 
Integrating \eqref{eq:BK-diffusion} with a time step $\lambda \geq 0$ 
yields the following dynamic on $\Delta_0$:
\begin{equation}\label{eq:BK-beliefs}
    q_\bb^{(t+\lambda)} \propto 
    q_\bb^{(t)} \cdot \prod_{\aa \in K} 
    \Big[ m^{(t)}_{\aa \to \aa \cap \bb} \Big]^{\lambda c_\aa}
\end{equation}
\begin{equation}\label{eq:BK-messages}
m^{(t)}_{\aa \to \cc} = \frac 
{\pi_*^{\aa \to \cc}(q_\aa^{(t)})} {q_\cc^{(t)}}
\end{equation}
Note that Möbius inversion of messages
allows for convergent unnormalised algorithms.
Replacing the proportionality sign $\propto$ of \eqref{eq:BK-beliefs} 
by an equality assignment, the algorithm will converge to 
unormalized densities $q_\aa$ whose masses 
$\pi^{\aa \to \varnothing}_*(q_\aa)$ are all equal to $q_\varnothing$,  
the empty region taking care of harmonizing normalization constants. 

Adiabatic diffusions, on the other hand, 
only enforce the projective energy constraint \ref{item:proj-energy}, 
so as to fix an internal energy value $\U \in \R$. 
Energy scales acting as exponents in the multiplicative Lie group $\Delta_0$,
and the simple Euler scheme of \eqref{eq:adiabatic-diffusion} 
with the Bethe-Kikuchi diffusion flux yields:
$\Phi_{BK}$ yields 
\begin{equation} 
\begin{split}
    q_\bb^{(t+\lambda)} = &{\frac 1 {Z_\bb}}\,
    \Big[ q_\bb^{(t)} \Big]^{1 + \lambda \psi} \cdot \prod_{\aa \in K} 
    \Big[ m^{(t)}_{\aa \to \aa \cap \bb} \Big]^{\lambda c_\aa} \\
    &{\rm where\;}
    \psi = \U - \langle q^{(t)}, h \rangle.
\end{split}
\end{equation}
The messages remain given by \eqref{eq:BK-messages}, 
they still compute the exponential of the free energy gradient 
$\e^{-\DF \bar V}$ for $\bar V = \zeta \bar v$.

From a technical perspective, 
working with beliefs (bounded between 0 and 1) may help avoiding the numerical 
instabilities one may encounter with {\tt logsumexp} functions. However,
optimizing the product of messages computed by the exponential Gauss formula 
\eqref{eq:GBP2-beliefs} is not straightforward. 
Working with potentials and hamiltonians instead allows to use sparse matrix-vector
products to effectively parallelize diffusion on GPU, 
as implemented in the \href{https://github.com/opeltre/topos}{topos} library. 

\section{Message-Passing Equilibria} \label{section:Equilibria}

Let us now show that problems \ref{S-crit}, \ref{varF-crit} and \ref{F-crit}
are all equivalent to solving both \emph{local consistency constraints} 
and \emph{energy conservation constraints}. 
In other words, given a reference potential $h \in C_0$, all three problems
are equivalent to finding $v \in [h] \cap \Fix^\beta$.  

It will be more informative to study $[h] \cap \fix^\beta$, 
resolving the degeneracy of additive constants acting on $\Fix^\beta$.   
It may happen that this intersection is everywhere transverse, as is the 
case on acyclic or {\it retractable} hypergraphs, where true marginals 
may be computed in a finite number of steps \cite[thm. 6.26]{phd}.
In general, theorem \ref{thm:Sigma} shows that the singular subspace $\bar{\mathcal{S}_1} \subseteq \fix$
where both constraint surfaces are tangent can be described by polynomial equations 
on the convex polytope $\Gamma_0 \subseteq \Delta_0$. 
This important step towards a more systematic study of GBP equilibria 
suggests the existence of phase transitions, between both 
different regimes of diffusion 
and different free energy landscapes in Bethe-Kikuchi approximations. 

\begin{figure}[H]
\begin{center}

\includegraphics[width=.35\textwidth]
{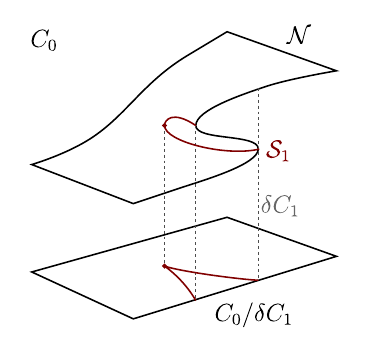}

    \caption{\label{fig:singularity}{\small
    Cuspidal singularity of the Bethe-Kikuchi variational free energy (red dot). 
    The stationary manifold $\fix$ is tangent to the space of gauge transformations 
    $\delta C_1$ on the singular subspace $\mathcal{S}_1$ (red line).
    At the cusp, $\mathcal{S}_1$ is tangent to both $\fix$ and $\delta C_1$. 
    }}
\end{center}
\end{figure} 

\subsection{Correspondence theorems}

The difference between problems 
\ref{S-crit}, \ref{varF-crit} and \ref{F-crit}  
mostly consists in which sets of variables 
are viewed as constraints and which are viewed as degrees of freedom. 
The max-entropy principle \ref{S-crit} treats $\beta$ as a free variable 
and ${\cal U}$ as a constraint,
while free energy principles \ref{varF-crit} and \ref{F-crit} 
treat $\beta$ as a constraint and 
${\cal U}$ as a free variable. 
Solutions to problems \ref{S-crit}, \ref{varF-crit} and \ref{F-crit} 
still share a common geometric description, involving intersections 
of the form $[h] \cap \Fix^\beta$ for a given potential $h \in C_0$. 

Another distinction emerges from the local structure 
of Bethe-Kikuchi approximations. It will be reflected in the 
duality between beliefs $p \in \Delta_0$ and potentials $v \in C_0$, 
and between the differential $d : C_0^* \to C_1^*$ 
and the codifferential $\delta : C_1 \to C_0$. Bethe-Kikuchi 
principles thus become reminiscent of harmonic problems, 
yet the non-linear mapping $\rho \circ \zeta : C_0 \to \Delta_0$ 
will allow for singular intersections of the 
two constraint surfaces, and multiple numbers of stationary states 
(see figure \ref{fig:singularity}).  

In the statements of theorems \ref{thm:S}, \ref{thm:varF} and \ref{thm:F} 
below, we assume a {\it projectively faithful} flux functional $\Phi : C_0 \to C_1$
(see definition \ref{def:faithful}) is chosen. 
The faithful flux $\Phi_{GBP}$ will fix $\fix^\beta$ 
instead of $\Fix^\beta$, but enforcing belief normalization 
at each step would turn it into a projectively faithful flux. 
We spare the reader with these technical details, 
covered in more details in \cite[chap. 5]{phd}.

Proofs are found in appendix \ref{apx:correspondence}. 
Recall that notations $[h] = h + \delta C_1$ 
and $[\R h] = \R h + \delta C_1$ stand for (lines of)
homology classes.

\begin{MyTheorem} \label{thm2}\label{thm:S}
    Let $\U \in \R$ and $h \in C_0$.
    Denoting by $A^{\cal U}\subseteq C_0^*$ 
    the affine hyperplane where
    $\langle p, h \rangle = {\cal U}$, 
    problem~\ref{S-crit} is equivalent to:
    \begin{equation}
    \begin{split}
    &\frac{\partial \check S(p)}{\partial p}
    \Big|_{\Tg_p (\Gamma_0 \cap\, A^{\cal U})} = 0 \\[.3em]
    \quad\eqvl\quad &
    p = \rho^1(\zeta \bar v) \in A^\U
    {\rm\;with\;} \
    \bar v \in [\R h] \cap \Fix^1
    \end{split} 
    \end{equation}
    In particular, critical potentials 
    $\bar v \in C_0$ coincide with fixed points
    of the adiabatic diffusion $Y_{\Phi}^{\cal U}$ restricted to ${[\R h]}$.
\end{MyTheorem}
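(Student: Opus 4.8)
The plan is to set up the constrained optimization for problem~\ref{S-crit} using Lagrange multipliers on the tangent space $\Tg_p(\Gamma_0 \cap A^{\cal U})$, and then to show that the Lagrange stationarity conditions translate, under the Gibbs correspondence $\rho^1 \circ \zeta$, into exactly the two constraint conditions $\bar v \in [\R h]$ and $\bar v \in \Fix^1$. I would proceed in two directions, establishing the equivalence as a genuine biconditional.

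**First**, I would compute the differential $\partial \check S / \partial p$ restricted to the admissible tangent directions. Recall that $\check S(p) = \sum_\aa c_\aa S_\aa(p_\aa)$, so its gradient at $p$ has components $-c_\aa(\ln p_\aa + 1_\aa)$ in each $\R^{E_\aa *}$. The tangent space to the feasible set is $\Tg_p \Gamma_0 \cap \Ker\langle \cdot, h\rangle$, where $\Tg_p \Gamma_0 = \Ker(d) \cap \Tg_p\Delta_0$ encodes consistency and the second factor enforces the energy constraint $\langle p, h\rangle = \U$. Stationarity means the entropy gradient is orthogonal to this subspace, i.e. it lies in the orthogonal complement, which is spanned by (a) the image of $\delta : C_1 \to C_0$ dual to $\Ker(d)$, (b) the normalization constants $R_0$, and (c) the multiplier direction $\R h$ for the energy constraint. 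The key algebraic identity I would exploit is that $\check S$ is, up to the coefficients $c_\aa$, the Bethe-Kikuchi functional whose gradient is controlled by $\zeta$ and $\mu$; concretely, the inclusion-exclusion relation $\sum_{\aa \supseteq \bb} c_\aa = 1$ from corollary~\ref{cor:BK-coeffs} is what makes the $c_\aa$-weighted gradient collapse correctly when paired against $\delta$-coboundaries.

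**Second**, I would read off the multiplier equation. Stationarity gives $-c_\aa(\ln p_\aa + 1) = (\delta\ph)_\aa + \lambda_\aa + \beta h_\aa$ for some $\ph \in C_1$, constants $\lambda \in R_0$, and a scalar $\beta \in \R$ (the energy-constraint multiplier). Writing $p = \rho^1(\zeta \bar v)$ means $\ln p_\aa = -(\zeta\bar v)_\aa - \ln Z_\aa$, and after applying $\mu = \zeta^{-1}$ and absorbing the $c_\aa$-weights via the dual Möbius relation, this rearranges to $\bar v = \beta h + \delta\ph' \pmod{R_0}$ for a suitable $\ph' \in C_1$, which is exactly $\bar v \in [\R h]$. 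Simultaneously, the fact that $p$ is admissible forces $dp = 0$, i.e. $p \in \Gamma_0$, which by definition~\ref{def:Fix} is $\bar v \in \Fix^1$; the constraint $\langle p, h\rangle = \U$ places $p \in A^\U$. The converse direction is the same computation run backwards, using faithfulness of $\Phi$ to guarantee that $\Fix^1$ is genuinely the consistency locus. The final clause, identifying critical $\bar v$ with fixed points of $Y_\Phi^\U$ on $[\R h]$, follows immediately from the preceding proposition: $Y_\Phi^\U(\bar v) = \delta\Phi(\bar v) + \Psi^\U(\bar v)$ vanishes iff $\Phi$ is stationary (giving consistency via faithfulness, hence $\bar v \in \Fix^1$) and the source term vanishes (giving $\langle p, h\rangle = \U$), while the dynamics stay in $[\R h]$ by the direct-sum argument $\delta C_1 \oplus \R\bar v$ already established.

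**The main obstacle** I anticipate is the bookkeeping that turns the $c_\aa$-weighted entropy gradient into a clean coboundary condition. The coefficients $c_\aa$ obstruct the naive orthogonality argument, and one must verify that pairing $\sum_\aa c_\aa(\ln p_\aa + 1)$ against a coboundary $\delta\ph$ reproduces the correct global quantity — this is precisely where the Greene/Gauss formula (theorem~\ref{thm:Greene}) and the duality $\langle \delta\ph, p\rangle = \langle \ph, dp\rangle$ must be invoked, so that consistency $dp = 0$ kills the relevant cross-terms and the multiplier structure closes up. Getting the interplay between $\zeta$, $\mu$, the $c_\aa$, and the dual pairing exactly right, while correctly quotienting by the constants $R_0$ that distinguish $\Fix^1$ from $\fix^1$, is the delicate core of the argument.
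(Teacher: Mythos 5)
Your overall skeleton matches the paper's proof: the same Lagrange-multiplier setup, the same identification of the multiplier spaces $\delta C_1$ (consistency, via $d = \delta^*$), $R_0$ (normalization) and $\R h$ (mean energy), and the same use of faithfulness plus the direct-sum decomposition $\delta C_1 \oplus \R \bar v$ for the final clause on adiabatic fixed points. But there is a genuine gap at exactly the step you yourself flag as the delicate core, and the tools you propose there do not close it. Criticality gives $c\bar V \in \R h + \delta C_1 + R_0$, where $\bar V_\bb = -\ln p_\bb$ and $\bar V = \zeta\bar v$; what must be shown is that this is equivalent to $\bar v = \mu \bar V \in \R h + \delta C_1 + R_0$. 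Your proposed maneuver --- ``applying $\mu = \zeta^{-1}$ and absorbing the $c_\aa$-weights via the dual M\"obius relation'' --- fails as stated: applying $\mu$ to the multiplier equation destroys its structure, since $\mu(c\bar V) \neq \mu\bar V$, $\mu h \neq h$, and $\mu$ does not even map $\delta C_1$ into itself (by theorem \ref{thm:A}, $\delta C_1$ is the space of zero-sum potentials, while $\sum_\aa \mu(w)_\aa = \sum_\bb c_\bb w_\bb$ by \eqref{eq:U_Bethe}, and $c$-weighted sums of coboundaries do not vanish). Likewise the pairing argument you sketch (Greene formula plus $\langle \delta\ph, p\rangle = \langle \ph, dp\rangle = 0$) only re-derives the identification of the annihilator of $\Ker(d)$ with $\Img(\delta)$, which you already have; it does not convert the $c$-weighted gradient into a statement about $\bar v$.

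The paper closes this gap with lemma \ref{lemma1}: there exists a linear $\Psi : C_0 \to C_1$ with $c - \mu = \delta \Psi$, so that $c\bar V$ and $\bar v = \mu\bar V$ are homologous and the two membership conditions agree modulo $\delta C_1$. That operator identity is proved from the exactness of the Bethe-Kikuchi energy \eqref{eq:U_Bethe} (itself a consequence of corollary \ref{cor:BK-coeffs}, so your instinct about inclusion-exclusion points in the right direction) together with the Gauss theorem \ref{thm:A}; it is not obtained by inverting $\zeta$ on the critical equation, and the paper explicitly notes it is subtle despite its apparent simplicity. A second, smaller inaccuracy: $\bar v \in \R h + \delta C_1 + R_0$ is \emph{not} ``exactly $\bar v \in [\R h]$''. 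One must still decompose $R_0 = \R \oplus \delta R_1$ (theorem \ref{thm:A} applied to $R_\bullet$), absorb $\delta R_1$ into $\delta C_1$, and then kill the residual global constant by passing to the representative $\bar v' \in \bar v + \R$ normalized by $\langle p, \bar v'\rangle = \beta\,\U$, using $\langle p, \delta C_1 \rangle = 0$; only then does the critical belief determine a potential genuinely lying in $[\R h] \cap \Fix^1$.
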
  

Note that any $v \in [h] \cap \Fix^\beta$ yields 
a reduced potential $\bar v = \beta v$ lying in $[\beta h] \cap \Fix^1$
by lemma \ref{apx:fix-beta}, and therefore a solution 
to problem \ref{S-crit} of energy 
$\U = \langle \rho(\zeta \bar v), h \rangle$. 

The form of theorem~\ref{thm2} is preferred because
it involves a simpler intersection problem in $C_0$:  
between the linear subspace $[\R h] = \R h + \delta C_1$,
the consistent manifold $\Fix^1$ that does not depend on $\beta$, 
and the non-linear mean energy constraint 
$\langle \rho^1(\zeta \bar v), h\rangle = \U$.
The possible multiplicity of solutions for a given energy 
constraint $\U$ may occur for different values of the Lagrange 
multiplier $\beta$.

\begin{MyTheorem} \label{thm1}\label{thm:varF}
    Let $\beta > 0, h \in C_0$.
    Problem \ref{varF-crit} is equivalent to:
    \begin{equation}
    \begin{split}
    &\frac{\partial \check\varF^\beta(p, H)}
          {\partial p}\Big|_{\Tg_p \Gamma_0} = 0 \\[.3em]
        \quad\eqvl\quad &
        p = \rho^\beta(\zeta v) 
        {\rm\;with\;} v \in [h] \cap \Fix^\beta
    \end{split}
    \end{equation}
    In particular, critical potentials 
    $v \in C_0$ coincide with fixed points
    of the isothermal diffusion $X_{\Phi}^{\beta}$ restricted to $[h]$.
\end{MyTheorem}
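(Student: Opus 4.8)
The plan is to pass through the Lagrangian characterization of constrained criticality and then to identify the resulting stationarity equation with the Gibbs form $p = \rho^\beta(\zeta v)$ by means of the Bethe--Kikuchi identity \eqref{eq:U_Bethe}. First I would exploit that the consistency constraint cutting out $\Gamma_0 = \Delta_0 \cap \Ker(d)$ is \emph{linear}, so that at a positive (hence interior) belief $p$ the tangent space is $\Tg_p\Gamma_0 = \Ker(\mathrm{mass}) \cap \Ker(d)$, where $\mathrm{mass}$ records the per-region masses. Taking annihilators in finite dimension and using the adjunction $\delta = d^*$ relating $\delta : C_1 \to C_0$ and $d : C_0^* \to C_1^*$, the criticality condition $\partial_p\check\varF^\beta|_{\Tg_p\Gamma_0} = 0$ becomes the single membership
\[ g \;:=\; \frac{\partial\check\varF^\beta(p,H)}{\partial p} \;\in\; \delta C_1 + R_0, \]
where $R_0 \subseteq C_0$ is the subcomplex of local constants. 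This is the only place where the variational problem enters; everything afterwards is algebra on $(C_\bullet, \delta, \zeta)$.

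The second step is to make $\delta C_1 + R_0$ explicit through the Gauss theorem \ref{thm:A}. Since $\Ker(\Sigma) = \delta C_1$ for the global-sum map $\Sigma : g \mapsto \sum_{\aa} g_\aa$, and $\Sigma(R_0) = \R\cdot 1_{E_\Om}$, one gets $\delta C_1 + R_0 = \Sigma^{-1}(\R\,1_{E_\Om})$, i.e. $g \in \delta C_1 + R_0$ iff $\Sigma(g)$ is a global constant. I would then compute $g$ directly: writing $V_\aa = -\tfrac1\beta\ln p_\aa$ (so that $p = \rho^\beta(V)$ with every local partition function equal to $1$) and $v = \mu V$, a one-line differentiation of $\check\varF^\beta$ gives $g_\aa = c_\aa\,\big(\zeta(h-v)\big)_\aa + \tfrac{c_\aa}{\beta}$ with $H = \zeta h$. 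The crucial cancellation is the Bethe--Kikuchi identity \eqref{eq:U_Bethe}, which collapses the weighted sum $\sum_\aa c_\aa(\zeta w)_\aa$ to the \emph{unweighted} sum $\sum_\aa w_\aa$ for $w = h - v$; this is exactly what sidesteps the obstruction of dividing by the possibly vanishing coefficients $c_\aa$. Since the additive $\tfrac{c_\aa}{\beta}$ lies in $R_0$, $\Sigma(g)$ is constant iff $\sum_\aa(h-v)_\aa$ is constant, i.e. iff $v \in [h]$ up to an additive element of $R_0$ — which is invisible to $\rho^\beta\circ\zeta$ since both $\zeta$ and $\rho^\beta$ kill constants. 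Choosing the representative of $v$ in $[h]$, and recalling that $p \in \Gamma_0$ is precisely $v \in \Fix^\beta$, closes the equivalence with $v \in [h]\cap\Fix^\beta$ and $p = \rho^\beta(\zeta v)$.

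For the \emph{in particular} clause I would invoke faithfulness. Because the isothermal field $X_\Phi^\beta(v) = \tfrac1\beta\,\delta\Phi(\beta v)$ lands in $\delta C_1$, its flow preserves the homology class $[h]$, so its fixed points in $[h]$ are the zeros of $v \mapsto \delta\Phi(\beta v)$ inside $[h]$. Projective faithfulness of $\Phi$ at $\beta = 1$ (Definition \ref{def:faithful}) gives $\delta\Phi(\beta v) = 0 \Leftrightarrow \beta v \in \Fix^1 \Leftrightarrow v \in \Fix^\beta$, using $\Fix^\beta = \beta^{-1}\Fix^1$; the fixed-point set is therefore exactly $[h]\cap\Fix^\beta$, matching the critical set just found.

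I expect the main obstacle to be the careful bookkeeping of the two quotients that appear — the additive constants $R_0$ (which the belief map forgets but which the affine class $[h]$ does not) and the coefficients $c_\aa$ (which may be zero or negative, so the stationarity equation cannot simply be solved region by region for $p_\aa$). The resolution in both cases is structural rather than computational: Gauss's theorem \ref{thm:A} trivializes the $R_0$-ambiguity by reducing membership in $\delta C_1 + R_0$ to a single global-constant condition, while the Bethe--Kikuchi identity \eqref{eq:U_Bethe} removes the $c_\aa$ from the decisive sum. A minor point still to verify is the constraint qualification making $\Gamma_0$ a genuine manifold near $p$; this is automatic since $p$ is interior to the open simplex $\Delta_0$ and the remaining constraint $dp = 0$ is linear.
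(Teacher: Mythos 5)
Your proposal is correct, and its skeleton is the one the paper uses for all three correspondence theorems: Lagrange multipliers via the adjunction $d = \delta^*$ (giving $\partial_p \check\varF^\beta \in \delta C_1 + R_0$ at a critical interior point), an explicit derivative of the $c$-weighted functional, and then a conversion of the $c$-weighted stationarity equation into a statement about $v = \mu V$ without ever dividing by the coefficients $c_\aa$. The genuine difference lies in how that last conversion is done. The paper (whose proof of Theorem \ref{thm:varF} follows the template of Theorem \ref{thm:S}) invokes Lemma \ref{lemma1}, the operator identity $c - \mu = \delta \Psi$, to assert that $cV$ and $\mu V$ are homologous, and then absorbs the multiplier constants through the splitting $R_0 = \R \oplus \delta R_1$. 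You instead characterize the annihilator as $\delta C_1 + R_0 = \Sigma^{-1}(\R\, 1_{E_\Om})$ for the global-sum map $\Sigma$ (a direct consequence of Theorem \ref{thm:A}) and collapse the weighted sum with the exactness identity \eqref{eq:U_Bethe}, $\Sigma(c\,\zeta w) = \Sigma(w)$. These routes are equivalent in content — Lemma \ref{lemma1} is itself proved in the paper from exactly \eqref{eq:U_Bethe} and Theorem \ref{thm:A}, since degree-0 homology classes are determined by global sums — so your argument effectively inlines the lemma in the scalar form this theorem needs. What the paper's packaging buys is reusability: $c - \mu = \delta\Psi$ serves Theorem \ref{thm:F} as well, where it is applied in the dualized form $(\delta\Psi)^* = \Psi^* d$ acting on beliefs, which your $\Sigma$-argument would not give as directly; what yours buys is a shorter, self-contained proof for Theorem \ref{thm:varF} alone. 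Two minor points of hygiene: $\zeta$ does not kill constants but preserves the subcomplex $R_\bullet$ (it is $\rho^\beta$ that quotients them out, so the composite $\rho^\beta \circ \zeta$ is $R_0$-invariant, which is all you need); and your fixed-point clause correctly uses projective faithfulness (Definition \ref{def:faithful}) together with the scaling $\Fix^\beta = \beta^{-1}\Fix^1$, matching the paper's standing assumption stated before the three theorems.
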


Theorem \ref{thm:varF} rigorously states the  
correspondence of Yedidia, Freeman and Weiss \cite{Yedidia-2005} 
between stationary states of GBP 
and critical points of the CVM, which generalized the well-known correspondence 
on graphs \cite{Ikeda-04}. 
The statement above makes the notion of stationary state more precise 
and our rigorous proof avoids any division by Bethe-Kikuchi coefficients 
thanks to lemma \ref{lemma1}.  

Before formulating the statement of theorem \ref{thm3}, 
let us denote by $c : C_0 \to C_0$ the multiplication by Bethe-Kikuchi coefficients.
One may show that $c_\bb = 0$ whenever $\bb$ 
is not an intersection of maximal regions $\aa_1, \dots, \aa_n \in K$. 
However, assuming that $K$ is the $\cap$-closure of 
a set of maximal regions does not always imply 
the invertibility of $c$. 
When $c$ is not invertible,
problem \ref{F-crit} will exhibit an affine degeneracy 
along $\Ker(c)$. 

\begin{Definition}
Let us call 
$\Fixt^\beta = \{v + b \:|\: 
v \in \Fix^\beta, b \in \Ker(c \zeta)\} \incl C_0$ 
the {\em weakly consistent manifold}. 
In particular, $\Fixt^\beta = \Fix^\beta$ when $c$ is invertible. 
\end{Definition}

A linear retraction $r^\beta : \Fixt^\beta \to \Fix^\beta$ will be defined
by equation \eqref{W} in the proof of theorem \ref{thm:F} below. 
It maps solutions of 
problem \ref{F-crit} 
onto those of \ref{varF-crit}. 
From the perspective of beliefs, 
this retraction simply consists of filling the blanks 
with the partial integration functor.

\begin{MyTheorem} \label{thm3} \label{thm:F}
    Let $\beta > 0, h \in C_0$, 
    problem \ref{F-crit} is equivalent to:  
    \begin{equation}
    \begin{split}
    &\frac{\partial \check F^\beta(V)}{\partial V}
    \Big|_{\Tg_V\,\zeta [h]} = 0 \\[.3em]
    \quad\eqvl\quad &
    V = \zeta w {\rm\;with\;} w \in [h] \cap \Fix^\beta_+ 
    \end{split}
    \end{equation}
    The weakly consistent potentials 
    $w \in [h] \cap \Fixt^\beta$
    can be univocally mapped onto $[h] \cap \Fix^\beta$
    by a retraction $r^\beta$.
    They coincide with fixed points
    of the isothermal diffusion $X_{\Phi}^{\beta}$ 
    restricted to ${[h]}$ 
    when $c$ is invertible, and with its preimage 
    under $r^\beta$ otherwise. 
\end{MyTheorem}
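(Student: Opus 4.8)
\emph{Plan.} The plan is to turn the energy-conservation constraint into an affine condition, compute the gradient of $\check F^\beta$, and read off criticality through the duality $d = \delta^*$. First I would write $\pi_c : V \mapsto \sum_{\aa \in K} c_\aa V_\aa$ for the Bethe-Kikuchi energy map and $\pi_1 : w \mapsto \sum_{\aa \in K} w_\aa$ for the total-energy map. Identity \eqref{eq:U_Bethe} reads $\pi_c \circ \zeta = \pi_1$, and the Gauss theorem \ref{thm:A} gives $\Ker \pi_1 = \delta C_1$; hence $\Ker \pi_c = \zeta\delta C_1$ and the constraint $V \check\sim \zeta h$ is equivalent to $V \in \zeta[h]$, i.e. $V = \zeta w$ with $w \in [h]$. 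The tangent space to the feasible set is therefore $\zeta\delta C_1$.

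Next I would differentiate. Since $\partial_{V_\aa} F^\beta_\aa = \rho^\beta(V_\aa)$, the gradient is $\partial_V \check F^\beta = c\,p$ with $p = \rho^\beta(\zeta w)$, so criticality on $\zeta[h]$ amounts to $\langle c\,p,\, \zeta\delta\ph\rangle = 0$ for all $\ph \in C_1$. Moving $\zeta$ and $\delta$ onto the dual side and using that $d = \delta^*$ is the adjoint of $\delta$, this is exactly the condition $d\,\zeta^*(c\,p) = 0$. The direct inclusion is then immediate: if $w \in \Fix^\beta$, so that $p \in \Gamma_0$ is consistent, the inclusion-exclusion identity \eqref{eq:inclusion-exclusion} collapses $\zeta^*(c\,p)_\bb = \sum_{\aa \supseteq \bb} c_\aa\, \pi_*^{\aa\to\bb}p_\aa$ to $(\sum_{\aa\supseteq\bb}c_\aa)\,p_\bb = p_\bb$, whence $d\zeta^*(c\,p) = dp = 0$ and $V = \zeta w$ is critical.

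The converse is the heart of the argument and the main obstacle. A short rearrangement gives $d\zeta^*(c\,p)_{\aa\to\bb} = \sum_{\aa' \supseteq \bb,\, \aa' \not\supseteq \aa} c_{\aa'}\,\pi_*^{\aa'\to\bb}p_{\aa'}$, so criticality is a triangular linear system in the marginals of $p$. I would invert it along the semilattice $(K, \cap)$ by a M\"obius/inclusion-exclusion argument -- the content of lemma \ref{lemma1}, designed precisely so as never to divide by a coefficient $c_\aa$ -- to conclude that $\pi_*^{\aa\to\bb}p_\aa = p_\bb$ for every $\aa \supset \bb$ carrying nonzero Bethe-Kikuchi coefficients. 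Since $\check F^\beta$ depends on $V = \zeta w$ only through the components $(\zeta w)_\aa$ with $c_\aa \neq 0$, the undetermined freedom is exactly $\Ker(c\zeta)$, and $p$ being consistent on the support of $c$ translates into $w \in \Fixt^\beta$. The difficulty is genuinely this degeneracy: when $c$ is not invertible the system says nothing about $p$ on regions with $c_\aa = 0$, so consistency can only be recovered modulo $\Ker(c\zeta)$.

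Finally I would construct the retraction $r^\beta : \Fixt^\beta \to \Fix^\beta$. On beliefs it fills the blanks: from the marginals already shown consistent on the support of $c$, it defines the missing $p_\bb$ (on regions with $c_\bb = 0$) by partial integration $\pi_*^{\aa\to\bb}$ from any larger region, unambiguously by the consistency just established; dually this pins down the missing components of $w$ and yields $r^\beta(w) \in [h] \cap \Fix^\beta$. As $\check F^\beta$ is constant along $\Ker(c\zeta)$-cosets, $r^\beta$ carries critical points to critical points, and composing with theorem \ref{thm:varF} identifies them with solutions of \ref{varF-crit}. When $c$ is invertible, $\Ker(c\zeta) = 0$, $\Fixt^\beta = \Fix^\beta$ and $r^\beta = \mathrm{id}$, so the critical potentials are exactly $[h] \cap \Fix^\beta$; by projective faithfulness of $\Phi$ (definition \ref{def:faithful}), $\delta\Phi = 0$ cuts out the same set on $[h]$, whence critical points coincide with the fixed points of the isothermal diffusion $X^\beta_\Phi|_{[h]}$, and with their $r^\beta$-preimages in the general case.
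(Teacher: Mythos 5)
Your proposal follows the paper's own route almost step for step: reduce the constraint $V \check\sim \zeta h$ to $V \in \zeta[h]$ via \eqref{eq:U_Bethe} and theorem \ref{thm:A}, compute $\partial_V \check F^\beta = c\,p$ with $p = \rho^\beta(V)$, push the criticality condition through the adjunctions $d = \delta^*$, $\zeta^*$ to get $\zeta^*(c\,p) \in \Ker(d)$, use lemma \ref{lemma1} in both directions, identify the degeneracy with $\Ker(c\zeta)$, and define the retraction by ``filling the blanks'' with partial integration --- which is exactly the paper's description of $r^\beta$. Your direct inclusion (collapsing $\zeta^*(c\,p)$ to $p$ by inclusion-exclusion when $p$ is consistent) is a slightly more elementary variant of the paper's use of the dual identity $c - \mu^* = \Psi^* d$; for the converse, the paper's computation is the crisp chain $c\,q = \mu^* q = \mu^*\zeta^*(c\,p) = c\,p$ for $q = \zeta^*(c\,p) \in \Ker(d)$, which is what your ``triangular system inversion via lemma \ref{lemma1}'' amounts to.

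One step does need tightening: the claim that the filling is unambiguous ``by the consistency just established.'' What you established is $\pi_*^{\aa\to\bb}p_\aa = p_\bb$ only for pairs $\aa \supset \bb$ \emph{both} carrying nonzero Bethe-Kikuchi coefficients. But in the only case where the retraction is actually needed ($c$ not invertible), a region $\bb$ with $c_\bb = 0$ can be the intersection of two regions $\aa_1, \aa_2$ with $c_{\aa_1}, c_{\aa_2} \neq 0$, and your established facts alone do not force $\pi_*^{\aa_1\to\bb}p_{\aa_1} = \pi_*^{\aa_2\to\bb}p_{\aa_2}$ --- attempting to compare them through $\aa_1 \cap \aa_2$ just recurses into further zero-coefficient intersections. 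The fix is already in your hands: criticality hands you the globally consistent cochain $q = \zeta^*(c\,p) \in \Ker(d)$, which agrees with $p$ on the support of $c$ by lemma \ref{lemma1}; defining the missing beliefs to be $q_\bb$ (equivalently, the marginal from \emph{any} nonzero-coefficient superset, since consistency of $q$ makes all of these equal to $q_\bb$) renders the filling well-defined, and also guarantees positivity of the filled-in densities so that the corresponding energies exist. This is precisely what the paper's explicit formula \eqref{W} for $W_\bb = -\frac{1}{\beta}\ln(q_\bb) + F^\beta_\bb(V_\bb)$ accomplishes, after which $cV = cW$ gives both $[w] = [h]$ and $v - w \in \Ker(c\zeta)$.
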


To prove theorems 
\ref{thm:S}, \ref{thm:varF} and \ref{thm:F}
we will need the following lemma, which is rather subtle 
in spite of its apparent simplicity (see \cite[Section 4.3.3]{phd} for
detailed formulas). 
The lemma states that multiplication by $c$ 
is equivalent to Möbius inversion up to a boundary term 
in $C_0$.  

\begin{Lemma} \label{lemma1}
    There exists a linear flux map $\Psi : C_0 \to C_1$ such that
    $c - \mu = \delta \Psi$.
\end{Lemma}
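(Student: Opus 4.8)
The plan is to reduce the identity $c - \mu = \delta\Psi$ to a single statement about \emph{global} observables via the Gauss theorem \ref{thm:A}, and only afterwards to extract a genuinely \emph{linear} flux map $\Psi$. Recall from the acyclic resolution \eqref{dgm:resolution} that $\Ker(\zeta_\Om) = \delta C_1$, where $\zeta_\Om : C_0 \to \R^{E_\Om}$ is the global summation $\zeta_\Om(v) = \sum_{\aa \in K} v_\aa$ (each $v_\aa$ being tacitly extended to $\R^{E_\Om}$). Equivalently, by theorem \ref{thm:A}, a potential $w \in C_0$ lies in $\delta C_1$ if and only if $\sum_\aa w_\aa = 0$ as a global observable. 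So the first step is to prove that $\zeta_\Om \circ (c - \mu) = 0$, i.e. that $c\,v$ and $\mu\,v$ define the \emph{same} global hamiltonian for every $v \in C_0$; the existence of $\Psi$ then becomes a purely linear-algebraic matter.

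The core computation is the reindexing
\begin{equation}
\zeta_\Om(\mu v)
= \sum_{\aa \in K} \sum_{\bb \incl \aa} \mu_{\aa \to \bb}\, v_\bb
= \sum_{\bb \in K} \Big( \sum_{\aa \cont \bb} \mu_{\aa \to \bb} \Big)\, v_\bb
= \sum_{\bb \in K} c_\bb\, v_\bb
= \zeta_\Om(c v),
\end{equation}
where the last equality uses the definition $c_\bb = \sum_{\aa \cont \bb}\mu_{\aa\to\bb}$ of Bethe-Kikuchi coefficients (corollary \ref{cor:BK-coeffs}). The delicate point is the middle equality, and it is what makes the lemma subtle despite its simple appearance: on $C_0$ the operator $\mu$ involves the restriction maps $\pi^{\aa\to\bb}$ (each $v_\bb$ is evaluated at $x_{\aa|\bb}$), whereas $c$ is mere pointwise scalar multiplication, so $c$ and $\mu$ are genuinely distinct endomorphisms of $C_0$. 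The sum over the intermediate region $\aa$ may nonetheless be factored out precisely because, by functoriality of $E$, the extension of $v_\bb$ to $\R^{E_\Om}$ through any $\aa \cont \bb$ agrees with its direct extension (the restrictions compose, $x_{\Om|\aa}$ then $x_{\aa|\bb}$ equals $x_{\Om|\bb}$). It is this collapse of the intermediate restrictions at the global level, and only at the global level, that gives $\zeta_\Om(\mu v) = \zeta_\Om(c v)$, whence $(c-\mu)v \in \Ker(\zeta_\Om) = \delta C_1$ for all $v$.

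It then remains to upgrade the inclusion $\Img(c-\mu) \incl \delta C_1$ into a linear construction. Since all spaces are finite-dimensional, I would fix a linear complement of $\Ker(\delta)$ in $C_1$, on which $\delta$ restricts to an isomorphism onto $\Img(\delta) = \delta C_1$, and invert it to obtain a linear section $s : \delta C_1 \to C_1$ with $\delta \circ s = \mathrm{id}$. Setting $\Psi = s \circ (c - \mu)$ yields a linear map $C_0 \to C_1$ with $\delta\Psi = c - \mu$, as required. The main obstacle is conceptual rather than computational: one must recognize that $c$ and $\mu$ coincide only \emph{modulo} the boundaries $\delta C_1$, and that theorem \ref{thm:A} is exactly the tool certifying this discrepancy to be a coboundary. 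A canonical, explicit formula for $\Psi$ (as in \cite[Section 4.3.3]{phd}) requires more bookkeeping, but is not needed for the existence statement.
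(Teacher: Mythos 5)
Your proof is correct and takes essentially the same route as the paper's: both reduce the lemma to the identity $\zeta_\Om(c\,v) = \zeta_\Om(\mu\, v)$ — which you re-derive by the reindexing argument via corollary \ref{cor:BK-coeffs}, and which the paper simply cites as the exactness of the Bethe-Kikuchi energy \eqref{eq:U_Bethe} — and then invoke theorem \ref{thm:A} to conclude $\Img(c - \mu) \subseteq \delta C_1$. Your explicit construction of $\Psi$ through a linear section of $\delta$ just makes precise what the paper calls constructing an arbitrary $\Psi$ ``by linearity.''
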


\begin{proof}[Proof of lemma \ref{lemma1}]
\nocite{Kellerer-64,Matus-88} 
Theorem \ref{thm:A}
faithfully characterizes
the homology classes $[h]$ of $C_0 / \delta C_1$ by their global 
hamiltonian  
$H_\Om = \sum_\aa h_\aa$ when $K$ is $\cap$-closed (see \cite[cor. 2.14]{phd}).
Therefore exactness of Bethe-Kikuchi energy  
\eqref{eq:U_Bethe},
implies that $h = \mu H$ and $c H$ 
are always homologous, so that 
the image of $c - \mu$ is contained in $\delta C_1$.
One may therefore construct an arbitrary $\Psi : C_0 \to C_1$ 
such that $c - \mu = \delta \Psi$ 
by linearity. The flux values taken by $\Psi$ are only constrained 
in $C_1 / \delta C_2$, as $\Ker(\delta)$
and $\Img(\delta)$ coincide on positive degrees
i.e. $C_\bullet$ is acyclic \cite[thm. 2.17]{phd}.
\end{proof}

The proofs of theorems \ref{thm:S} and \ref{thm:varF} 
(detailed in appendix)
may then be summarized as follows. Under consistency constraints 
on a critical belief $p \in \Gamma_0 \subseteq C_0^*$, 
the adjunction $d = \delta^*$ 
first implies that the 
variations cancel on $\Ker(d) = \Img(\delta)^\perp$. As linear forms on $C_0^*$,
$\partial_{p} \check S$ and $\partial_p \check \varF$
can therefore lie in $\delta C_1$ through Lagrange multipliers $\delta \psi$,
and we write $\delta \psi + \lambda \in  \delta C_1 + \Lambda$ 
for the general expression of differentials at a critical point. 
The space $\Lambda$ only
depends on the other constraints at hand : 
$\Lambda = R_0$ for problem \ref{S-crit}, and
$\Lambda = \R h + R_0$ for problem \ref{varF-crit} 
(additive constants in $R_0$ are dual to normalization constraints, 
and $h \in C_0$ is dual to the internal energy constraint $\langle p, h \rangle = \U$).

The form of Bethe-Kikuchi functionals naturally leads 
to express $\partial_p \check S$ and $\partial_p \check \varF$ 
as $c V$ for some $V \in C_0$, 
through standard computations of partial derivatives. 
Writing $V = \zeta v$, 
the remaining difficulty consists in showing that $c V = \delta \psi + \lambda$ 
is equivalent to $v = \delta \ph + \lambda$ for some $\ph \in C_1$. 
This difficult step is greatly eased by lemma \ref{lemma1},
which states that $[v] = [\mu V] = [c V]$ coincide 
as homology classes. 

In contrast, theorem \ref{thm:F} works under energy constraints 
on the potential $w \in [h] \subseteq C_0$, 
yielding local hamiltonians $V = \zeta w$, but  
the consistency of $p = \rho(V)$ is not enforced as a constraint. 
The linear form $\partial_V \check F$ will be written $c p$,
which lies in $\Ker(d \zeta^*) = \Img(\zeta\delta)^\perp$ when critical, 
because of the energy conservation constraint on $V \in \zeta(h + \delta C_1)$.  
This only implies $q = \zeta^*(c p) \in \Ker(d)$ in general, 
yet we will conclude that $p$ and 
$q$ must agree on all the 
regions $\bb$ where $c_\bb \neq 0$, so that 
the affine degeneracy of solutions (absent in \ref{thm:S} and \ref{thm:varF}) 
is completely supported by the non-maximal regions that cancel $c$.
The consistent beliefs $q \in \Gamma_0$ then solve \ref{varF-crit}.
\subsection{Singularities}

Let us say that $v \in \fix$ is 
{\it singular} if $T_{v} \fix \cap \delta C_1 \neq 0$, and 
call {\it singular degree} of $v$ 
the number 
\begin{equation}
\cork_v = \dim( T_v \fix \cap \delta C_1).
\end{equation}
When $p = \rho(\zeta v)$, according to \eqref{eq:Beliefs-h} $\cork_v$ coincides 
with 
\begin{equation}
\cork_p = \dim(\Ker(d) \cap T_p \Beliefs_{v}).
\end{equation}
Both numbers measure the singularity of the canonical 
projection $\fix \to C_0 / \delta C_1$ onto homology classes, 
a submersion if and only if $\cork_v = 0$ everywhere on $\fix$. 

\begin{Definition}[Singular sets]
For all $k \in \N$, let 
\begin{enumerate}[label=(\alph*)]
\item ${\cal S}_k := \{ \cork_v = k \} \subseteq \fix$, 
\item $\Sigma_k := \{ \cork_p = k \} \subseteq \Gamma_0$,
\end{enumerate}
denote the \emph{singular stratifications} of $\fix$ and $\Gamma_0$ respectively.
\end{Definition}

Figure \ref{fig:singularity} depicts a situation where 
${\cal S}_1$ is non-empty. We refer the reader to Thom \cite{Thom-56} and 
Boardman \cite{Boardman-67} for more details on singularity theory.  

We show that the singular sets
$\Sigma^k$ are defined by polynomial 
equations in $\Gamma_0$. 
Singularities will therefore be located on the completion 
$\bar \Sigma^1$ of a smooth hypersurface 
$\Sigma^1 \subseteq \Gamma_0$, which may be possibly 
empty.  
In particular the intersections $\fix \cap [v]$ and
$\Gamma_0 \cap \Beliefs_v$ are
almost everywhere transverse. 

In what follows, we denote by $\R[\Delta_0]$ 
(resp. $\R(\Delta_0)$)
the algebra of polynomials (resp. rational functions) 
on the span of $\Delta_0 \subseteq C_0^*$, 
by $\R[\lambda]$ the algebra of polynomials in the 
real variable $\lambda$. 
For a vector space ${\cal E}$, the 
Lie algebra of its endomorphisms 
is denoted $\gl({\cal E})$.
 
\begin{Theorem}\label{thm:Sigma}
There exists a polynomial 
$\chi \in \R[\Delta_0] \otimes \R[\lambda]$, 
of degree $\dim(\delta C_1)$ in $\lambda \in \R$, such that for all $p \in \Gamma_0$:
\begin{equation} \label{eq:Sigma-chi}
p \in \Sigma^k \quad\Leftrightarrow \quad 
\chi_p(\lambda) {\rm\;has\;root\;} 1 {\rm\;of\;multiplicity\;} k 
\end{equation}
\end{Theorem}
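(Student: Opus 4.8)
The plan is to identify $\cork_p$ with the multiplicity of a fixed-point (eigenvalue $1$) condition for a linear transfer operator assembled from conditional expectations, and then to read off that multiplicity from a characteristic polynomial with denominators cleared.

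First I would linearise the consistency condition. By proposition \ref{prop:d-D} a potential $v$ lies in $\fix$ exactly when $\DF V = 0$ for $V = \zeta v$, so $T_v \fix = \{\, w \in C_0 \mid D\DF(V)(\zeta w) = 0 \,\}$. Differentiating the free energy gradient of definition \ref{def:DF} at a consistent point, where $p_\aa = \e^{-V_\aa}/Z_\aa$, gives
\begin{equation}
[D\DF(V) W]_{\aa \to \bb}(x_\bb) = W_\bb(x_\bb) - \E_{p_\aa}[W_\aa \mid x_\bb],
\end{equation}
the gap between $W_\bb$ and the conditional expectation of $W_\aa$ under $p_\aa$. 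Intersecting with gauge directions, $\cork_v = \dim(T_v\fix \cap \delta C_1)$ becomes the dimension of the space of perturbations $W = \zeta\,\delta\ph$, $\delta\ph \in \delta C_1$, that are \emph{harmonic} in the sense that $W_\bb = \E_{p_\aa}[W_\aa \mid x_\bb]$ for every edge $\aa \supset \bb$ of $K$. Since $\zeta$ is a linear automorphism these perturbations range over the space $\zeta\,\delta C_1$ of dimension $\dim\delta C_1$; collecting the edgewise conditional expectations into a single endomorphism $A_p$ of this space, the harmonic condition reads $A_p W = W$, an eigenvalue-$1$ equation. This explains the distinguished value $\lambda = 1$.

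Second, I would set $\chi_p(\lambda) = q(p)\,\det(\lambda\,\mathrm{Id} - A_p)$, where $q(p)$ is the product of the marginal denominators $p_\bb(x_\bb)$ occurring in $\E_{p_\aa}[\,\cdot\mid x_\bb] = p_\bb^{-1}\sum_{x_{\aa|\bb}=x_\bb}(\,\cdot\,)\,p_\aa$. On $\Gamma_0$ consistency forces these denominators to be genuine marginals $p_\bb = \pi_*^{\aa\to\bb}p_\aa$, so multiplying by $q(p)$ clears every denominator and yields $\chi_p \in \R[\Delta_0]\otimes\R[\lambda]$. Its top $\lambda$-coefficient is $q(p)$, a nonzero polynomial, so $\chi_p$ has degree $\dim\delta C_1$ in $\lambda$, and on the interior of $\Gamma_0$ its roots (with multiplicity) coincide with those of $\det(\lambda\,\mathrm{Id} - A_p)$. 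By construction the geometric multiplicity of the eigenvalue $1$ of $A_p$ equals $\dim\Ker(A_p - \mathrm{Id}) = \cork_v = \cork_p$, the last equality being the one already recorded before the statement.

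The hard part is the equivalence \eqref{eq:Sigma-chi} itself, which demands that the \emph{algebraic} multiplicity of the root $1$ of $\chi_p$ agree with the \emph{geometric} multiplicity $\cork_p$. This forces $A_p$ to be semisimple at the eigenvalue $1$, which I would obtain from self-adjointness: each conditional expectation $\E_{p_\aa}[\,\cdot\mid x_\bb]$ is an orthogonal projection, hence self-adjoint, for the $L^2(p)$ inner product, and the assembled $A_p$ should be self-adjoint for the induced $p$-weighted inner product on $\zeta\,\delta C_1$; a self-adjoint operator is diagonalisable, so algebraic and geometric multiplicities agree at every eigenvalue, in particular at $1$. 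The main obstacle is thus to verify that the assembly of the edgewise projections into $A_p$ genuinely preserves self-adjointness — that is, to exhibit the correct $p$-weighted inner product on $\zeta\,\delta C_1$ — since the transforms $\zeta$ and $\mu$ entangle the naive edgewise structure. Once this is in place, the loop-series expansion on binary graphs follows by expanding $\det(\lambda\,\mathrm{Id} - A_p)$ over closed walks in $K$.
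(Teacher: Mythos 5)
Your skeleton matches the paper's (linearize consistency as in proposition \ref{prop:TV-D}, observe that the resulting operator has rational coefficients in $p$, clear denominators, read $\cork_p$ off the multiplicity of the root $\lambda = 1$), but the step where you ``collect the edgewise conditional expectations into a single endomorphism $A_p$ of $\zeta\,\delta C_1$'' is a genuine gap, not a formality. The harmonic conditions $W_\bb = \E_{p_\aa}[W_\aa \mid x_\bb]$ are indexed by chains $\aa \supset \bb$ in $N_1 K$ --- a region has several parents in general --- so what they define is the map $\nabla_p : C_0 \to C_1$ of \eqref{eq:nabla-p}, valued in $C_1$, not an endomorphism of anything; moreover the expectations $\E_p^{\aa \to \bb}$ do not preserve $\zeta\,\delta C_1$. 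Any naive assembly (say, averaging over parents) yields an operator whose fixed space strictly contains the harmonic space, so its eigenvalue-$1$ multiplicity would overestimate $\cork_p$. The paper's resolution is to compose with $\delta$: the linearized diffusion $\Tg_v X_{GBP} = \delta \circ \nabla_p \circ \zeta$ of \eqref{eq:Tv-XGBP} stabilizes $\delta C_1$ by construction, and --- this is the ingredient absent from your sketch --- composing with $\delta$ does not enlarge the kernel, because $\nabla_p$ is the $\eta_p$-adjoint of $\delta$, so that $\Ker(\delta \nabla_p) = \Ker(\nabla_p)$; this is exactly where faithfulness of $\Phi_{GBP}$ enters and what justifies \eqref{eq:cork-v-TX}. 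Without some such argument your $\chi_p$ is not tied to $\cork_p$ at all.

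Second, your proposed cure for the algebraic-versus-geometric multiplicity problem cannot work. You correctly isolate this as the hard point (and correctly delegate $\cork_p = \cork_v$ to lemma \ref{lemma:cork-p-v}), but the assembled operator is not self-adjoint for any inner product: on graphs, the operator whose fixed vectors compute $\cork_p$ is the weighted \emph{non-backtracking} edge propagator ${\bf E}_p$ of \eqref{eq:E-propagator} (proposition \ref{prop:Kirchhoff}), and on a single $n$-cycle with nonzero correlations its spectrum consists of scaled $n$-th roots of unity, hence contains non-real eigenvalues for $n \geq 3$; a real operator with non-real spectrum is self-adjoint for no scalar product. Since you yourself identify $\det(\lambda\,\mathrm{Id} - A_p)$ with the loop-series polynomial on binary graphs, your $A_p$ inherits this spectrum, and the linchpin of your write-up fails. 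In fairness, the paper's own proof also passes from the root multiplicity of $\det(\lambda - 1 + L'_p)$ to $\dim \Ker(L'_p)$ without justifying semisimplicity at that eigenvalue; but a viable repair must go another way --- for instance reading the corank from ranks or minors, or replacing $L'_p$ by the Gram operator $(L'_p)^{\top} L'_p$, which is symmetric positive semi-definite with the same kernel and still has polynomial coefficients after clearing denominators --- rather than through self-adjointness of the propagator itself.
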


\begin{figure*}[t]
\begin{center}

\includegraphics[width=\textwidth]
{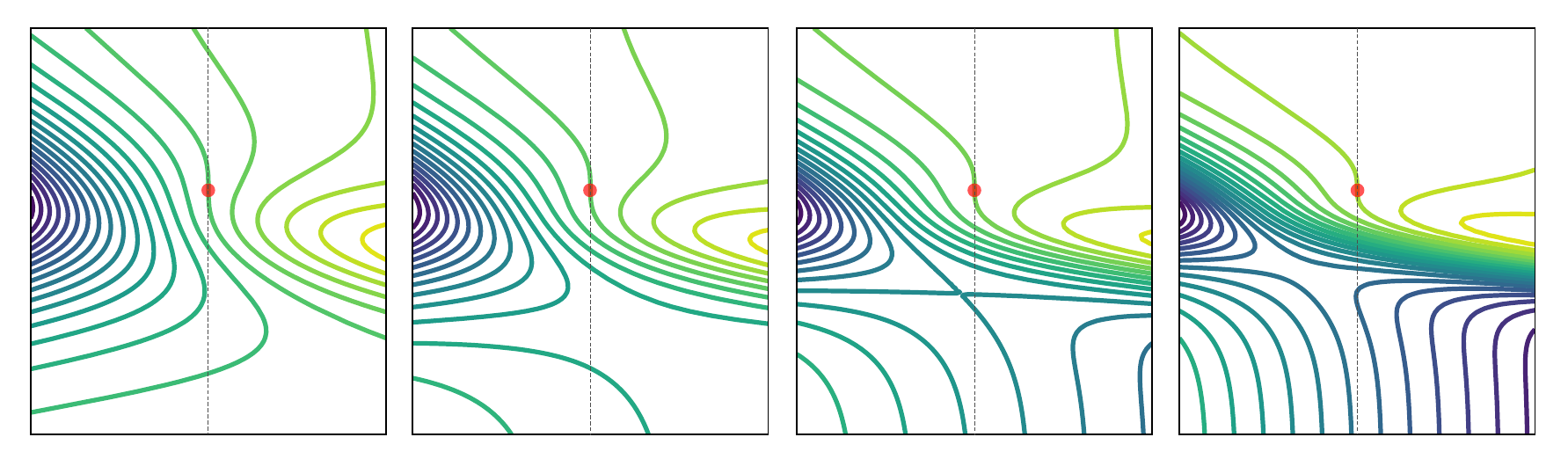}

\centering
\vspace{.2em}
    \caption{\label{fig:singular-contours}
        {\small
        Level curves of the Bethe free energy $\check F$ 
        accross a cuspidal singularity (red dot), for increasing values 
        of the external magnetic field (left to right). 
        On each plot, the horizontal axis $U \in C_0$ represents 
        variations in inverse temperature (i.e. a parameter),
        while the vertical axis 
        $V \in \zeta(\delta C_1)$ represents a 1D-fiber of 
        equivalent energies 
        (i.e. an optimization variable). 
        Convexity in the $V$-axis 
        is lost when temperature drops as two additional critical points 
        stem from the singularity (right of the dashed line).
        Also notice $\check F$ 
        acquires a sharp step when the magnetic field increases. 
    }}
\end{center}
\end{figure*} 
 
We shall prove theorem \ref{thm:Sigma} by computing 
the corank of linearized diffusion restricted to $\delta C_1$. 

Assume given a faithful flux functional $\Phi : C_0 \to C_1$, satisfying 
the axioms of definition \ref{def:faithful}, for instance 
$\Phi_{GBP}$.  
The isothermal diffusion $X_\Phi = \delta \Phi$ then only fixes $\fix$, 
defined by $\Phi = 0$, 
while evolution remains parallel to $\delta C_1$. 
Linearizing $X_\Phi$ in the neighbourhood of $v \in \fix$ thus 
yields an 
endomorphism $\Tg_v X_{\Phi}$ on $C_0$, of kernel $\Tg_v \fix$, 
which stabilizes $\delta C_1$ by construction. 
The singular degree of $v$ therefore computes the corank of 
$\Tg_v X_{\Phi}$ restricted to boundaries,
\begin{equation} \label{eq:cork-v-TX}
\cork_v = \cork(\Tg_v X_{\Phi}|_{\delta C_1}).
\end{equation}
By faithfulness of $\Phi_{GBP} = - \DF \circ \zeta$, 
one may explicitly compute 
$\cork_v$ via minors of 
the sparse matrix
\begin{equation} \label{eq:Tv-XGBP}
    \Tg_v X_{GBP} = \delta \circ (\Tg_{\zeta v} \DF) \circ \zeta.
\end{equation} 
Letting $p = \rho(\zeta v)$, theorem \ref{thm:Sigma} 
will easily follow 
from lemma \ref{lemma:cork-p-v}, stating 
that $\cork_p = \cork_v$, 
and from proposition \ref{prop:TV-D}, 
which implies that $\Tg_v X_{GBP}$ has 
rational function coefficients in $p$. 

\begin{Lemma} \label{lemma:cork-p-v}
If $p = \rho(\zeta v) \in \Gamma_0$
then $\cork_p = \cork_v$.
\end{Lemma}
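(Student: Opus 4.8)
The plan is to transport the belief-side intersection onto the potential side through the differential of the Gibbs correspondence, and then to reduce the equality of coranks to a single statement about the space $R_0$ of local constants. Write $g = \rho \circ \zeta$ and let $D = T_v g : C_0 \to T_p \Delta_0$ be its differential at $v$, with $p = g(v)$. Since $\rho$ only normalizes and $\zeta$ is an automorphism stabilizing the subcomplex $R_\bullet$, one has $\Ker D = R_0$ and $\Img D = T_p \Delta_0$, so $D$ descends to a linear isomorphism $\bar D : C_0/R_0 \xrightarrow{\sim} T_p \Delta_0$; write $\pi : C_0 \to C_0/R_0$ for the projection. By \eqref{eq:Beliefs-h} we have $\Beliefs_v = g(v + \delta C_1)$, hence $T_p \Beliefs_v = D(\delta C_1) = \bar D(\pi(\delta C_1))$. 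On the other hand $\Fix = g^{-1}(\Gamma_0)$ is a thickening $\Fix = \fix + R_0$, so $\pi(T_v \fix) = \pi(T_v \Fix)$ and $\bar D(\pi(T_v \fix)) = D(T_v\Fix) = T_p\Gamma_0$ because $g$ is a submersion onto $\Gamma_0$ near $p$. Using $\Ker(d) \cap T_p \Delta_0 = T_p \Gamma_0$ and $T_p \Beliefs_v \subseteq T_p \Delta_0$, the isomorphism $\bar D$ then identifies $\cork_p = \dim(\Ker(d) \cap T_p \Beliefs_v) = \dim(T_p \Gamma_0 \cap T_p \Beliefs_v)$ with $\dim\big(\pi(T_v\fix) \cap \pi(\delta C_1)\big)$.

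Abbreviating $A = T_v\fix$ and $B = \delta C_1$, we must therefore compare $\cork_p = \dim(\pi(A) \cap \pi(B))$ with $\cork_v = \dim(A \cap B)$. Combining $\dim \pi(X) = \dim X - \dim(X \cap R_0)$ with $\pi(A) + \pi(B) = \pi(A+B)$, the Grassmann formula yields
\begin{equation}
\cork_p - \cork_v = \dim\big((A+B) \cap R_0\big) - \dim(A \cap R_0) - \dim(B \cap R_0).
\end{equation}
Consequently $\cork_p = \cork_v$ holds as soon as $R_0$ splits as the internal direct sum $R_0 = (R_0 \cap T_v\fix) \oplus (R_0 \cap \delta C_1)$, for then $(A+B) \cap R_0 = R_0$ and the three dimensions balance exactly.

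It remains to establish this splitting of the constants, which is where the normalization bookkeeping lives and which I expect to be the crux. By the Gauss theorem \ref{thm:A}, $\delta C_1$ is exactly the space of potentials of vanishing global energy, so $R_0 \cap \delta C_1 = \{\lambda \in R_0 \mid \sum_\aa \lambda_\aa = 0\}$ has codimension one in $R_0$. For the complementary direction, consider $\lambda_0 = \mu(\mathbf{1}) \in R_0$, characterized by $\zeta \lambda_0 = \mathbf{1}$ (the unit constant on every region): the shift $v \mapsto v + t\lambda_0$ scales every unnormalized density $\e^{-(\zeta v)_\aa}$ by the common factor $\e^{-t}$, which preserves the linear subspace $\Ker(d)$, so the whole line lies in $\fix$ and $\lambda_0 \in T_v\fix$. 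A marginalization argument shows moreover that $R_0 \cap T_v\fix = \R \lambda_0$ (equal region masses force $\zeta\lambda$ to be constant across $K$, and since $\varnothing \in K$ links all regions this forces $\lambda \in \R\lambda_0$). Finally $\sum_\aa (\lambda_0)_\aa = \sum_\bb c_\bb = 1$ by \eqref{eq:U_Bethe} and the inclusion–exclusion identity \eqref{eq:inclusion-exclusion} taken at $\bb = \varnothing$, so $\lambda_0 \notin \delta C_1$; hence $\R\lambda_0$ and $R_0 \cap \delta C_1$ are complementary, the direct sum holds, and $\cork_p = \cork_v$. The main obstacle is precisely this last step: the unnormalized manifold $\fix$ carries a free global-scaling direction $\lambda_0$ that is invisible in the normalized polytope $\Gamma_0$, and the lemma holds exactly because this single extra dimension, pinned down by the Euler-type identity $\sum_\bb c_\bb = 1$, is matched identically on both sides.
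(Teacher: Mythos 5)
Your proof is correct and follows essentially the same route as the paper's: both pass to the quotient $C_0/R_0 \simeq \Delta_0$ induced by the Gibbs map $\rho\circ\zeta$, and both reduce the equality of coranks to the splitting of the constants $R_0 = \mu(\R) \oplus \delta R_1$, anchored on the identity $\sum_\bb c_\bb = 1$ (which gives $\mu(\mathbf{1}) \notin \delta C_1$ while $\delta R_1 = R_0 \cap \delta C_1$ is the zero-sum hyperplane). Your explicit Grassmann-formula bookkeeping and the marginalization argument identifying $R_0 \cap \Tg_v\fix = \R\,\mu(\mathbf{1})$ simply spell out what the paper asserts more tersely by citing its appendix on consistent manifolds.
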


The proofs of lemma \ref{lemma:cork-p-v} and theorem \ref{thm:Sigma} are 
delayed to the end of this subsection. 
Taking a closer look at the linearized structure of $\Tg_v \fix$ 
before hand will yield an interesting description 
of singularities by conservation equations on 1-fields $\phi \in C_1$
(proposition \ref{prop:nabla-zeta-delta})
which we shall use to give an explicit expression for $\chi$ 
on binary graphs in the next subsection. 

\begin{Proposition} \label{prop:TV-D}
For all $V \in C_0$, the map 
$\Tg_V \DF : C_0 \to C_1$ is expressed 
in terms of $p = \rho(V)$ by
\begin{equation} \label{eq:TV-D}
[\Tg_V \DF \cdot V']_{\aa \to \bb}(x_\bb) 
= V'_\bb(x_\bb) - \E_{p_\aa}[V'_\aa \,|\,x_\bb]
\end{equation}
for all $V' \in C_0$, all $\aa \supset \bb$ in $K$ and all $x_\bb \in E_\bb$.
\end{Proposition}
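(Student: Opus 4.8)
The plan is to differentiate the definition of $\DF$ directly along an arbitrary direction $V' \in C_0$ and to recognise the resulting expression as a conditional expectation under the Gibbs state $p = \rho(V)$. Since $\DF V_{\aa \to \bb}(x_\bb)$ splits as the sum of the linear term $V_\bb(x_\bb)$ and the log-partition term $\ln \sum_{x_{\aa|\bb} = x_\bb} \e^{-V_\aa(x_\aa)}$, it suffices to treat each summand separately and then collect.

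First I would compute the directional derivative $\frac{d}{dt}\big|_{t=0} \DF(V + t V')_{\aa\to\bb}(x_\bb)$. The linear term contributes $V'_\bb(x_\bb)$ immediately. For the log-partition term, writing $S(t) = \sum_{x_{\aa|\bb}=x_\bb} \e^{-(V_\aa + tV'_\aa)(x_\aa)}$, the chain rule gives $\frac{d}{dt}\big|_{t=0}\ln S = \dot S(0)/S(0)$, and differentiating under the (finite) sum yields
\[
\frac{d}{dt}\Big|_{t=0} \ln S = - \frac{\sum_{x_{\aa|\bb}=x_\bb} V'_\aa(x_\aa)\,\e^{-V_\aa(x_\aa)}}{\sum_{x_{\aa|\bb}=x_\bb} \e^{-V_\aa(x_\aa)}}.
\]

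The remaining step is to identify this ratio with $\E_{p_\aa}[V'_\aa \,|\, x_\bb]$. The normalisation constant $Z_\aa$ of $p_\aa = \frac{1}{Z_\aa}\e^{-V_\aa}$ cancels between numerator and denominator, so the ratio is precisely the expectation of $V'_\aa$ against the conditional law obtained by restricting $p_\aa$ to the fibre $\{x_{\aa|\bb} = x_\bb\}$ and renormalising. Collecting the two contributions gives $[\Tg_V \DF \cdot V']_{\aa\to\bb}(x_\bb) = V'_\bb(x_\bb) - \E_{p_\aa}[V'_\aa \,|\, x_\bb]$, as claimed.

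There is no serious analytic obstacle here, since everything takes place over finite configuration spaces and $S(0) > 0$ guarantees smoothness of the logarithm; the only point requiring care is the bookkeeping that identifies the fibrewise ratio with the conditional expectation, i.e. checking that dividing $\e^{-V_\aa}$ by its fibre sum reproduces exactly the conditional Gibbs law, with the $\beta$-scaling absorbed in the convention $\rho = \rho^1$. Once this identification is made, the formula is manifestly $\R$-linear in $V'$, confirming that $\Tg_V \DF$ is the announced linear map $C_0 \to C_1$, and, since the coefficients are rational in the entries of $p$, directly feeding the rationality statement invoked for theorem~\ref{thm:Sigma}.
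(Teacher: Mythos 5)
Your proof is correct and takes essentially the same approach as the paper's: the paper's own (very brief) proof likewise consists in differentiating the conditional free energy term $-\ln \sum_{x_{\aa|\bb} = x_\bb} \e^{-V_\aa(x_\aa)}$ with respect to $V_\aa$, deferring the computation to the author's thesis. Your explicit verification that the normalisation constant $Z_\aa$ cancels, so that the fibrewise ratio is exactly $\E_{p_\aa}[V'_\aa \,|\, x_\bb]$, is precisely the bookkeeping the paper leaves implicit.
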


\begin{proof}
    This computation may be found in \cite[prop. 4.14]{phd}. 
    It consists 
    in differentiating the conditional free energy term
    $- \ln \sum_{x_{\aa|\bb} = x_\bb} \e^{-V_\aa(x_\aa)}$ 
    with respect to $V_\aa \in \R^{E_\aa}$.
\end{proof}

Note that any $p\in \Gamma_0$ defines a 
family of local metrics $(\eta_{p_\aa})_{\aa \in K}$ such that 
$\eta_{p_\aa}(U_\aa, V_\aa) := \E_{p_\aa}[U_\aa  V_\aa]$, consistent 
in the sense that the restriction of $\eta_{p_\aa}$ to 
$\R^{E_\bb} \subseteq \R^{E_\aa}$ coincides with $\eta_{p_\bb}$ 
for all $\aa \supseteq \bb$ by consistency of $p$. 
The direct sum of the $\eta_{p_\aa}$ defines a scalar product $\eta_p$ 
on $C_\bullet$, which we denote by $\langle - , - \rangle_p$. 

Denote by 
$\E_p^{\aa \to \bb}$ 
the orthogonal projection $\R^{E_\aa} \to \R^{E_\bb}$ 
with respect to $\eta_{p_\aa}$, for all $\aa \supseteq \bb$. This 
is the conditional expectation operator on observables, adjoint 
of the embeddings $\R^{E_\bb} \subseteq \R^{E_\aa}$. 
Defining $\nabla_p : C_0 \to C_1$ 
by \eqref{eq:TV-D}:
\begin{equation} \label{eq:nabla-p}
    \nabla_p(V')_{\aa \to \bb} = V'_\bb - \E_p^{\aa \to \bb}[V'_\aa],
\end{equation}
propositions \ref{prop:d-D} and \ref{prop:TV-D} imply
that for all $p = \rho(\zeta v) \in \Gamma_0$, 
\begin{equation} \label{eq:Tfix-nabla}
    \Tg_v \fix = \Ker(\nabla_p \circ \zeta) = 
\Ker(\Tg_{v} (\DF \circ \zeta)).
\end{equation}
The restriction of $\eta_p$ to tangent fibers $\Tg_v \fix$ 
for $p = \rho(\zeta v)$ moreover makes $\fix$ 
a Riemannian manifold.

It is worth mentioning that $\nabla_p$ is the adjoint of $\delta$ 
for the metric $\eta_p$ on $C_\bullet$, and therefore extends 
to a degree 1 differential on $C_\bullet$ \cite[prop 5.8]{phd}.
This follows from adjunction of the projections $\E_p^{\aa \to \bb}$ with 
the inclusions $j_{\aa \leftarrow \bb}$,
as identifying $C_\bullet$ with its dual $C_\bullet^*$ through $\eta_p$, the operator 
$\nabla_p$ then represents $d$.  
However, note that $\delta C_1$ is
the orthogonal of $\Ker(\nabla_p) = \zeta(\Tg_v \fix)$ but not
of $\Ker(\nabla_p \circ \zeta) = \Tg_v \fix$, 
which may intersect $\delta C_1$. 

\begin{Proposition} \label{prop:nabla-zeta-delta}
For all $\phi \in C_1$ and all $p \in \Gamma_0$, one has: 
\begin{equation} \label{eq:nabla-zeta-delta}
    \begin{split}
\nabla_p\, \zeta (\delta \phi)_{\aa \to \bb} 
&= \int_{dK^\bb} \phi \,-\, \E_p^{\aa \to \bb} \bigg[
\int_{dK^\aa} \phi \bigg] \\[.3em]
&= \int_{K^{\aa\bb}} 
\phi \,-\, \E_p^{\aa \to \bb} \bigg[
    \int_{K^{\Om}_\aa \to K^\aa_\bb} \phi \bigg]
    \end{split}
\end{equation}
\end{Proposition}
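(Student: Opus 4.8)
The plan is to derive both equalities from the Gauss formula \eqref{eq:Gauss-0} and the definition \eqref{eq:nabla-p} of $\nabla_p$, so that the whole statement reduces to bookkeeping over the combinatorial domains of Definition \ref{def:hypercones}. The first equality is almost immediate: applying $\nabla_p$ to the element $\zeta(\delta\phi) \in C_0$ gives, by \eqref{eq:nabla-p},
\begin{equation}
\nabla_p\,\zeta(\delta\phi)_{\aa\to\bb} = \zeta(\delta\phi)_\bb - \E_p^{\aa\to\bb}\big[\zeta(\delta\phi)_\aa\big],
\end{equation}
and the Gauss formula (Proposition \ref{prop:Gauss}), in the integral notation preceding Theorem \ref{thm:Greene}, identifies $\zeta(\delta\phi)_\bb = \int_{dK^\bb}\phi$ and likewise $\zeta(\delta\phi)_\aa = \int_{dK^\aa}\phi$, which is the first line.

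For the second equality I would rewrite the difference as a cancellation after splitting each coboundary. Since $\bb\subseteq\aa$, I classify every edge $\aa'\to\cc$ of a coboundary according to whether its source satisfies $\aa'\subseteq\aa$ or not. Splitting $dK^\bb$ this way yields $\int_{dK^\bb}\phi = \int_{K^{\aa\bb}}\phi + R$, where the edges with $\aa'\subseteq\aa$ (hence $\aa'\in K^\aa_\bb$) and $\cc\in K^\bb$ form $K^{\aa\bb} = K^\aa_\bb \to K^\bb$, and
\begin{equation}
R = \sum_{\substack{\aa'\not\subseteq\aa,\ \cc\subseteq\bb\\ \aa'\supset\cc}}\phi_{\aa'\to\cc}
\end{equation}
collects the remaining edges. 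Splitting $dK^\aa$ in the same way gives $\int_{dK^\aa}\phi = \int_{K^\Om_\aa\to K^\aa_\bb}\phi + R$, the edges with $\aa'\not\subseteq\aa$ and target $\cc\not\subseteq\bb$ forming $K^\Om_\aa\to K^\aa_\bb$ (using $K^\Om = K$), and those with $\cc\subseteq\bb$ forming the same remainder $R$. The point making the two remainders coincide is that $\aa'\not\subseteq\aa$ already forces $\aa'\not\subseteq\bb$ because $\bb\subseteq\aa$, so in both splits the leftover edges are exactly those with $\aa'\not\subseteq\aa$ and $\cc\subseteq\bb$.

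It then suffices to observe that $R$ is $\bb$-local: each term $\phi_{\aa'\to\cc}$ lies in $\R^{E_\cc}\subseteq\R^{E_\bb}$ since $\cc\subseteq\bb$, so $R\in\R^{E_\bb}$, and the conditional expectation $\E_p^{\aa\to\bb}$, being the orthogonal projection onto $\R^{E_\bb}$, acts as the identity on it, $\E_p^{\aa\to\bb}[R]=R$. Substituting the two splits and using linearity of $\E_p^{\aa\to\bb}$,
\begin{equation}
\int_{dK^\bb}\phi - \E_p^{\aa\to\bb}\Big[\int_{dK^\aa}\phi\Big]
= \int_{K^{\aa\bb}}\phi + R - \E_p^{\aa\to\bb}\Big[\int_{K^\Om_\aa\to K^\aa_\bb}\phi\Big] - R,
\end{equation}
and the two copies of $R$ cancel, giving the second line. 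The only delicate point — which I would treat as the main obstacle — is verifying that the index sets match exactly under the split, in particular that no edge is double-counted or dropped when passing from the condition $\aa'\not\subseteq\bb$ defining $dK^\bb$ to the condition $\aa'\not\subseteq\aa$; once this is checked, the $\bb$-locality of $R$ makes the cancellation automatic and no property of $p$ beyond positivity enters.
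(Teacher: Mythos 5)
Your proof is correct and takes essentially the same route as the paper's: the first line by substituting the Gauss formula \eqref{eq:Gauss-0} into the definition \eqref{eq:nabla-p} of $\nabla_p$, and the second by partitioning $dK^\bb$ by source and $dK^\aa$ by target, then cancelling the common $\bb$-local remainder $\int_{K^\Om_\aa \to K^\bb}\phi$, which is fixed by $\E_p^{\aa\to\bb}$. Your reading $K^{\aa\bb}=K^\aa_\bb\to K^\bb$ (Definition \ref{def:hypercones}) is the correct one, and your explicit verification that the two leftover index sets coincide is in fact cleaner than the paper's proof, whose parenthetical remark mistakenly asserts $K^{\aa\bb}=K^\Om_\aa\to K^\bb$.
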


\begin{proof}
Substituting the Gauss formula \eqref{eq:Gauss-0} into
\eqref{eq:nabla-p} yields the first line. 
We may then partition the coboundary $dK^\bb$ 
by source as $K^\Om_\bb \to K^\bb = (K^\Om_\aa \sqcup K^\aa_\bb) \to K^\bb$, 
and $dK^\aa$ by target as
$K^\Om_\aa \to K^\aa = K^\Om_\aa \to (K^\aa_\bb \sqcup K^\bb)$. 
Also note that $\int_{dK^\bb} \phi \in \R^{E_\bb}$ 
is fixed by $\E_p^{\aa \to \bb}$ to remove the redundant 
terms and obtain the second line 
(see definition \ref{def:hypercones} for notations : 
$K^\Om_\aa$ here denotes the complement of 
$K^\aa$, and $K^{\aa\bb} = K^\Om_\aa \to K^\bb$). 
\end{proof}

\begin{proof}[Proof of theorem \ref{thm:Sigma}]
For all $V \in C_0$, coefficients of the linear map $\Tg_V \DF = \nabla_p$ 
are rational functions of $p = \rho(V)$ in \eqref{eq:TV-D}.
The coefficients of $\E_{p}^{\aa \to \bb}$  
are indeed given according to the Bayes rule, for all 
$\aa \supseteq \bb$ in $K$, as 
\begin{equation} \label{eq:pab-rational}
    p_\aa(x_\aa|x_\bb) = 
    \frac {p_\aa(x_\aa)}
    {\displaystyle \sum_{y_{\aa|\bb} = x_\bb} p_\aa(y_\aa)} 
    \;\in\;\R(\Delta_0).
\end{equation}
As $\delta$ and $\zeta$ have integer coefficients, it follows
that $\Tg_v X_{GBP}$ given by \eqref{eq:Tv-XGBP}, 
is obtained by evaluating a rational function of linear 
maps $L \in \R(\Delta_0) \otimes \gl(C_0)$ 
at $p = \rho(\zeta v)$.

By $\Img(L) \subseteq \delta C_1$, we may define a restricted 
operator $L' \in \R(\Delta_0) \otimes \gl(\delta C_1)$. 
The characteristic polynomial map 
$\gl(\delta C_1) \to \R[\lambda]$ then gives  
an expression
\begin{equation} 
    \det{\lambda -1 + L'_p}_{\delta C_1} = 
    \frac{\chi(p, \lambda)}{Q(p)}
\end{equation}
It is clear from \eqref{eq:pab-rational} 
that the poles of $Q(p)$ lie on the boundary of 
$\Gamma_0$ as $p_\aa > 0$ for all $\aa \in K$ inside $\Gamma_0$; furthermore $Q(p)$ does not depend on $\lambda$. 

The multiplicity of the root $\lambda = 1$ in $\chi_p(\lambda) = \chi(p, \lambda)$
therefore computes the dimension of $\Ker(L'_p)$, which 
is precisely $\cork_v$ by 
\eqref{eq:cork-v-TX}. Lemma \ref{lemma:cork-p-v} 
finally implies that $\chi_p(1) = 0$ is a polynomial equation in $p$ of 
$\bar \Sigma^1 = \bigcup_{k \geq 1} \Sigma^k$, defined by $\cork_p \geq 1$.  
One may compute $\cork_p$ by evaluating derivatives $\partial^j \chi / \partial \lambda^j$ 
at $\lambda = 1$ to recover the singular stratification. 
\end{proof}

\begin{proof}[Proof of lemma \ref{lemma:cork-p-v}]
    Let us stress that both points of view 
    ($v \in \fix$ and $p \in \Gamma_0$) are meant to be identical, 
    were it not for the action of additive constants. 
    The Gibbs state map $\rho \circ \zeta$ induces a quotient diffeomorphism 
    $\Delta_0 \simeq C_0 / R_0$, sends $\fix$ to $\Gamma_0$ and 
    $v + \delta C_1$ to 
    $\Beliefs_v$ by definition. 
  
    Note that $\fix \cap R_0 = \mu(\R)$ is a supplement 
    of $\delta R_1$ (see appendix \ref{section:apx-fix}).
    The existence of a terminal element 
    (by $\cap$-closure assumption) 
    indeed implies that $\mu(1)$ sums to $\sum_\bb c_\bb = 1$ 
    (corollary \ref{cor:BK-coeffs}, see also appendix \ref{section:apx-fix}) 
    and that $\mu(\R)$ is not a coboundary of 
    $\delta C_1$ by theorem \ref{thm:Gauss}.

    This implies that $\delta R_1 \subseteq R_0$ 
    (acting on $[v]$ but trivially on $\Beliefs_v$)
    does not intersect $\fix$, 
    and that the intersection of $\delta C_1$
    with $\fix \cap R_0$ reduces to zero. 
    The intersections $\Tg_v \fix \cap \delta C_1$
    and $\Tg_p (\Gamma_0 \cap \Beliefs_v)$ 
    must therefore have same dimension.
    \end{proof}

\subsection{Loopy Graphs}

\newcommand{\Khoff}{{\cal Z}_1^{(p)}}

In the case of graphs, we may give polynomial equations
for the singular strata $\Sigma^k$ explicitly. 
They are obtained as a loop series expansion 
by focusing on the action of diffusion on fluxes $\ph \in C_1$, 
via the remarkable Kirchhoff formula \eqref{eq:Kirchhoff} below.
The reader may find 
in \cite{Mooij-07,Watanabe-09,Mori-13,Sudderth-07}
very similar loop expansions 
for the Bethe-Kikuchi approximation error 
and the analysis of BP stability. 

When $K \subseteq \Pow(\Om)$ is a graph, we simply write $ij \in K$ for edges and $i \in K$ for
vertices (instead of $\{i, j\}$ and $\{i\}$). 
Our $\cap$-closure assumption usually 
implies that $\varnothing \in K$ and the nerve of $K$ 
is then a simplicial set of dimension 2. 
However, as $N_2 K$ only consists of chains $ij \to i \to \varnothing$ 
and $E_\varnothing$ is a point (unit for $\times$), 
$C_2$ is only spanned by additive constants 
and coincides with $R_2$. 

We denote by $K' = K \smallsetminus \{ \varnothing \}$ 
the associated graph in a more usual sense, 
whose nerve $N_\bullet K'$ is of dimension 1. 
The notation $i \frown j$ will indicate that 
$i$ is a neighbour of $j$ in $K$ and $K'$, 
whenever $ij \in K$. 

\begin{Proposition}[Kirchhoff formula] \label{prop:Kirchhoff}
Given $p \in \Gamma_0$, denote by $\Khoff \subseteq C_1$
the subspace defined by $\E_{p_\bb}[\phi_{\aa \to \bb}] = 0$ for all 
$\aa \to \bb \in N_1 K$
and
\begin{equation} \label{eq:Kirchhoff}
    \phi_{jk \to k} = 
    \E_p^{jk \to k}\bigg[ \sum_{i \frown j} \phi_{ij \to j} \bigg]
\end{equation} 
for all $jk \to k \in N_1 K$. Then $\cork_p = \dim \Khoff$. 
\end{Proposition}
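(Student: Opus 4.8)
The plan is to compute $\cork_p$ on the flux complex $C_1$ and to recognize the answer, on a graph, as the space of conserved currents $\Khoff$. First I would use Lemma~\ref{lemma:cork-p-v} to replace $\cork_p$ by $\cork_v = \dim(\Tg_v \fix \cap \delta C_1)$ for any $v$ with $p = \rho(\zeta v)$. By \eqref{eq:Tfix-nabla} we have $\Tg_v \fix = \Ker(\nabla_p \circ \zeta)$, so a boundary $\delta \ph$ lies in this intersection precisely when $\nabla_p\, \zeta(\delta \ph) = 0$; hence $\Tg_v \fix \cap \delta C_1 = \delta W$ with $W = \Ker(\nabla_p \zeta \delta) \incl C_1$. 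Since $\Ker \delta \incl W$, this gives $\cork_v = \dim W - \dim \Ker \delta$, and reduces the statement to a linear count on fluxes.

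The core step specializes Proposition~\ref{prop:nabla-zeta-delta} to the graph case, where $N_\bullet K'$ has dimension $1$ and $C_2 = R_2$ reduces to constants. I would evaluate \eqref{eq:nabla-zeta-delta} on the three chain types of $N_1 K$ --- edge-to-vertex $jk \to k$, edge-to-$\varnothing$, and vertex-to-$\varnothing$ --- tracking the hypercones $K^{\aa\bb}$ and $dK^{\aa}$ from Definition~\ref{def:hypercones}. Because $E_\varnothing$ is a point, the normalization condition $\E_{p_\bb}[\ph_{\aa \to \bb}] = 0$ forces every flux into $\varnothing$ to vanish. Using the tower property $\E_{p_\bb} \circ \E_p^{\aa \to \bb} = \E_{p_\aa}$ together with consistency of $p$, I expect the two $\to \varnothing$ families of equations $\nabla_p \zeta \delta \ph = 0$ to become redundant once this normalization holds, leaving only the edge-to-vertex family; unwinding $K^{jk,k}$ and its coboundary then turns $\nabla_p \zeta \delta \ph_{jk \to k} = 0$ into the Kirchhoff relation \eqref{eq:Kirchhoff}. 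This identifies $\Khoff$ with $W \cap R_1^{\perp}$, where $R_1^{\perp}$ is the $\eta_p$-orthogonal of the constant $1$-fields $R_1 \incl C_1$.

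It then remains to show that $\delta$ restricts to an isomorphism $\Khoff \xrightarrow{\sim} \Tg_v \fix \cap \delta C_1$, which yields $\dim \Khoff = \cork_v = \cork_p$. Injectivity is immediate: acyclicity of $C_\bullet$ gives $\Ker \delta = \delta C_2 = \delta R_2 \incl R_1$ on a graph, while $\Khoff \incl R_1^{\perp}$, and positivity of $p$ makes $\eta_p$ a genuine scalar product, so $\Khoff \cap \Ker \delta \incl R_1 \cap R_1^{\perp} = 0$. The delicate point, and what I expect to be the main obstacle, is surjectivity: given a conserved flux $\ph \in W$ I must absorb its $\eta_p$-projection onto $R_1$ into $\Ker \delta$ without leaving $W$, so that every class of $\delta W$ is hit by a normalized representative. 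Here I would reuse the analysis of $\fix \cap R_0$, $\delta R_1$ and their interaction with $\delta C_1$ from the proof of Lemma~\ref{lemma:cork-p-v}; the upshot is that the terminal region $\varnothing$ carries all the normalization bookkeeping, the $\to \varnothing$ components of a flux are pure gauge, and the genuine loop currents are represented with vanishing $\varnothing$-flux. Making this matching precise --- that the slice $R_1^{\perp}$ meets every homology class of conserved currents exactly once --- is the crux of the argument, after which the Kirchhoff formula $\cork_p = \dim \Khoff$ follows.
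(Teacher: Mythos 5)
Your first two steps are sound and essentially reproduce the paper's own argument: reducing $\cork_p$ to a kernel computation for $\nabla_p \circ \zeta \circ \delta$, then evaluating proposition \ref{prop:nabla-zeta-delta} on the three chain types of $N_1 K$, with consistency of $p$ making the two $\to \varnothing$ families of equations automatic for zero-mean fluxes, is exactly how the paper derives \eqref{eq:Kirchhoff}. Your injectivity argument is also correct, and is in fact equivalent to the paper's containment $C'_1 \subseteq \nabla_p C_0$: on a graph $\Ker(\delta) \cap C_1 = \delta C_2 = \delta R_2 \subseteq R_1$, and taking $\eta_p$-orthogonals (using $\delta = \nabla_p^*$) turns your statement into the paper's, which it proves instead by explicitly constructing a preimage $V \in C_0$ with $\nabla_p V = \ph$ for each zero-mean $\ph$.

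The genuine gap is the surjectivity step, which you correctly identify as the crux but do not prove, and the heuristic you offer would not close it. Write $W = \Ker(\nabla_p \circ \zeta \circ \delta)$ and decompose any $\ph \in W$ $\eta_p$-orthogonally as $\ph = \ph' + r$ with $\ph' \perp R_1$ and $r \in R_1$. Since $\Ker(\delta) \subseteq R_1$, the only possible representative of the class $\delta \ph$ inside ${\cal Z}_1^{(p)}$ is $\ph'$ itself; surjectivity is therefore equivalent to showing that $\ph' \in W$ and $\delta r = 0$ for every $\ph \in W$, i.e.\ to the splitting $W = {\cal Z}_1^{(p)} \oplus \Ker(\delta)$. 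This does not follow from ``the $\to \varnothing$ components are pure gauge'': $R_1$ also contains the constant components along edge-to-vertex chains, and a priori neither $\ph'$ nor $r$ lies in $W$ separately. The missing idea is that $A = \nabla_p \circ \zeta \circ \delta$ is block-diagonal for the splitting $C_1 = R_1^{\perp} \oplus R_1$. Stability of $R_1$ is immediate ($R_\bullet$ is a subcomplex and $\nabla_p$ maps constants to constants); stability of $R_1^{\perp}$ is the same tower-property computation you invoke in step 2: for zero-mean $\ph$ and consistent $p$,
\begin{equation}
\E_{p_\bb}\big[(A\ph)_{\aa \to \bb}\big]
= \E_{p_\bb}\big[(\zeta\delta\ph)_\bb\big] - \E_{p_\aa}\big[(\zeta\delta\ph)_\aa\big] = 0,
\end{equation}
because by the Gauss formula \eqref{eq:Gauss-0} each $(\zeta\delta\ph)_\cc$ is a sum of flux terms whose expectations against the compatible marginals of $p$ all vanish. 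Granting this, $\ph \in W$ forces $A\ph' = 0$ and $Ar = 0$ separately; and $Ar = 0$ with $r \in R_1$ gives $\delta r \in \Ker(\nabla_p \circ \zeta) \cap \delta R_1 = 0$, which is precisely the fact from lemma \ref{lemma:cork-p-v} you cite, so that $r \in \Ker(\delta)$. With this one computation your plan closes; it is also what the paper's terse assertion --- that $R'_1 = R_1 \cap \nabla_p C_0$ is ``stable under diffusion'' and does not contribute to $\cork_p$ --- is standing in for.
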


\begin{proof} 
First assume that $\phi \in C_1$ is orthogonal to $R_1$ for $\eta_p$. 
Letting $(1_{\aa \to \bb})$ denote the canonical generators of $R_1$, 
we then have  
$\E_{p_\bb}[\phi_{\aa \to \bb}] = 0 = \eta_p(\phi, 1_{\aa \to \bb})$ 
for all $\aa \to \bb$ in $N_1 K$, and in particular
$\phi_{\aa \to \varnothing} = 0$ for all $\aa \in K$. 
Let $C'_1$ denote the space of such fields orthogonal to $R_1$. 

Assume now in addition that $\delta \phi \in \Ker(\nabla_p \circ \zeta)$. 
It then follows from proposition \ref{prop:nabla-zeta-delta} that 
for all $jk \to k \in N_1 K$,
\begin{equation} \label{eq:Khoff-jk}
\int_{K^{(jk)}_{(k)} \to K^{(k)}} \phi = \E_p^{jk \to k} \bigg[ 
\int_{K^\Om_{(jk)} \to K^{(jk)}_{(k)}} \phi \bigg]
\end{equation}
Equation \eqref{eq:Kirchhoff} is equivalent to cancelling the r.h.s. of 
\eqref{eq:nabla-zeta-delta} in its second form when $\phi \in C'_1$,
as all the $\phi_{\aa \to \varnothing}$ vanish. Its 
l.h.s. reduces to $\phi_{jk \to k}$ while the r.h.s. sums inbound 
fluxes $\phi_{ij \to j}$ over source
edges $ij \not \subseteq jk$, containing the target $j \subseteq jk, j \not\subseteq k$ 
(brackets in \eqref{eq:Khoff-jk} are used to avoid ambiguous interpretation of notations 
in definition \ref{def:hypercones}).

We showed that $\eqref{eq:Kirchhoff}$ 
describes $\Ker(\nabla_p \circ \zeta \circ \delta)$ under the assumption 
that $\phi \perp R_1$. 
Recalling that $\cork_p$ is the dimension of $\Ker(\nabla_p \circ \zeta) \cap \delta C_1$, 
we may compute $\cork_p$ as corank of the restriction of 
$\nabla_p \circ \zeta \circ \delta$ 
to a supplement of $\Ker(\delta)$. 
Now as $\delta = \nabla_p^*$ for the metric $\eta_p$, the subspace 
$\nabla_p\, C_0 \subseteq C_1$ is such a supplement.

Let us show that $\nabla_p C_0$ contains $C'_1$. 
Given $\phi \in C'_1$, the orthogonal 
projections of $\phi_{ij \to j} \in \R^{E_j}$ 
and $\phi_{ij \to i} \in \R^{E_i}$ 
onto $\R^{E_j} \cap \R^{E_i} = \R^{E_\varnothing} = \R$ vanish 
for every edge $ij \in K$ by the assumption $\phi \perp R_1$. 
We may thus   
choose $V_{ij}$ in $\R^{E_{ij}} \simeq \R^{E_i} \otimes\R^{E_j} \supset \R^{E_i} + \R^{E_j}$
that projects onto $-\phi_{ij\to i} \in \R^{E_i}$ and 
$-\phi_{ij \to j} \in \R^{E_j}$. 
Letting $V_i = 0$ for all vertex $i \in K$ and $V_\varnothing = 0$, 
we may get $V \in C_0$ such that 
$\nabla_p V = \phi$, as 
\begin{equation}
    \nabla_p(V)_{ij \to j} = 0 - \E_p^{ij \to j}[V_{ij}] = \phi_{ij \to j}
\end{equation}
for all $ij \to j \in N_1 K$, and $\nabla_p V_{\aa \to \varnothing} = 0$
for all $\aa \in K$. 
As $C'_1 \subseteq \nabla_p C_0$ 
consists of cocyles, 
$\cork_p$ is greater or equal than the corank 
of $\nabla_p \circ \zeta \circ \delta$ restricted to $C'_1$. 

The subspace $R'_1 = R_1 \cap \nabla_p C_0$ contains the remaining 
cocyclic degrees of freedom as 
\begin{equation}
\nabla_p C_0 = C'_1 \overset{\perp}{\oplus} R'_1.
\end{equation}
However, as we shown in the proof of lemma \ref{lemma:cork-p-v},
the subspace $\delta R'_1 \subseteq \delta R_1$ does not intersect 
$\Ker(\nabla_p \circ \zeta)$, while it is stable under diffusion. 
Therefore $R'_1$ does not contribute to $\cork_p$ and 
\begin{equation}
\cork_p = \cork(\nabla_p \circ \zeta \circ \delta_{|C'_1}).
\qedhere
\end{equation}
\end{proof}

The Kirchhoff formula \eqref{eq:Kirchhoff} will allow us 
to relate the emergence of singularities to the topology of $K$. 
Just like steady electric currents cannot flow accross open circuits, 
it is clear \eqref{eq:Kirchhoff} will not admit any non-trivial solutions 
when $K$ is a tree. However, unlike electric currents, 
the zero-mean constraint excludes scalar fluxes, fixed 
by conditional expectation operators. 
Multiple loops will thus need to collaborate 
for non-trivial solutions to appear. 

Fixing $p \in \Gamma_0$, denote by $C'_1 \subseteq C_1$
the orthogonal of local constants for $\eta_p$ as above. 
Choosing a configuration $o_i \in E_i$ and letting 
$E_i^* = E_i \smallsetminus \{ o_i \}$ for each vertex $i$, one has an isomorphism:
\begin{equation}
C'_1 \simeq \prod_{ij \to j \in N_1 K'} \R^{E_j^*}
\end{equation}
Let us also denote by ${\bf E}_p : C'_1 \to C'_1$ the 
{\it edge propagator} 
\begin{equation} \label{eq:E-propagator}
{\bf E}_p(\phi)_{jk \to k} = 
\E_p^{jk \to k}\bigg[ \sum_{i \frown j} \phi_{ij \to j} \bigg],
\end{equation}
so that \eqref{eq:Kirchhoff} 
is the eigenvalue equation $\phi = {\bf E}_p(\phi)$. 
One may recover $\cork_p$ in the characteristic polynomial of 
${\bf E}_p$, 
which we compute explicitly for binary variables. 

\begin{Definition}
Define a directed graph structure ${\cal G}$ on 
$N_1 K'$, 
by including all edges of the form $(ij \to j) \triangleright (jk \to k)$
for $i \neq k$. 
\end{Definition}

The edges of ${\cal G}$ describe all non-vanishing coefficients of the 
matrix ${\bf E}_p$. However note that coefficients of ${\bf E}_p$ are 
indexed by lifts of an edge $(ij \to j) \trito (jk \to k)$ 
to a pair of configurations $(x_j, x_k) \in E_j^* \times E_k^*$ in general. 

Let us now restrict to binary variables for simplicity, 
so that the edges of ${\cal G}$ are in bijection 
with the non-vanishing coefficients of 
${\bf E}_p \in \R(\Delta_0) \otimes \gl(C'_1)$. 
The coefficient of ${\bf E}_p$ attached to an 
edge $(ij \to j) \trito (jk \to k)$ actually does not depend on $i$, 
as it consists in projecting observables on $j$ to observables 
on $k$ with the metric induced by $p_{jk}$ by \eqref{eq:E-propagator}.
It may thus be denoted $\eta_{j k}(p)$ for now. 
An explicit form will be given by \eqref{eq:eta-jk} below, 
which is symmetric in $j$ and $k$. 

\begin{Definition}
    Denote by $\Perm^k {\cal G} \subset \mathfrak{S}(N_1 K')$ 
    the set of permutations 
    with exactly $m - k$ fixed points, compatible with ${\cal G}$.  
    Any $\gamma \in \Perm^k {\cal G}$ decomposes as a product 
    of $l(\gamma)$ disjoint cycles. 
    \end{Definition}

\begin{Theorem} \label{thm:chi-graph}
Assume $x_i \in E_i$ is a binary variable for all $i \in K$. 
Then $p \in \bar \Sigma^s$ 
if and only if $(\lambda - 1)^s$ divides the polynomial 
\begin{equation} \label{eq:chi-graph}
\chi_p(\lambda) 
= \sum_{k \geq 0}^{m}
\lambda^{m - k} \sum_{\gamma \in \Perm^k {\cal G}}
(-1)^{l(\gamma)} \,
{\bf \Lambda}_p[\gamma],
\end{equation}
where ${\bf \Lambda}_p[\gamma]$ is the product of coefficients 
of ${\bf E}_p$ accross $\gamma$
\begin{equation}
{\bf \Lambda}_p[\gamma] = \prod_{ij \to j \,\trito jk \to k}^\gamma \eta_{jk}(p),
\end{equation}
and where $\eta_{jk}(p)$ can be chosen as \eqref{eq:eta-jk} below, in an 
orthonormal system of coordinates for $\eta_p$. 
\end{Theorem}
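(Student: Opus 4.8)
The plan is to read off $\chi_p(\lambda)$ as the characteristic polynomial of the edge propagator $\mathbf{E}_p$ and to expand it by the Leibniz formula. Proposition~\ref{prop:Kirchhoff} already identifies $\cork_p$ with $\dim \Khoff$, the dimension of the eigenspace of $\mathbf{E}_p$ for the eigenvalue $1$ (the solution space of $\phi = \mathbf{E}_p \phi$ inside $C'_1$). I would therefore set $\chi_p(\lambda) = \det(\lambda\,\mathrm{Id} - \mathbf{E}_p)$ on $C'_1$, so that the order of vanishing of $\chi_p$ at $\lambda = 1$ records the multiplicity of the eigenvalue $1$; the whole statement then reduces to a combinatorial evaluation of this determinant, together with the identification of that multiplicity with $\cork_p$.

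First I would fix the matrix of $\mathbf{E}_p$ in the binary case. Since each $E_j^* = E_j \smallsetminus \{o_j\}$ is a singleton, $C'_1 \simeq \prod_{ij \to j \in N_1 K'} \R^{E_j^*}$ has one coordinate per directed edge, so $\mathbf{E}_p$ is an $m \times m$ matrix with $m = |N_1 K'|$. By \eqref{eq:E-propagator} its entry coupling the input edge $(ij \to j)$ to the output edge $(jk \to k)$ is the matrix element, between the normalized zero-mean generators of $\R^{E_j^*}$ and $\R^{E_k^*}$, of the orthogonal projection $\E_p^{jk \to k}$ for the metric $\eta_{p_{jk}}$; this is a correlation coefficient that depends only on $p_{jk}$ (not on $i$) and is symmetric in $j,k$, which is the content of the explicit formula \eqref{eq:eta-jk}. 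The nonzero entries are thus exactly those indexed by edges $(ij \to j) \trito (jk \to k)$ of $\mathcal{G}$; in particular the diagonal of $\mathbf{E}_p$ vanishes, since $\mathcal{G}$ has no self-loops (the head $k \neq j$ forbids $(ij \to j) = (jk \to k)$).

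Next I would apply the Leibniz expansion $\det(\lambda\,\mathrm{Id} - \mathbf{E}_p) = \sum_{\sigma} \mathrm{sgn}(\sigma) \prod_{e} (\lambda\,\mathrm{Id} - \mathbf{E}_p)_{e,\sigma(e)}$, the sum running over permutations $\sigma$ of $N_1 K'$. Because the diagonal of $\mathbf{E}_p$ is zero, each fixed point of $\sigma$ contributes a factor $\lambda$ and each moved index contributes $-(\mathbf{E}_p)_{e,\sigma(e)}$, nonzero only when $(\sigma(e), e)$ is an edge of $\mathcal{G}$; hence only the permutations $\gamma \in \Perm^k \mathcal{G}$ (those with $m-k$ fixed points whose nontrivial support lies in $\mathcal{G}$) survive, and their moving part yields $(-1)^k\, \mathbf{\Lambda}_p[\gamma]$. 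The sign then collapses: writing $l(\gamma)$ for the number of nontrivial cycles on the $k$ moved indices, one has $\mathrm{sgn}(\gamma) = (-1)^{k - l(\gamma)}$, so $\mathrm{sgn}(\gamma)\,(-1)^k = (-1)^{l(\gamma)}$. Grouping the surviving terms by the number $m-k$ of fixed points then recovers exactly \eqref{eq:chi-graph}.

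The algebraic steps (determinant, Leibniz sum, cycle grouping, sign count) are routine. I expect two delicate points. The first is the explicit symmetric form \eqref{eq:eta-jk} of $\eta_{jk}(p)$: it requires fixing a normalization of the zero-mean generator of each $\R^{E_j^*}$ in an orthonormal frame for $\eta_p$ and checking that the resulting projection coefficient is a genuine scalar correlation under $p_{jk}$, symmetric in the two endpoints. The second, more structural, is the passage from the \emph{geometric} multiplicity $\dim \Khoff = \cork_p$ supplied by Proposition~\ref{prop:Kirchhoff} to the \emph{algebraic} multiplicity of the root $\lambda = 1$ of $\chi_p$; these coincide because the construction descends from the adjunction $\nabla_p = \delta^*$ for the metric $\eta_p$, which makes the relevant operator diagonalizable, exactly as already used in Theorem~\ref{thm:Sigma}. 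I would invoke that self-adjointness rather than reprove it, leaving only the bookkeeping above.
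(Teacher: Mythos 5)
Your proof follows the paper's own route: Proposition~\ref{prop:Kirchhoff} reduces the theorem to showing that $\chi_p(\lambda)$ is the characteristic polynomial $\det(\lambda - {\bf E}_p)$ on $C'_1$, and the Leibniz expansion, the vanishing of the diagonal of ${\bf E}_p$, and the sign identity $\eps(\gamma)\,(-1)^k = (-1)^{l(\gamma)}$ for a product of $l(\gamma)$ disjoint cycles moving $k$ indices together yield \eqref{eq:chi-graph}. That combinatorial core, including the observation that in the binary case $C'_1$ has exactly one coordinate per element of $N_1 K'$ and that the nonzero entries $\eta_{jk}(p)$ are supported on the edges of ${\cal G}$, is the paper's proof essentially verbatim.

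The one place where you go beyond the paper is your second ``delicate point'', and there your justification is wrong. You claim the geometric multiplicity $\cork_p = \dim \Ker(1 - {\bf E}_p)$ and the algebraic multiplicity of the root $\lambda = 1$ coincide because the adjunction $\nabla_p = \delta^*$ for $\eta_p$ ``makes the relevant operator diagonalizable, exactly as already used in Theorem~\ref{thm:Sigma}''. It does not: ${\bf E}_p$ is a \emph{weighted non-backtracking operator}, and the relation $\trito$ is antisymmetric --- if $(ij \to j) \trito (jk \to k)$ is an edge of ${\cal G}$ then $(jk \to k) \trito (ij \to j)$ is not, since that would force $i = k$ --- so ${\bf E}_p$ is never symmetric in these coordinates; more decisively, on an unweighted cycle the non-backtracking operator has non-real eigenvalues (cube roots of unity on a triangle), which excludes self-adjointness with respect to \emph{any} real inner product. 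The self-adjoint pair $(\delta, \nabla_p)$ does not help because the operator underlying ${\bf E}_p$ is $\nabla_p \circ \zeta \circ \delta$, and the combinatorial factor $\zeta$ is not self-adjoint for $\eta_p$; nor does Theorem~\ref{thm:Sigma} establish any diagonalizability you could cite. Consequently only one implication, $\cork_p \geq s \Rightarrow (\lambda-1)^s \mid \chi_p$, follows for free from ``geometric $\leq$ algebraic''. To be fair, the paper's own proof is silent on exactly this point --- it identifies the root multiplicity with $\dim\Ker(1 - {\bf E}_p)$ without comment --- so your attempt is no weaker than the published argument; but the semisimplicity of the eigenvalue $1$ that the ``if and only if'' requires is not delivered by the reasoning you give, and you should either prove it by other means or weaken the claim to the one-sided implication.
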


Note that factorizing $\gamma$ as $\gamma_1 \dots \gamma_{l(\gamma)}$, one has:
\begin{equation}
{\bf \Lambda}_p[\gamma] = \prod_{s=1}^{l(\gamma)} {\bf \Lambda}_p [\gamma_s].
\end{equation}
We may call ${\bf \Lambda}_p[\gamma_s]$ the {\it loop 
eigenvalue} of $\gamma_s$. 

\begin{figure*}[t]
\begin{center}
\vspace{-2em}
\includegraphics[width=.98\textwidth]
{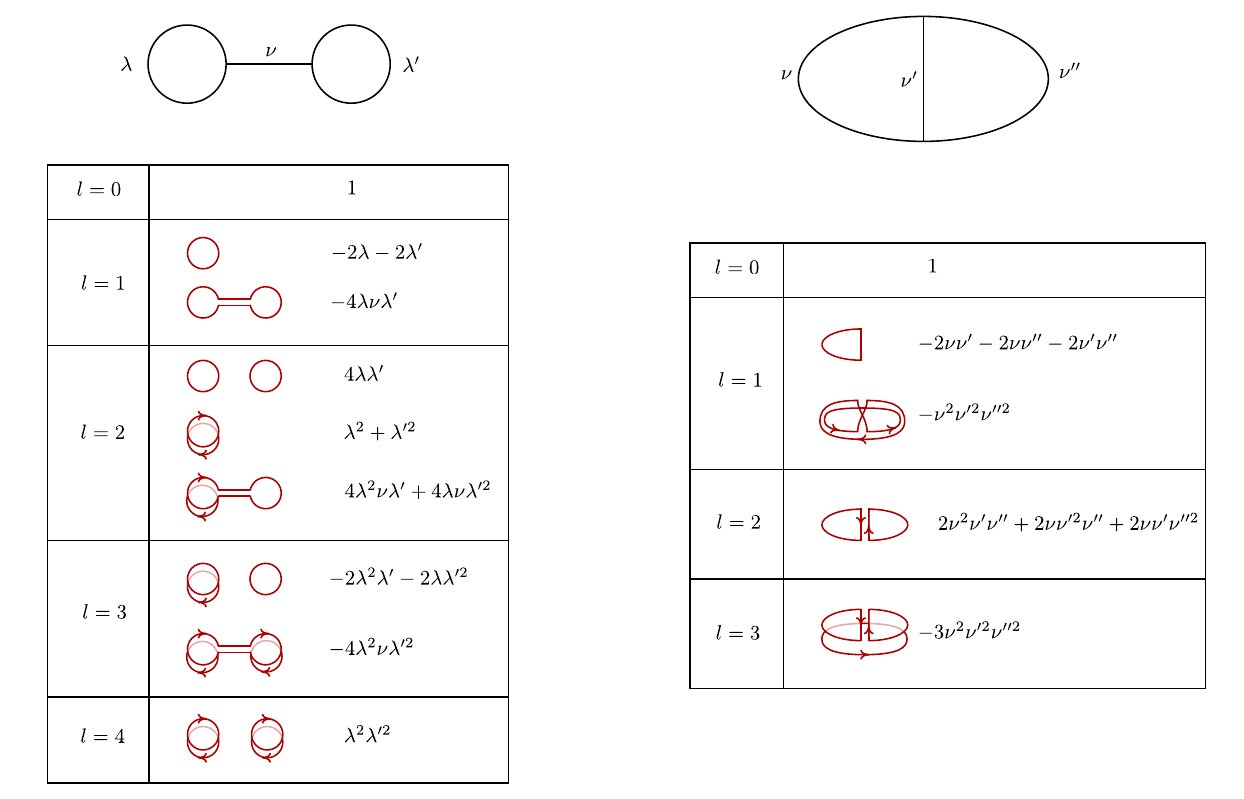}
\centering
\vspace{.2em}
    \caption{\label{fig:dumbell}{\small
        Computation of $\chi_p(1)$ on the dumbell and theta graphs. 
        The sum of listed monomials sums to $0$ 
        when $p \in \bar \Sigma^1$ is singular. Every edge $jk$ in the base graph $K'$
        gives rise to a pair $jk \to k$ and $jk \to j$ of vertices in ${\cal G}$. 
        This explains why every edge of $K'$ may be walked through in both directions, 
        although U-turns are not permitted as $jk \to j$ and $jk \to j$ are not adjacent in ${\cal G}$. 
        The precise topology of $K'$ (including its vertices) does not really matter here, 
        as loop eigenvalues ${\bf \Lambda}_p[\gamma]$ only compute 
        products of coefficients accross cycles.
    }}
\end{center}
\end{figure*} 

\begin{proof}[Proof of theorem \ref{thm:chi-graph}]
Proposition \ref{prop:Kirchhoff} implies that $\cork_p$ is 
the multiplicity of 1 as eigenvalue of ${\bf E}_p$, in other 
words the dimension of $\Ker(1 - {\bf E}_p) \subseteq C'_1$. 
Let us show that $\chi_p(\lambda)$ does 
compute the characteristic polynomial of ${\bf E}_p$. 

Consider a matrix $M : \R^m \to \R^m$, whose diagonal coefficients all vanish, 
and write $\Perm^k_m \subseteq \Perm_m$ for the set of
permutations having exactly $m - k$ fixed points. 
Using the Leibniz formula, one gets for
$\chi_M(\lambda) = \det{\lambda - M}$:
\begin{equation}
\chi_M(\lambda) = 
\sum_{k = 0}^m
\lambda^{m - k}
\sum_{\sigma \in \Perm^k_m} 
\eps(\sigma)
 (-1)^k
\prod_{e \not\in {\rm Fix}(\sigma)} M_{e, \sigma(e)}
\end{equation}
The signature of a length-$k$ cycle $\sigma \in \Perm^{k}_m$ has 
signature $\eps(\sigma) = (-1)^{k - 1}$. 
By multiplicativity of $\eps(\sigma)$, it follows that for
every product of disjoint cycles $\sigma = \sigma_1 \dots \sigma_l \in \Perm^k_m$,  
\begin{equation}
(-1)^k \eps(\sigma) = (-1)^{l} 
\end{equation}
Therefore \eqref{eq:chi-graph} 
computes the characteristic polynomial of ${\bf E}_p$, 
whose diagonal coefficients do  vanish.  
\end{proof}

Let us express the $\eta_{jk}(p)$
in an orthonormal system of coordinates, convenient for the 
symmetry in $j$ and $k$ it induces. 
The price to pay is that $\eta_{jk}(p)$ becomes a rational function 
in $\sqrt p$ and not in $p$, but this 
choice greatly simplifies computations. 
We write $x_i \in \{ \pm \}$ with $p_i^\pm = p_i(\pm)$ and 
$p_{ij}^{\pm\pm} = p_{ij}(\pm\pm)$, etc. 

For $\phi \in C'_1$, each 
flux term $\phi_{ij \to j}$ may be constrained to 
a 1-dimensional subspace $\R u_j$ 
of $\R^{E_j} \simeq \R^2$,  
chosen so that $\E_{p_j}[u_j] = 0$ and $\E_{p_j}[u_j^2] = 1$.
In the $(+, -)$ coordinates, we thus define $u_j$ as: 
\begin{equation}
u_j = \frac 1 {\sqrt{p_j^+ p_j^{-}}} \cdot 
\left(\begin{matrix}
    p_j^{-} \\  
    - p_j^+ 
\end{matrix}\right)
\end{equation}
Because $\E_p^{jk \to k}[u_j]$ has zero mean too, it must lie 
in $\R u_k$. 
Hence, $\E_p^{jk \to k}[u_j] = \eta_{jk}(p) \cdot u_k$ 
where $\eta_{jk}(p)$ is found as:
\begin{equation} \label{eq:eta-jk}
\eta_{jk}(p) = \frac{p_{jk}^{++} p_{jk}^{--} - p_{jk}^{+-} p_{jk}^{-+}}
{\sqrt{p_j^+ p_j^- \cdot p_k^+ p_k^-}}
\end{equation}
The scaling factor $\eta_{jk}(p)$
is symmetric in $j$ and $k$ as $\E_p^{jk \to k}$ 
is a (self-adjoint) projector in $\R^{E_{jk}} \simeq \R^4$. 
The reader may check \eqref{eq:eta-jk} 
by performing the 2x2 matrix-vector product $p_{kj} \cdot u_j$ 
and dividing the resulting vector by $p_k$, 
according to the Bayes rule,
then substituting for $p_j$ and $p_k$ the marginals 
of $p_{jk}$ when necessary. Again, we stress that the image of
$u_j$ must lie in $\R u_k$ by the zero-mean constraint 
$u_j, u_k \perp \R$.  

Letting $g_j = \sqrt{p_j^+ p_j^-}$ denote the geometric mean of $p_j$ 
and letting $\tilde u_j = g_j u_j$ for all $j$: 
\begin{equation}
\E_p^{jk \to k}[\tilde u_j] 
= \frac {g_j}{g_k} \eta_{jk}(p) \tilde u_k  
= \tilde{\eta}_{jk}(p) \tilde u_k
\end{equation}
leads to coefficients $\tilde \eta_{jk}(p)$ that are no longer 
symmetric, but are rational functions of ${p}$. 
Note that the products of the $\eta_{jk}(p)$ and 
$\tilde {\eta}_{jk}(p)$ accross a cycle $\gamma \in \Perm^k {\cal G}$ 
do coincide.  
\section{Conclusion}

Belief propagation algorithms 
and their relationship to Bethe-Kikuchi optimization problems were 
the cornerstone motivation for this work.
We produced the most comprehensive picture of this vast subject
that was in our power, although loop series expansions 
\cite{Watanabe-09, Mooij-07, Mori-13,Sudderth-07}
surely would have deserved more attention. 

The rich structure 
owned by the complex $(C_\bullet, \delta, \zeta)$ was 
a surprise to uncover, in a field usually dominated by statistics. 
We hope the length of this article will be 
seen as an effort to motivate the unfriendly 
ways of Möbius inversion formulas and homological algebra, 
and demonstrates their reach and expressivity in a localized theory 
of statistical systems. This intimate relationship 
between combinatorics and algebraic topology 
(culminating in theorems \ref{thm:Gauss} and \ref{thm:Greene}) 
has not been described to our knowledge, 
although both subjects are classical and well covered individually. 

From a practical perspective, we showed that belief propagation algorithms 
are time-step 1 Euler integrators 
of continuous-time diffusion equations on $(C_\bullet, \delta, \zeta)$.
We propose to call these ODEs {\it belief diffusions}, 
as "diffusions" suggest \ref{item:energy-conservation} a conservation equation of the form 
$\dot v = \delta \Phi(v)$, while "belief" recalls \ref{item:consistency} 
that their purpose 
is to converge on consistent pseudo-marginals, critical 
for Bethe-Kikuchi information functionals.  
The GBP algorithm offered a spatial compromise between precision 
and complexity in the choice of the hypergraph $K \subseteq \Pow(\Om)$; 
our diffusion equations offer a temporal compromise between 
runtime and stability. They thus allow for a wider range 
of initial conditions and applications.

Belief diffusions, in their isothermal and adiabatic form, 
solve localized versions of the max-entropy (\ref{S-crit}) 
and free energy principles (\ref{varF-crit} and \ref{F-crit}).
The associated Bethe-Kikuchi functionals have a piecewise constant number 
of critical points, whose discontinuities are located 
on the projection of $\bar{\cal S}_1 \subseteq \fix$ 
on the quotient space of parameters $C_0 / \delta C_1$. 
A stationary point in $\fix$ crossing $\bar{\cal S}_1$ will become 
unstable and forced onto a different sheet of the intersection 
with homology classes. This would appear as a 
discontinuous jump in the convex polytope $\Gamma_0$, 
happening anytime a consistent belief $p \in \Gamma_0$ crosses 
the singular space $\bar \Sigma_1$.

\section*{Acknowledgements}

This work has benefited from the support of the AI Chair EXPEKCTATION (ANR-19-CHIA-0005-01) 
of the French National Research Agency (ANR).

I am very grateful to Yaël Frégier, Pierre Marquis and Frédéric Koriche for their support in Lens,
and to Daniel Bennequin, Grégoire Sergeant-Perthuis 
and Juan-Pablo Vigneaux for fostering my interest in graphical models. 

\bibliography{biblio/biblio.bib}
\bibliographystyle{myunsrt}

\appendix
\section{Gauss-Greene Formulas} \label{section:apx-Gauss}

\begin{proof}[Proof of theorem \ref{thm:Gauss}]
Let us first show {\it (i)} $\Rightarrow$ {\it (ii)}, stating  
that $\sum_{\aa \in K} \delta \ph_\aa = 0$ vanishes for all $\ph \in C_1$. 
Recall that the canonical inclusions $\R^{E_\cc} \subset \R^{E_\bb} \subset \R^{E_\Om}$
are implicit in:
\begin{equation}\label{eq:phi-sum}
    \sum_{\bb\in K}\delta \ph_\bb 
    = \sum_{\aa \supset \bb} \ph_{\aa \to \bb} 
    - \sum_{\cc \subset \bb} \ph_{\bb \to \cc}
\end{equation} 
Both sums run over the strict 1-chains of $N_1 K$, 
therefore \eqref{eq:phi-sum} 
vanishes as a $K$-local observable in $\R^{E_\Om}$
and $\delta C_1$ is contained in the kernel 
of the total energy map $C_0 \to \R^{E_\Om}$. 

Showing {\it (ii)} $\Rightarrow$ {\it (i)} is more involved, 
as it relies on the interaction decomposition theorem 
(see \cite{Kellerer-64, Matus-88, agt23} and \cite[section 2.3.2]{phd}) 
which requires $K$ to be $\cap$-closed. This theorem states that one may 
choose a consistent family of {\it interaction subspaces} 
${\bf Z}_\aa \subseteq \R^{E_\aa}$ such that for all $\aa \in K$:
\begin{equation}
\R^{E_\aa} = \bigoplus_{\bb \subseteq \aa} {\bf Z}_\bb
\end{equation}
For every $\aa \supseteq \bb \in K$, let us denote by ${\bf Z}_{\aa \to \bb}$ 
the projection of $\R^{E_\aa}$ onto ${\bf Z}_\bb$ cancelling on 
every ${\bf Z}_\cc$ for $\cc \neq \bb$. 

Given $h \in C_0$ such that $\sum_\aa h_\aa = 0$, 
define $\ph \in C_1$ by letting 
$\ph_{\aa \to \bb} = - {\bf Z}_{\aa \to \bb}(h_\aa)$ for all $\aa \supset \bb$ 
in $K$. Then $\delta \ph$ is given by:
\begin{equation}\label{eq:phi-Z}
    \delta \ph_\bb = -\sum_{\aa \supseteq \bb} 
    {\bf Z}_{\aa \to \bb}(h_\aa) + \sum_{\cc \subseteq \bb} {\bf Z}_{\bb \to \cc}(h_\bb)
\end{equation}
The second 
sum over $\cc \subseteq \bb$ reconstructs $h_\bb \in \R^{E_\bb}$. 
To see that the first sum over $\aa \supseteq \bb$ vanishes, 
one should notice that $\R^{E_\bb} \cap \R^{E_{\bb'}} = \R^{E_{\bb \cap \bb'}}$ 
is a strict subspace of $\R^{E_\bb}$ whenever $\bb'$ does not contain $\bb$. 
This implies that its projection onto ${\bf Z}_\bb$ vanishes, and 
that the projection of $\sum_{\aa \in K} h_\aa$ onto ${\bf Z}_\bb$ 
is given by the first term of \eqref{eq:phi-Z}. 
Therefore $\delta \ph = h$ by assumption, 
and $\delta C_1$ coincides with the space of zero-sum potentials. 
\end{proof}

\begin{proof}[Proof of theorem \ref{thm:Greene}]
        Denote by ${\cal E}_j$ the set 
        of chains $\vec \aa \in N_{r+1} K$ such that $\vec \aa^{(j)}$ 
        lies in the hypercone $K^{\bb_0 \dots \bb_r}$, 
        for all $0 \leq j \leq r + 1$. 
        From definition of $\delta$, the l.h.s. of \eqref{eq:Gauss-r} then rewrites 
        \begin{equation} \label{eq:Gauss-Ej}
            \int_{K^{\bb_0 \dots \bb_r}} \delta \ph 
            = \sum_{j = 0}^{r+1} 
            \: \sum_{\vec \aa \in {\cal E}_j} (-1)^j \, \ph_{\vec \aa}.
        \end{equation}
        Let $\vec \cc = \vec \aa^{(j)} \in K^{\bb_0 \dots \bb_r}$ for $j > 0$.  
        Let us show that $\vec \aa$ also belongs to either ${\cal E}_{j-1}$
        or ${\cal E}_{j + 1}$. 
        By construction of ${\cal E}_{j}$, 
        agreeing to let $\cc_{j+1} = \bb_{j+1} = \varnothing$ and $\bb_0 = \Om$, 
        we have: 
        \begin{enumerate}[label=(\roman*)] 
            \item $\aa_{i} = \cc_i \in K^{\bb_i}_{\bb_{i+1}}$ for $i < j$,
            \item $\aa_{j-1} \supset \aa_j \supset \aa_{j+1}$,
            \item $\aa_{k+1} = \cc_k \in K^{\bb_k}_{\bb_{k+1}}$ for $k \geq j$.
        \end{enumerate}
        
        If $\aa_j \subseteq \bb_j$, then $\aa_j \supset \aa_{j+1} \not\subseteq \bb_{j+1}$ 
        implies $\aa_j \in K^{\bb_j}_{\bb_{j+1}}$. 
        Then $\vec \aa^{(j-1)} \in K^{\bb_0 \dots \bb_r}$, as 
        (iii) is satisfied for $k \geq j - 1$.
        
        If $\aa_j \not\subseteq \bb_j$, 
        then $\bb_{j-1} \supseteq \aa_{j-1} \supset \aa_j$ implies 
        $\aa_{j} \in K^{\bb_{j-1}}_{\bb_j}$. 
        Then $\vec \aa^{(j+1)} \in K^{\bb_0 \dots \bb_r}$, 
        as (i) is satisfied for $i \leq j$.
        
        We just proved 
        ${\cal E}_j \subseteq {\cal E}_{j-1} \cup {\cal E}_{j+1}$ 
        for all $0 < j \leq r + 1$, 
        and the reader may easily check that (i) and (iii) 
        do also imply that ${\cal E}_j \cap {\cal E}_k = \varnothing$ 
        for $|k - j| > 1$. 
        There only remains to determine which chains $\vec\aa \in {\cal E}_0$ 
        are not killed by the alternated sum of \eqref{eq:Gauss-Ej}, 
        i.e. do not lie in ${\cal E}_1$. 
        
        Consider $\vec \aa \in {\cal E}_0$, so that 
        $\vec\aa^{(0)} = \aa_1 \aa_2 \dots \aa_{r+1} \in K^{\bb_0 \dots \bb_r}$. 
        Then $\vec \aa^{(1)} = \aa_0 \aa_2 \dots \aa_{r+1} \in K^{\bb_0 \dots \bb_r}$
        if and only if $\aa_0 \subseteq \bb_0$. 
        It follows that $\vec \aa \in {\cal E}_0 \smallsetminus {\cal E}_1$ 
        if and only if $\aa_0 \not\subseteq \bb_0$ 
        and $\aa_{j+1} \in K^{\bb_j}_{\bb_{j+1}}$ for all $0 \leq j \leq r$, 
        which is equivalent to $\vec \aa \in dK^{\bb_0 \dots \bb_r}$ 
        by definition \ref{def:hypercones}. 
\end{proof}

\begin{proof}[Proof of theorem \ref{prop:mu-Gauss}]
    Denote by $\tilde K = K \cup \{\Om \}$ and by
    $\tilde C_\bullet = C_\bullet(\tilde K, \R^{E})$ 
    the hypergraph and complex obtained by 
    appending the global region $\Om$. 
    The complex $C_\bullet$ then naturally embeds into 
    $\tilde C_\bullet$ as a linear subspace, 
    defined by $i_\Om = 0$ when $\Om \not \in K$. 
    The first equality in \eqref{eq:mu-Gauss}, 
    \begin{equation} \label{apx:check-gauss}
        \check \delta \Phi_{\aa_1 \dots \aa_r}
        = \int_{dK^{\aa_1 \dots \aa_r}} \phi,
    \end{equation}
    simply consists in a Gauss formula \eqref{eq:Gauss-r} 
    on $K^{\aa_1 \dots \aa_r}$ for $\check \delta \Phi = \zeta(\delta \mu \Phi) = \zeta(\delta \phi)$. 
    Also recall that $dK^{\aa_1 \dots \aa_r} = \tilde K^{\Om \aa_1 \dots \aa_r}$ 
    from definition \ref{def:hypercones}. 
    
    First assume that $K = \tilde K$ contains $\Om$. 
    Then \eqref{eq:check-Phi} follows from 
    $\zeta = \mu^{-1}$ as in this case $c_\Om = 1$ and $c_\aa = 0$ for
    $\aa \neq \Om$:
    \begin{equation}
    \zeta(\delta \phi)_{\aa_1 \dots \aa_r} 
    = \zeta(\phi)_{\Om \aa_1 \dots \aa_r} 
    = \Phi_{\Om \aa_1 \dots \aa_r}.
    \end{equation}
    Assume now that $\Om \not\in K$. The Gauss formula 
    \eqref{apx:check-gauss} above then 
    consists in a truncated Möbius inversion instead 
    (i.e. a Bethe-Kikuchi approximation) which we may write
    \begin{equation} \label{eq:tilde-zeta-mu}
    \zeta(\delta \phi)_{\aa_1 \dots \aa_r} = 
    \tilde \zeta(\phi)_{\Om \aa_1 \dots \aa_r} 
    = \tilde\zeta (\mu \Phi)_{\Om \aa_1 \dots \aa_r} 
    \end{equation} 
    where $\tilde \zeta$ denotes the zeta automorphism of $\tilde C_\bullet$.
    This is not the inverse of the Möbius automorphism $\mu$ of $C_\bullet$, 
    and we write $\tilde \mu = \tilde \zeta^{-1}$. 
    
    By $\Phi \in C_\bullet \subset \tilde C_\bullet$, remember that 
    the evaluation $i_\Om$ of the first region on $\Om$ cancels $\Phi$. 
    Thus $\tilde \zeta (\tilde \mu \Phi)_{\Om \aa_1 \dots \aa_r} = 0$ so that
    \begin{equation} 
    \tilde \zeta (\mu \Phi)_{\Om \aa_1 \dots \aa_r} 
    = \tilde \zeta\big((\mu - \tilde \mu) \Phi\big)_{\Om \aa_1 \dots \aa_r}.
    \end{equation}
    Let $\tilde \phi = \tilde \mu \Phi$. The Möbius transform is local 
    in the sense that $(\tilde \mu \Phi)_{\aa_0 \dots \aa_r}$ only contains 
    terms $\Phi_{\bb_0 \dots \bb_r}$ with $\bb_0 \subseteq \aa_0$. 
    Therefore $\tilde \phi_{\aa_0 \dots \aa_r} = \phi_{\aa_0 \dots \aa_r}$ 
    whenever $\aa_0 \neq \Om$.
    From its definition \eqref{eq:mu-r}, when $\aa_0 = \Om$ we get: 
    \begin{equation} \label{eq:tilde-phi}
        \begin{split}
    \tilde \phi_{\Om \aa_1 \dots \aa_r} 
    = 
    &\sum_{\bb_r \subseteq \aa_r} 
    \mu_{\aa_r \to \bb_r}
    \dots \sum_{\substack{\bb_1 \subseteq \aa_1 \\
    \bb_1 \not\subseteq \bb_2}} 
    \mu_{\aa_1 \to \bb_1} \\
    & \sum_{\bb_0 \not \subseteq \bb_1} 
    \mu_{\Om \to \bb_0} 
    \Phi_{(\bb_0 \dots \bb_r)^\cap}.
        \end{split}
    \end{equation}
    Let us recall that 
    $c_{\bb_0} = - \tilde \mu_{\Om \to \bb_0}$ 
    (see \cite{Rota-64} or check that the inclusion-exclusion 
    principle is indeed solved by $\tilde \zeta \tilde \mu = 1$ 
    on $\tilde C_0$). 
    We now claim that $\phi - \tilde \phi$ 
    may be expressed as $\tilde \mu \check \Phi$, 
    where $\check \Phi_{\Om \aa_1 \dots \aa_r}$ is given 
    by \eqref{eq:check-Phi} 
    for all $\aa_1 \dots \aa_r \in N_{r-1} K$, and 
    $\check \Phi_{\aa_0 \dots \aa_r} = 0$ 
    for all $\aa_0 \dots \aa_r \in N_r K$.
    
    By locality of $\tilde \mu$, one easily checks 
    that $(\tilde \mu \check \Phi)_{\aa_0 \dots \aa_r} = 0$ 
    when $\aa_0 \neq \Om$, 
    as $\check \Phi$ vanishes on $N_r K$ by construction.  
    To see that $(-\tilde \mu \check \Phi)_{\Om \aa_1 \dots \aa_r}$ 
    is given by \eqref{eq:tilde-phi}, first write from 
    definition of the Möbius transform:
    \begin{equation} 
        \begin{split}
    (\tilde \mu \check \Phi)_{\Om \aa_1 \dots \aa_r} 
    = &\sum_{\bb_r \subseteq \aa_r} 
    \mu_{\aa_r \to \bb_r}
    \dots \sum_{\substack{\bb_1 \subseteq \aa_1 \\
    \bb_1 \not\subseteq \bb_2}} 
    \mu_{\aa_1 \to \bb_1} \\
    &
    \mu_{\Om \to \Om} \,
    \check \Phi_{(\Om \bb_1\dots \bb_r)^\cap},
    \end{split}
    \end{equation}
    then expand $\check \Phi_{(\Om \bb_1 \dots \bb_r)^\cap}$ 
    as a sum over $\bb_0 \not \subseteq \bb_1$ 
    of flux {\it cumulants} 
    $c_{\bb_0} \check \Phi_{(\bb_0 \dots \bb_r)^\cap} 
    = - \mu_{\Om \to \bb_0} \Phi_{(\bb_0 \dots \bb_r)^\cap}$, 
    by substituting \eqref{eq:check-Phi} defining $\check \Phi$. 
    Writing $(\Om \bb_1 \dots \bb_r)^\cap = \cc_0 \dots \cc_r$
    where $\cc_0 = \Om$ and $\cc_j = \bb_j \cap \cc_{j-1}$,
    associativity of the intersection indeed ensures
    that $\Phi_{(\bb_0 \dots \bb_r)^\cap}$ coincides 
    with $\Phi_{\bb_0 \cap (\bb_0 \cap \cc_1) \dots (\bb_0 \cap \cc_r)}$. 
    
    Having showed that $\phi - \tilde \phi = \tilde \mu \check \Phi$, 
    or equivalently that $\tilde \zeta(\phi - \tilde \phi) = \check \Phi$, 
    we may now conclude the proof using \eqref{eq:tilde-zeta-mu} to get:
    \begin{equation}
    \check \delta \Phi_{\aa_1 \dots \aa_r} 
    = \tilde \zeta(\phi - \tilde \phi)_{\Om\aa_1 \dots \aa_r} 
    = \check \Phi_{\Om \aa_1 \dots \aa_r}. \qedhere
    \end{equation}
    \end{proof}
\section{Consistent Manifolds} \label{section:apx-fix}

Letting $R_\bullet \subseteq C_\bullet$ denote the subcomplex 
spanned by constant local observables, 
note that the Gibbs state map $\rho^\beta$ is invariant 
under the addition of energy constants, and
factorizes through a diffeomorphism
$C_0 / R_0 \to \Delta_0$. It follows 
that $\Fix^\beta \simeq \fix^\beta + R_0$ coincides with the
orbit of $\fix^\beta$ under the addition of local constants in $R_0$. 

However remark that $\mu(\R) \subseteq R_0$ 
acts on $\fix^\beta$, as local hamiltonians $H = \zeta(h)$ 
are then shifted by a constant term which rescales all the Gibbs
densities $\e^{-H}$ to a common mass. 
One may decompose $R_0$ as $\mu(\R) \oplus \delta R_1$, as 
\begin{equation} \label{eq:sum-mu-ab}
\sum_{\aa \supseteq \bb} \mu_{\aa \to \bb} = \sum_{\bb \supseteq \varnothing} c_\bb = 1
\end{equation}
follows from the existence of a terminal element (assumed here to be 
the empty set $\varnothing$) 
and corollary \ref{cor:BK-coeffs}. 
The fact that $\mu(1)$ does not sum to zero implies that it generates 
the only non-trivial homology class of $R_0 / \delta R_1$ 
by theorem \ref{thm:Gauss}. 
Hence, $\Fix^\beta \simeq \fix^\beta \times \delta R_1$. 

The following lemma provides an explicit retraction of $\Fix^\beta$ 
onto $\fix^\beta$. 

\begin{Lemma} \label{apx:fix-retraction}
Let $F^\beta : C_0 \to R_0$ denote the direct sum 
$\oplus F^\beta_\aa$ of local free energy maps (definition \ref{def:functionals}). 
Then a natural retraction $r^\beta : \Fix^\beta \to \fix^\beta$ is defined by 
letting for all $v \in \Fix^\beta$: 
\begin{equation}
r^\beta(v) = v - \mu \big( F^\beta(\zeta v)\big). 
\end{equation}
For all $v \in \Fix^\beta$, one has $v \in \fix^\beta$
if and only if $F^\beta(\zeta v) = 0$.
\end{Lemma}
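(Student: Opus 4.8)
The plan is to reduce the whole statement to one exponential identity: that applying $\rho^\beta \circ \zeta$ to $r^\beta(v)$ renormalizes the local Gibbs densities of $v$ to unit mass. First I would check that $r^\beta$ is well-defined. Each component $F^\beta_\aa\big((\zeta v)_\aa\big) = -\tfrac1\beta \ln Z_\aa$, with $Z_\aa = \sum \e^{-\beta(\zeta v)_\aa}$, is a scalar, i.e. a constant observable, so $F^\beta(\zeta v) \in R_0$; since $\zeta$ and $\mu$ stabilize the subcomplex $R_\bullet$, also $\mu(F^\beta(\zeta v)) \in R_0 \subseteq C_0$. Applying $\zeta$ to the definition of $r^\beta$ and using $\zeta\mu = \mathrm{id}$ gives, componentwise, $(\zeta r^\beta(v))_\aa = (\zeta v)_\aa - F^\beta_\aa\big((\zeta v)_\aa\big)$.

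The key step is to exponentiate this. Writing $H_\aa = (\zeta v)_\aa$ and using $\beta F^\beta_\aa(H_\aa) = -\ln Z_\aa$,
$$\e^{-\beta (\zeta r^\beta(v))_\aa} = \e^{\beta F^\beta_\aa(H_\aa)}\,\e^{-\beta H_\aa} = \tfrac1{Z_\aa}\e^{-\beta H_\aa} = \rho^\beta_\aa(H_\aa),$$
hence $\e^{-\beta \zeta r^\beta(v)} = \rho^\beta(\zeta v)$ as a whole. Because $v \in \Fix^\beta$ means $\rho^\beta(\zeta v) \in \Gamma_0 \subseteq \Ker(d)$, this immediately yields $\e^{-\beta \zeta r^\beta(v)} \in \Ker(d)$, i.e. $r^\beta(v) \in \fix^\beta$; so $r^\beta$ is a well-defined map $\Fix^\beta \to \fix^\beta$. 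For the equivalence, the same identity shows $F^\beta(\zeta v) = 0$ iff every $Z_\aa = 1$ iff $\e^{-\beta\zeta v}$ already coincides with its normalization $\rho^\beta(\zeta v)$; when $v \in \Fix^\beta$ this common object lies in $\Gamma_0 \subseteq \Ker(d)$, which is exactly $v \in \fix^\beta$. The retraction property is then free: on $\fix^\beta$ one has $F^\beta(\zeta v) = 0$, so $r^\beta(v) = v - \mu(0) = v$.

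The delicate point I expect is the direction $v \in \fix^\beta \Rightarrow F^\beta(\zeta v) = 0$, which is really a statement about the overall normalization. Membership $\e^{-\beta\zeta v} \in \Ker(d)$ forces all partition functions $Z_\aa$ to share a common value, since the marginalization constraints $d\,\e^{-\beta\zeta v} = 0$ make the total mass constant along the inclusions of $K$; but to pin that value to $1$ one must invoke the normalization carried by the terminal region together with the splitting $R_0 = \mu(\R) \oplus \delta R_1$ recorded just before the lemma. The free overall scale is precisely the $\mu(\R)$ gauge, and it is this degree of freedom that $r^\beta$ removes when driving $F^\beta(\zeta v)$ to zero. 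Handling this normalization with care — rather than the elementary exponential bookkeeping above — is where the argument genuinely needs the structural input of the appendix.
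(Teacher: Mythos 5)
Your core computation is correct and is essentially the paper's own (very terse) argument: $F^\beta(\zeta v)$ lands in $R_0$, the identity $\zeta\mu=\mathrm{id}$ gives $\zeta r^\beta(v)=\zeta v-F^\beta(\zeta v)$, and exponentiating shows $\e^{-\beta\,\zeta r^\beta(v)}=\rho^\beta(\zeta v)$, so that $r^\beta$ maps $\Fix^\beta$ into $\fix^\beta$ and $F^\beta(\zeta v)=0$ implies $v\in\fix^\beta$ whenever $v\in\Fix^\beta$. This is exactly the paper's observation that $\zeta$ and $\mu$ are equivariant under the $R_0$-action and that local free energies vanish if and only if the local Gibbs densities are normalised.

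The gap is in your last paragraph, i.e.\ precisely the direction you flag as delicate: $v\in\fix^\beta\Rightarrow F^\beta(\zeta v)=0$, equivalently the retraction identity $r^\beta=\mathrm{id}$ on $\fix^\beta$. The mechanism you invoke --- that ``the normalization carried by the terminal region'' pins the common mass to $1$ --- does not exist. $\Ker(d)$ is a \emph{linear} subspace, hence invariant under the rescaling $q\mapsto\lambda q$, and the density $q_\varnothing=\e^{-\beta(\zeta v)_\varnothing}$ on the one-point space $E_\varnothing$ is an unconstrained positive scalar: consistency only forces $Z_\aa=q_\varnothing$ for every $\aa$, not $q_\varnothing=1$. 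Concretely, if $v\in\fix^\beta$ and $c\neq 0$, then $v'=v+\mu(c\mathbf{1})$ (where $\mathbf{1}\in R_0$ has component $1_\aa$ on every region) still lies in $\fix^\beta\cap\Fix^\beta$, since its unnormalized densities are just $\e^{-\beta c}$ times those of $v$, yet $F^\beta(\zeta v')=F^\beta(\zeta v)+c\,\mathbf{1}\neq 0$ and $r^\beta(v')=v\neq v'$. So under the literal definition $\fix^\beta=\{v\,|\,\e^{-\beta\zeta v}\in\Ker(d)\}$ --- which the paper itself glosses as densities ``normalized to a \emph{common} mass'', not to unit mass --- this direction cannot be rescued by any argument, because it is false; it holds only on the unit-mass reading of $\fix^\beta$, and the splitting $R_0=\mu(\R)\oplus\delta R_1$ you cite identifies this residual gauge freedom but does nothing to kill it. To be fair, the paper's one-line proof silently adopts the unit-mass reading (``free energies vanish iff the Gibbs densities are normalised''), so your first two paragraphs already reproduce everything the paper actually establishes; the error is only the claim that the terminal region closes the remaining step.
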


\begin{proof}
The result is easily proved by observing that both $\zeta$ and $\mu$ 
are equivariant under the action of $R_0$, and that $F^\beta(V)_\aa$ 
computes $-\ln Z_\aa$, i.e. free energies vanish if and only if 
the Gibbs densities $\e^{-\beta V_\aa}$ are normalised. 
\end{proof}

\begin{Lemma} \label{apx:fix-beta}
    For all $\beta > 0$, 
    $\fix^\beta$ is the image of $\fix^1$ under the 
    scaling $v \mapsto \beta^{-1} v$. 
\end{Lemma}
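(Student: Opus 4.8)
The plan is to prove the set equality $\fix^\beta = \beta^{-1} \cdot \fix^1$ by a direct chase through Definition~\ref{def:Fix}, exploiting the fact that the zeta transform $\zeta : C_0 \to C_0$ is a \emph{linear} (hence homogeneous) automorphism. The whole statement is essentially a one-line consequence of this homogeneity, so the argument is a matter of bookkeeping with the inverse temperature treated as a scalar multiplier.

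First I would recall that $\fix^\beta = \{ v \in C_0 \mid \e^{-\beta \, \zeta v} \in \Ker(d) \}$. The key observation is that for any $\beta > 0$ and any $v \in C_0$, linearity of $\zeta$ gives $\zeta(\beta^{-1} v) = \beta^{-1} \zeta(v)$, so that the factor of $\beta$ in the Gibbs weight cancels exactly:
\begin{equation}
\e^{-\beta \, \zeta(\beta^{-1} v)} = \e^{-\beta \beta^{-1} \zeta(v)} = \e^{-\zeta(v)}.
\end{equation}
From this identity both inclusions are immediate. If $v \in \fix^1$, i.e. $\e^{-\zeta v} \in \Ker(d)$, then the display shows $\e^{-\beta \, \zeta(\beta^{-1} v)} \in \Ker(d)$, so $\beta^{-1} v \in \fix^\beta$, giving $\beta^{-1} \cdot \fix^1 \subseteq \fix^\beta$. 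Conversely, writing $v = \beta w$ for $w \in \fix^\beta$ and applying the same identity yields $\e^{-\zeta(\beta w)} = \e^{-\beta \, \zeta w} \in \Ker(d)$, so $\beta w \in \fix^1$ and $w \in \beta^{-1} \cdot \fix^1$. Since $v \mapsto \beta^{-1} v$ is a linear bijection of $C_0$, this establishes the claimed equality of sets.

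I do not expect any genuine obstacle here; the only point requiring care is the \emph{direction} of the scaling, namely that one must divide $v$ (not multiply it) by $\beta$, since the consistency condition sees the product $\beta\,\zeta v$ and the extra factor is cancelled precisely by rescaling $v$ by $\beta^{-1}$. The same homogeneity argument upgrades at once to the projective version $\Fix^\beta = \beta^{-1} \cdot \Fix^1$, because $\Fix^\beta \simeq \fix^\beta + R_0$ and the subspace $R_0$ of local constants is itself scale-invariant; this is consistent with the remark following Definition~\ref{def:Fix} and justifies restricting the study of isothermal diffusions to temperature $1$.
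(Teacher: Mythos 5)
Your proof is correct and follows essentially the same route as the paper: the paper's own proof is the one-line observation that $\rho^\beta(v) = \rho(\beta v)$, i.e.\ the inverse temperature can be absorbed into the potential by scaling, which is exactly the homogeneity cancellation $\e^{-\beta\,\zeta(\beta^{-1}v)} = \e^{-\zeta v}$ you spell out. Your version is merely more explicit (working with the unnormalized densities of Definition~\ref{def:Fix} and checking both inclusions), which if anything matches the definition of $\fix^\beta$ more literally than the paper's appeal to the normalized Gibbs state map.
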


\begin{proof}
This directly follows from definition \ref{def:Fix}, 
as $\rho^\beta(v) = \rho(\beta v)$. 
\end{proof}

Extension to zero or negative inverse temperatures 
would describe infinitesimal and reversed interactions respectively. 
Although not physical, negative temperatures describe
a natural symmetry of $\Delta_0$ which is diffeomorphic 
to the quotient Lie group $C_0 / R_0$ under $\rho$,  
energy scalings $\beta \in \R$ acting as exponents in $\Delta_0$. 

\begin{Proposition}
Let $u = r^1(0) \in \fix^1$ denote the consistent potentials 
having uniform Gibbs densities. Then $v \in \Tg_u \fix^1$ 
if and only if $V = \zeta v$ satisfies: 
\begin{equation} \label{eq:T_0-fix}
V_\bb(x_\bb) = \E_{\rho(0)}[V_\aa | x_\bb] = 
\frac{|E_\bb|}{|E_\aa|} \sum_{x_{\aa|\bb} = x_\bb} V_\aa(x_\aa)
\end{equation}
for all $\aa \supset \bb$ in $K$ and all $x_\bb \in E_\bb$. 
\end{Proposition}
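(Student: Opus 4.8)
The plan is to obtain this characterization as a direct specialization of the general description of tangent fibers to the consistent manifold recorded in equation~\eqref{eq:Tfix-nabla}, evaluated at the distinguished point $u$. All the structural work is already done by that identity; the only genuine computation will be to evaluate the conditional expectation operators appearing in $\nabla_p$ at the \emph{uniform} belief, where they collapse to fiberwise arithmetic means.

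First I would note that $u = r^1(0)$ lies in $\fix = \fix^1$ by construction of the retraction of lemma~\ref{apx:fix-retraction}, so that $p := \rho(\zeta u)$ belongs to $\Gamma_0$ and~\eqref{eq:Tfix-nabla} applies verbatim at $v = u$, giving
\begin{equation}
\Tg_u \fix = \Ker(\nabla_p \circ \zeta).
\end{equation}
The feature special to $u$ is that $\zeta u = -F^1(0)$ lies in the subcomplex $R_0$ of local constants, since $\zeta \mu = 1$ and $r^1(0) = -\mu(F^1(0))$. As the Gibbs state map $\rho$ is invariant under additive local constants, this forces $p = \rho(\zeta u) = \rho(0)$, the uniform belief with $p_\aa(x_\aa) = 1/|E_\aa|$ for every $\aa \in K$. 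Hence the metric $\eta_p$ and every conditional expectation occurring in $\nabla_p$ are computed against the uniform distribution.

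Next I would unfold the kernel condition through~\eqref{eq:nabla-p}. Writing $V = \zeta v$, membership $v \in \Tg_u \fix$ reads $\nabla_p(V)_{\aa \to \bb} = V_\bb - \E_p^{\aa \to \bb}[V_\aa] = 0$ for all $\aa \supset \bb$ in $K$, i.e. $V_\bb(x_\bb) = \E_p^{\aa \to \bb}[V_\aa](x_\bb)$ for every $x_\bb \in E_\bb$. It then remains to make the right-hand side explicit. Here I would use the product structure $E_\aa = \prod_{i \in \aa} E_i$, which ensures that the fiber $\{x_\aa \,:\, x_{\aa|\bb} = x_\bb\}$ has cardinality $|E_\aa|/|E_\bb| = |E_{\aa \smallsetminus \bb}|$ \emph{independently} of $x_\bb$. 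Consequently the $p_\aa$-weighted conditional average, renormalized by the fiber mass, reduces to the plain arithmetic mean
\begin{equation}
\E_{\rho(0)}[V_\aa \,|\, x_\bb] = \frac{|E_\bb|}{|E_\aa|} \sum_{x_{\aa|\bb} = x_\bb} V_\aa(x_\aa),
\end{equation}
which is exactly the claimed identity~\eqref{eq:T_0-fix}.

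Since~\eqref{eq:Tfix-nabla} is an equality of subspaces, both implications come for free and there is no serious obstacle. The only points deserving care are the verification that $u$ genuinely has uniform Gibbs densities, so that $p = \rho(0)$ — which is where the formula $\zeta u = -F^1(0)$ for the retraction and the invariance of $\rho$ under constants enter — together with the remark that the uniform conditional expectation is fiberwise averaging precisely because the fiber cardinality $|E_\aa|/|E_\bb|$ does not vary with $x_\bb$, a consequence of the free sheaf structure of $E$.
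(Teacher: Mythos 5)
Your proposal is correct, and it is essentially the argument the paper intends: the paper states this proposition without a written proof (deferring to the thesis), but the surrounding text makes clear it is meant as the specialization of $\Tg_v \fix = \Ker(\nabla_p \circ \zeta)$ from \eqref{eq:Tfix-nabla} (via proposition \ref{prop:TV-D}) at the uniform point. Your two supporting checks — that $\zeta u = -F^1(0) \in R_0$ forces $p = \rho(0)$, and that uniform conditional expectation reduces to fiberwise averaging with constant fiber cardinality $|E_\aa|/|E_\bb|$ — are exactly the details needed, and both are sound.
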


Equation \eqref{eq:T_0-fix} describes consistency for systems 
having infinitesimal interactions with respect to the
energy scale (high temperature limit). In the non-interacting case,
a canonical choice of {\em interaction subspaces} 
${\bf Z}_\aa \subseteq \R^{E_\aa}$,
where 
\begin{equation}\label{eq:Za-canonical}
    {\bf Z}_\aa \subseteq \bigg(\sum_{\bb \subseteq \aa} \R^{E_\bb}\bigg)^\perp
    \quad{\rm for}\;  \eta_{\rho(0)} 
\end{equation}
further allows to describe $\Tg_u \fix^1$ as 
the set of those $v \in C_0$ such that $v_\aa \in {\bf Z}_\aa$ 
for all $\aa \in K$ \cite[section 5.2.1]{phd}.
The metric $\eta_{\rho(0)}$ is induced by
uniform beliefs (see section \ref{section:Equilibria}), 
so that conditional expectations of \eqref{eq:T_0-fix} 
are orthogonal projections for $\eta_{\rho(0)}$. 
Note that such an orthogonal decomposition,
compatible with the metric $\eta_p = \eta_{\rho(0)}$,
is only possible because all variables 
are independent in the high-temperature limit. 
\section{Faithfulness Proofs}

The faithfulness properties of a diffusion flux $\Phi : C_0 \to C_1$ 
(definition \ref{def:faithful})
relate the stationary potentials under diffusion, for which 
$\delta \Phi(v) = 0$, with consistent potentials 
which satisfy $\e^{-\zeta v} \in \Ker(d)$. The latter is equivalent to $\DF(\zeta v) = 0$
(where $\DF : C_0 \to C_1$ is the non-linear free energy gradient, definition \ref{def:DF}).

Although proving that any consistent potential is stationary is straightforward when 
$\Phi$ is built from $\DF \circ \zeta$, 
proving the converse is more difficult due to the non-linearity of $\DF$. 
Below we prove that $\Phi_{GBP}$ \eqref{eq:X-GBP} is indeed faithful, and 
that $\Phi_{BK}$ \eqref{eq:X-BK} is locally faithful (i.e. any stationary potential 
sufficiently close to the consistent manifold is consistent), although global 
faithfulness of $\Phi_{BK}$ should be conjectured. 

\begin{Proposition} \label{prop:GBP-faithfulness}
The GBP diffusion flux $\Phi_{GBP} = - \DF \circ \zeta$ 
is faithful. Equivalently for all $V = \zeta v \in C_0$ we have: 
\begin{equation}
\delta (\DF V) = 0 \quad \eqvl \quad \DF V = 0.
\end{equation}
\end{Proposition}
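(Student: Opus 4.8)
The plan is to reduce faithfulness to its nontrivial half and then defeat the nonlinearity of $\DF$ by a single well-chosen pairing. Since $\DF V=0$ immediately gives $\delta(\DF V)=0$, and since $v\in\fix^1$ means exactly $\e^{-\zeta v}\in\Ker(d)$, which by Proposition~\ref{prop:d-D} is equivalent to $\DF V=0$ (with $V=\zeta v$, and $\zeta$ an automorphism so $V$ ranges over all of $C_0$), the content of Definition~\ref{def:faithful}\ref{item:Phi-faithful} for $\Phi_{GBP}=-\DF\circ\zeta$ is precisely the converse implication $\delta(\DF V)=0\Rightarrow \DF V=0$. I would prove this via the adjunction $d=\delta^*$ together with strict monotonicity of the logarithm.

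First I would rewrite the flux multiplicatively. Setting $q_\aa=\e^{-V_\aa}\in C_0^*$, a strictly positive density because $V_\aa$ is a finite real observable, the definition of $\DF$ reads
\[
\DF V_{\aa\to\bb}(x_\bb)=\ln\frac{\pi_*^{\aa\to\bb}(q_\aa)(x_\bb)}{q_\bb(x_\bb)},
\]
so that $\DF V=0$ is exactly marginal consistency $\pi_*^{\aa\to\bb}(q_\aa)=q_\bb$ on every arrow $\aa\to\bb$ of $N_1 K$. The key step is then to pair $\delta(\DF V)\in C_0$ against the density $q=(q_\bb)\in C_0^*$ and move $\delta$ across by adjunction:
\[
\langle q,\,\delta(\DF V)\rangle=\langle dq,\,\DF V\rangle
=\sum_{\aa\to\bb}\;\sum_{x_\bb}\big(q_\bb-\pi_*^{\aa\to\bb}(q_\aa)\big)\,\ln\frac{\pi_*^{\aa\to\bb}(q_\aa)}{q_\bb},
\]
using $dq_{\aa\to\bb}=q_\bb-\pi_*^{\aa\to\bb}(q_\aa)$.

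For each arrow and each $x_\bb$, writing $r=\pi_*^{\aa\to\bb}(q_\aa)(x_\bb)>0$ and $s=q_\bb(x_\bb)>0$, the summand equals $(s-r)(\ln r-\ln s)=-(r-s)(\ln r-\ln s)\le 0$, since $r-s$ and $\ln r-\ln s$ share the same sign, with equality if and only if $r=s$. Hence the whole pairing is $\le 0$ and vanishes if and only if $\pi_*^{\aa\to\bb}(q_\aa)=q_\bb$ on every arrow, i.e. if and only if $\DF V=0$. I would conclude as follows: if $\delta(\DF V)=0$, then the left-hand pairing is zero, forcing each nonpositive summand to vanish, whence $\DF V=0$; this gives the converse and proves faithfulness.

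I expect the main obstacle to be conceptual rather than computational: the choice of test vector. The nonlinearity of $\DF$ is neutralized precisely by pairing against the Gibbs density $q=\e^{-V}$ attached to the \emph{same} $V$, which turns the abstract adjunction into a sum of one-dimensional convexity inequalities of the form $(r-s)(\ln r-\ln s)\ge 0$. The only routine points left to check are the positivity of all $q_\aa$ and of their marginals $\pi_*^{\aa\to\bb}(q_\aa)$, needed for strict monotonicity, which hold on $C_0$, and the bookkeeping of the summation index set, which ranges over $N_1 K$ on both sides of the pairing.
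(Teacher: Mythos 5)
Your proof is correct and follows essentially the same route as the paper's: both use the adjunction $d=\delta^*$ to pair $\DF V$ against the Gibbs density $q=\e^{-V}$, then conclude from the sign agreement of $dq_{\aa\to\bb}$ and $-\DF V_{\aa\to\bb}$ (strict monotonicity of the logarithm) that a vanishing sum of non-positive terms forces $\DF V=0$. The only cosmetic difference is that you phrase the summands multiplicatively as $(s-r)(\ln r-\ln s)\le 0$, while the paper argues directly on the signs of the two factors.
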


\begin{proof} 
By $d = \delta^*$, first recall that we have for all $\phi \in C_1$ and $q \in C_0^*$ 
the duality (integration by parts) formula: 
\begin{equation} \label{eq:A-ipp}
\langle q | \delta \phi \rangle \;=\;  
\langle dq | \phi \rangle.
\end{equation}
Assume now that $\phi = \DF V$ lies in $\Ker(\delta)$. 
Then $\langle dq | \phi \rangle = 0$ for all $q \in C_0^*$ by \eqref{eq:A-ipp}. 
Choosing local (unnormalized) Gibbs densities for $q = \e^{-V} \in C_0^*$, 
let us show that $\langle dq | \DF V \rangle = 0$ implies $\DF V = 0$ 
by a monotonicity argument. 

Expanding the duality bracket of $C_1^* \times C_1$ yields
\begin{equation} \label{eq:A-bracket}
\langle dq | \DF V \rangle 
= \sum_{\aa \to \bb} 
\sum_{x_\bb} 
dq_{\aa \to \bb}(x_\bb) \cdot \DF V_{\aa \to \bb}(x_\bb), 
\end{equation}
and we claim that the summands of \eqref{eq:A-bracket} identically vanish. Indeed we have 
from definition 
\begin{equation} 
dq_{\aa \to \bb}(x_\bb) = \e^{-V_\bb(x_\bb)} - \sum_{x_{\aa|\bb} = x_\bb} \e^{-V_\aa(x_\aa)},
\end{equation}
\begin{equation}
\DF V_{\aa \to \bb}(x_\bb) 
= V_\bb(x_\bb) - \bigg( - \ln \sum_{x_{\aa|\bb} = x_\bb} \e^{-V_\aa(x_\aa)} \bigg),
\end{equation}
and $y \mapsto - \ln y$ is strictly decreasing. The signs of $dq_{\aa \to \bb}(x_\bb)$ and 
$\DF V_{\aa \to \bb}(x_\bb)$ therefore disagree for every $\aa \to \bb$ in $N_1K$ and every $x_\bb \in E_\bb$, 
so that \eqref{eq:A-bracket} is a vanishing sum of non-positive terms, i.e. a sum of zeroes. 

Noticing that $dq_{\aa \to \bb}(x_\bb) = 0$ is moreover equivalent to 
$\DF V_{\aa \to \bb}(x_\bb) = 0$, we conclude that $\DF V = dq = 0$, i.e. 
that Gibbs densities are consistent whenever $\delta(\DF V) = 0$. 
\end{proof}

\begin{Proposition} \label{prop:BK-faithfulness}
The Bethe-Kikuchi diffusion flux $\Phi_{BK} = - \mu \circ \DF \circ \zeta$ 
is locally faithful: there exists an open neighbourhood $\cal V$ of $\fix$ such that 
for all $v \in {\cal V} \subseteq C_0$, 
\begin{equation}
\delta \Phi_{BK}(v) = 0 \eqvl v \in \fix. 
\end{equation}
\end{Proposition}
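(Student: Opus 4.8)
The plan is to reduce, as in proposition \ref{prop:GBP-faithfulness}, to a statement about the non-linear free energy gradient $\DF$, and then to linearize along $\fix$. The easy half (consistency) is immediate: if $v \in \fix$ then $\DF(\zeta v) = 0$ by proposition \ref{prop:d-D}, hence $\Phi_{BK}(v) = -\mu\,\DF(\zeta v) = 0$ and a fortiori $\delta\Phi_{BK}(v) = 0$. For the converse, write $V = \zeta v$ and note that $\delta\Phi_{BK}(v) = -\delta\mu\,\DF V$, so stationarity reads $\delta\mu\,\DF V = 0$; since $v \in \fix$ is equivalent to $\DF V = 0$, the whole content is to show that $\delta\mu\,\DF V = 0$ forces $\DF V = 0$ for $v$ close enough to $\fix$.

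First I would linearize at an arbitrary $v_0 \in \fix$, with $p = \rho(\zeta v_0) \in \Gamma_0$. Using $\Tg_{V_0}\DF = \nabla_p$ (proposition \ref{prop:TV-D}) together with $V_0 = \zeta v_0$, the differential of $v \mapsto \delta\Phi_{BK}(v)$ at $v_0$ is $w \mapsto -\delta\mu\,\nabla_p\zeta w$. Because $\fix \subseteq \{\delta\Phi_{BK} = 0\}$, equation \eqref{eq:Tfix-nabla} already gives $\Tg_{v_0}\fix = \Ker(\nabla_p\zeta) \subseteq \Ker(\delta\mu\,\nabla_p\zeta)$, so the crux is the reverse inclusion. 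Set $\psi = \nabla_p\zeta w \in \Img(\nabla_p)$; the hypothesis is $\delta(\mu\psi) = 0$. Since $\nabla_p$ is the adjoint of $\delta$ for the metric $\eta_p$, choosing $u = \zeta w$ (so that $\nabla_p u = \psi$) yields
\begin{equation}
0 = \langle \delta\mu\psi,\, u\rangle_p = \langle \mu\psi,\, \nabla_p u\rangle_p = \langle \mu\psi,\, \psi\rangle_p.
\end{equation}
Thus the kernel of the linearization coincides with $\Tg_{v_0}\fix$ as soon as the quadratic form $\psi \mapsto \langle\mu\psi,\psi\rangle_p$ is positive definite on $\Img(\nabla_p) = (\Ker\delta)^{\perp}$, i.e. as soon as the $\eta_p$-symmetric part of $\mu$ is positive definite there. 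For the GBP flux $\mu = 1$ this is just $\|\psi\|_p^2 > 0$, recovering proposition \ref{prop:GBP-faithfulness} at the linear level.

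The key lemma is therefore the positivity of $\langle\mu\,\cdot\,,\,\cdot\rangle_p$ on $(\Ker\delta)^{\perp}$, and this is where I expect the main difficulty. Since $\zeta - 1$ is nilpotent and $\mu = \sum_k (-1)^k(\zeta-1)^k$, at the uniform consistent state $p = \rho(0)$ the operator $\mu$ is unipotent and its symmetric part is a small perturbation of the identity, so positivity holds there and, by continuity of $p \mapsto \eta_p$ and of the coefficients \eqref{eq:pab-rational}, on a neighbourhood of that state in $\Gamma_0$. Granting positivity along $\fix$, the linearization of $G = \delta\Phi_{BK}$ then has kernel exactly $\Tg_{v_0}\fix$ at every base point, so the restriction of $G$ to a slice transverse to $\fix$ is an immersion at $v_0$; a tubular-neighbourhood and implicit-function argument produces an open $\mathcal V \supseteq \fix$ on which $G^{-1}(0) = \fix$, which is exactly local faithfulness. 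The obstruction to a global statement is twofold: the symmetric part of $\mu$ may cease to be positive definite as $p$ approaches $\partial\Delta_0$ (where $\eta_p$ degenerates), and away from $\fix$ the non-linearity of $\DF$ may create spurious zeros of $\delta\mu\,\DF(\zeta\,\cdot\,)$ that the linearized analysis does not see — which is precisely why global faithfulness of $\Phi_{BK}$ can only be conjectured.
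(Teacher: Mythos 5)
Your reduction of the hard direction to a kernel computation for the linearization, together with the adjunction identity $0 = \langle \delta\mu\psi, \zeta w\rangle_p = \langle \mu\psi, \psi\rangle_p$ for $\psi = \nabla_p \zeta w$, is correct and is a genuinely different route from the paper's. But the proof has a gap exactly where you locate ``the main difficulty'': the definiteness of $\psi \mapsto \langle \mu\psi,\psi\rangle_p$ on $\Img(\nabla_p)$ is never established. Your justification --- $\mu$ is unipotent, hence ``its symmetric part is a small perturbation of the identity'' --- is false as a general principle: nilpotency of $\zeta - 1$ carries no smallness whatsoever (a unipotent $2\times 2$ matrix with a large off-diagonal entry has an indefinite symmetric part), and the entries of $\zeta - 1$ on $C_1$ are combinatorial counting coefficients of $K$, not small numbers. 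Moreover, even if definiteness held at the uniform state, your continuity argument yields a neighbourhood of that \emph{one point} of $\fix$, whereas the proposition requires ${\cal V}$ to contain \emph{all} of $\fix$, including potentials whose Gibbs states approach $\partial \Delta_0$ --- precisely the regime where you yourself concede the form may degenerate. Writing ``granting positivity along $\fix$'' assumes the key lemma rather than proving it. Note also that your lemma is strictly stronger than what is needed: the kernel identity $\Ker(\Tg_v X_{BK}) = \Tg_v\fix$ is equivalent to $\Ker(\delta\mu) \cap \Img(\nabla_p) = 0$, i.e. $\mu(\Img \nabla_p) \cap (\Img \nabla_p)^{\perp} = 0$ for $\eta_p$, so definiteness could even fail at some $p$ while the proposition survives.

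The paper closes exactly this step by a different, metric-free argument valid at every $v \in \fix$. Writing $W = \zeta w$, linearized stationarity of \eqref{eq:BK-diffusion} reads $W_\bb = \sum_{\aa} c_\aa \, \E_{p_\aa}[W_\aa \,|\, x_{\bb|\aa\cap\bb}]$; since $p \in \Gamma_0$ is consistent and $(C_\bullet^*, d)$ is acyclic, $p$ admits a (possibly signed) global extension $p_\Om$, with respect to which the Bethe-Kikuchi sum collapses into a single conditional expectation, $W_\bb = \E_{p_\Om}[W_\Om | x_\bb]$ with $W_\Om = \sum_\aa c_\aa W_\aa$. This forces $W$ to be a compatible family of conditional expectations of one global observable, which is exactly the description of $\zeta(\Tg_v \fix) = \Ker(\Tg_{\zeta v}\DF)$; hence the kernel identity holds at every consistent point with no positivity input. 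Your final tubular-neighbourhood/implicit-function step is fine once the kernel identity is known everywhere on $\fix$ (it plays the role of the paper's closing analyticity remark), so to salvage your approach you must either prove the genuinely open spectral claim that the $\eta_p$-symmetric part of $\mu$ is definite on $\Img(\nabla_p)$ uniformly along $\fix$, or replace that lemma by the paper's signed-extension argument.
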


\begin{proof}
Given $v \in \fix$ and $w \in \Tg_v C_0 \simeq C_0$,
we denote by $p = \rho(\zeta v) \in \Gamma_0$ the consistent beliefs 
defined by $v$,  
and by $\Tg_v X_{BK} : \Tg_v C_0 \to \Tg_0 C_0$ the differential of 
$X_{BK}$ at $v$. 
Let us show that 
the linearized stationarity condition $\Tg_v X_{BK}(w) = 0$
implies $w \in \Tg_v \fix$. 

Letting $W = \zeta w \in \zeta(\Tg_v \fix) \subseteq C_0$,  
by linearizing the evolution of local hamiltonians  
\eqref{eq:BK-diffusion},
we claim that $\Tg_v X_{BK}(w) = 0$ 
if and only if for all $\bb \in K$ and all $x_\bb \in E_\bb$,  
\begin{equation} \label{eq:A-stationarity}
W_\bb(x_\bb) = \sum_{\aa \in K} c_\aa \,\E_{p_\aa}[W_\aa | x_{\bb | \aa \cap \bb}].
\end{equation}
The reader may check that
conditional free energies  of \eqref{eq:BK-diffusion} linearize to conditional expectations 
with respect to $p \in \Gamma_0$ 
(see \cite[prop. 4.14]{phd} otherwise).

In order to factorize the sum over regions inside a single 
conditional expectation, we choose a global 
measure $p_\Om \in \R^{E_\Om *}$ such that $p_\aa$ is the pushforward 
of $p_\Om$ for all $\aa \in K$. This is always possible by consistency of 
$p \in \Gamma_0$ and acyclicity of $(C_\bullet^*, d)$ \cite[thm. 2.17]{phd}. 
Altough $p_\Om$ cannot always be chosen non-negative, with 
a slight abuse of notation let us write
\begin{equation}
\E_{p_\Om}[W_\Om|x_\aa] = \sum_{x_{\Om|\aa} = x_\aa} 
\frac {p_\Om(x_\Om) W_\Om(x_\Om)} {p_\aa(x_\aa)}.
\end{equation}
so that $\E_{p_\aa}[W_\aa|x_{\aa \cap \bb}] = \E_{p_\Om}[W_\aa|x_{\aa \cap \bb}]$ 
for all $\aa,\bb \in K$, all $W_\aa \in \R^{E_\aa} \subseteq \R^{E_\Om}$, 
and all $x_{\aa \cap \bb} \in E_{\aa \cap \bb}$.  
Because $W_\aa$ does not depend on the variables outside $\aa$ we also have 
\begin{equation}
\E_{p_\Om}[W_\aa|x_{\bb | \aa \cap \bb}] = \E_{p_\Om}[W_\aa|x_\bb]
\end{equation}
for all $x_\bb \in E_\bb$,
where $\R^{E_{\aa \cap \bb}}$ is identified
with its image in $\R^{E_\bb}$.   
This lets us to rewrite the sationarity condition \eqref{eq:A-stationarity}
as 
\begin{equation}
W_\bb(x_\bb) = \E_{p_\Om} \bigg[ 
\sum_{\aa \in K} c_\aa W_\aa \bigg| x_\bb \bigg] = \E_{p_\Om}[W_\Om|x_\bb],
\end{equation}
which implies compatibility of $W \in C_0$ under the conditional 
expectation functor, i.e. $\E_{p_\bb}[W_\bb | x_\cc] = W_\cc(x_\cc)$ 
for all $\cc \subseteq \bb$ and $x_\cc \in E_\cc$. 
The constraint that local hamiltonians are conditional expectations 
of a common global energy $W_\Om = \sum_\aa c_\aa W_\aa = \sum_\aa w_\aa$ 
precisely defines the tangent manifold $\zeta (\Tg_v \fix)
= \Ker ( \Tg_{\zeta v} \DF)$ (see \cite[section 5.2.1]{phd}
or differentiate $\DF$).
Therefore $w \in \Tg_v \fix$. 

We showed that the stationary fiber $\Ker(\Tg_v X_{BK})$ 
coincides with the consistent tangent fiber $\Tg_v \fix$, assuming $v \in \fix$. 
The stationary manifold contains $\fix$ by consistency of $\Phi_{BK}$ 
but we do not know whether it is connected.  
The vector field $X_{BK} = \delta \Phi_{BK}$ being analytic,
one may still prove that connected components of the stationary manifold are 
separated to construct the desired neighbourhood ${\cal V} \supseteq \fix$. 
\end{proof}

\section{Correspondence Proofs} \label{apx:correspondence}

\begin{proof}[Proof of theorem \ref{thm2}]
    Given $\U \in \R$ and $h \in C_0$, let $p \in \Gamma_0$
    be a solution to problem~\ref{S-crit},
    with $\langle p, h \rangle = \U$. 
    Defining local energies $\bar V = \zeta \bar v \in C_0$ 
    by $\bar V_\bb = - \ln p_\bb$ for all $\bb$, 
    the consistency of $p$ implies $\bar v \in \Fix^1$
    as $p = \rho^1(\bar V) = \rho^1(\zeta \bar v)$.
    The same argument holds for any other potential $\bar v' \in \bar v + \R^K$, 
    equal to $\bar v$ up to additive energy constants.
    This follows from the invariance of 
    Gibbs states under additive constants,
    i.e. $\rho^1(\zeta \bar v') = \rho^1(\zeta \bar v)$ 
    is consistent for all
    $\bar v' \in \bar v + \R^K$, and $\bar v' \in \Fix^1$.   
    
    Let us now show that the constraints $\langle p, h \rangle = \U$
    and $dp = 0$  
    also imply
    $\bar v \in [\beta h]$ for some $\beta \in \R$. 
    By duality, recalling that $\delta = d^*$, we know that 
    $\langle q, u \rangle = 0$ for all $q \in \Ker(d)$ 
    if and only if $u \in \Img(\delta)$. 
    The consistency constraint hence yields Lagrange multipliers 
    of the form $\delta \ph \in \delta C_1$. 
    The mean energy constraint 
    $\langle p, h \rangle = \U$,
    as usual,  
    yields the inverse temperature $\beta$ 
    as a Lagrange multiplier for terms of the form $\beta h \in \R h$. 
    The normalisation constraints $\langle p_\aa, 1_\aa \rangle = 1$ 
    for all $\aa \in K$ finally yield additive constants $\lambda \in \R^K$. 
    
    A classical computation yields the differential 
    of local entropies as $\Tg_{p_\bb} S_{\bb} = \langle \:\cdot\:, - \ln(p_\bb)\rangle {\rm\;\mod\;} \R$.
    They coincide with $\langle \:\cdot\: , \bar V'_\bb\rangle$ 
    for all $\bar V'_\bb \in - \ln(p_\bb) + \R$ 
    on the tangent fibers of $\Delta_\bb = \Prob(E_\bb)$,
    due to the tangent normalisation constraint. 
    Therefore $p$ is critical if and only if 
    there exists $\beta \in \R$, $\ph \in C_1$ and $\lambda \in \R^K$ such that:
    \begin{equation} \label{dS}
    \check S_*(p)_\aa = c_\aa (\bar V_\aa - 1_\aa) = 
    \beta h_\aa + \delta \ph_\aa + \lambda_\aa 1_\aa.
    \end{equation}
    By lemma \ref{lemma1} we know that $c \bar V$ and $\bar v = \mu \bar V$
    are homologous: 
    $\bar v = c \bar V - \delta \Psi(c \bar V)$. Lemma \ref{lemma1} 
    therefore implies that 
    \eqref{dS} is equivalent to 
    $\bar v \in \beta h + \delta C_1 + \lambda$ 
    with $\lambda \in \R^K$.
    Theorem \ref{thm:A}, applied to the subcomplex of constant 
    local observables $R_\bullet \subseteq C_\bullet$, 
    implies that $R_0$ decomposes as $\R \oplus \delta R_1$.  
    Up to a boundary term of $\delta R_1 \subseteq \delta C_1$,
    the local constants 
    $\lambda \in \R^K = R_0$ 
    can therefore be absorbed into a single constant $\lambda' \in \R$ 
    so that $\bar v \in \beta h + \delta C_1 + \R$.
    Enforcing the constraint $\langle p, \bar v' \rangle = 
    \beta \langle p, h \rangle = \beta\, \U$, 
    on the equivalent potentials $\bar v' \in \bar v + \R$,
    we may 
    finally get $\bar v' \in \beta h + \delta C_1$ 
    by $\langle p, \delta C_1 \rangle = 0$.   
    Each critical $p \in \Gamma_0$
    hence defines a unique potential
    $\bar v' \in [\beta h] \cap \Fix^1$ 
    satisfying $p = \rho^1(\zeta \bar v')$ 
    and $\langle p, \bar v' \rangle = \beta \,\U$. 
    \end{proof}

    \begin{proof}[Proof of theorem \ref{thm1}]
    Very similar to that of \ref{thm2} above the precise proof can 
    be found in \cite[thm 4.22]{phd}.
    \end{proof}
        
    \begin{proof}[Proof of theorem \ref{thm3}]
    Given $H = \zeta h \in C_0$ and $\beta > 0$, 
    let $V = \zeta v \in C_0$ be a solution to problem \ref{F-crit}. 
    The energy constraint on $v \in h + \delta C_1$ 
    implies $\check F^\beta_*(V)_{|\Img(\zeta\delta)} = 0$ with
    the differential $\check F^\beta_* : C_0 \to C_0^*$ given by: 
    \begin{equation}
        \check \F^\beta_*(V)_\aa = c_\aa \: \rho^\beta(V_\aa).
    \end{equation}
    For any subspace $B \incl C_0$, let us write $B^\perp \incl C_0^*$ for
    the orthogonal dual (or annihilator) of $B$.
    Letting $p = \rho^\beta(V)$
    criticality is then equivalent to $\F^\beta_*(V) = c p \in \Img(\zeta \delta)^\perp$.
    Recall that $d = \delta^*$ by definition and denote by 
    $\zeta^* : q \mapsto q \circ \zeta$ the adjoint of $\zeta$: 
    
    \begin{equation}
        \begin{split}
        cp \in \Img(\zeta\delta)^\perp 
        &\quad\eqvl\quad   \zeta^*(cp) \in \Img(\delta)^\perp \\
        &\quad\eqvl\quad   \zeta^*(cp) \in \Ker(d).
        \end{split}
    \end{equation}  
    Therefore $V$ is critical for $\check \F^\beta_{|\Img(\zeta \delta)}$ 
    if and only if $\zeta^*(c p) = \zeta^*(c \,\rho^\beta(V))$ is consistent.
    
    Assume $p \in \Ker(d)$ is consistent. Then by the dual form of 
    lemma \ref{lemma1} involving $(\delta \Psi)^* = \Psi^* d$ 
    which vanishes on $\Ker(d)$, we have
    $c p = \mu^* p$ 
    and $\zeta^*(cp) = p$ is consistent as well. 
    This shows that any consistent potential $v \in [h] \cap \Fix^\beta$ 
    yields a critical point $V = \zeta v$ of the Bethe-Kikuchi energy 
    $\check \F^\beta(V)$.
    
    Reciprocally assume $q = \zeta^*(c p) \in \Ker(d)$, 
    then $\mu^*(q) = c q$ by lemma \ref{lemma1} and $c q = c p$,
    which means that $p_\bb$ and $q_\bb$ coincide on any $\bb \in K$ 
    such that $c_\bb \neq 0$. 
    Let us define energies $W = \zeta w \in C_0$ by: 
    \begin{equation} \label{W}
    W_\bb = - \frac 1 \beta \ln(q_\bb) + \F_\bb^\beta(V_\bb).
    \end{equation}
    Observing that $q = \rho^\beta(W)$ and
    $\F^\beta_\bb(W_\bb) = \F^\beta_\bb(V_\bb)$ for all $\bb$,
    one sees that $c p = c q$ implies $c V = c W$. 
    By lemma \ref{lemma1} we get
    $\mu V \in \mu W + \delta C_1$ so that the potentials $w = \mu W$ 
    satisfy the global energy constraint $[w] = [v] = [h]$.
    By the assumption $q \in \Ker(d)$ 
    this shows that $w \in [h] \cap \Fix^\beta$. 
    The relation $cV = cW$ implies
    $v - w = \mu (V - W) \in \Ker(c\zeta)$, 
    so that $v \in [h] \cap \Fixt^\beta$ 
    is weakly consistent by definition of $\Fixt^\beta$. 
    
    The retraction $r^\beta : \Fix_+^\beta \to \Fix^\beta$ 
    is described more succinctly via its conjugate 
    $R^\beta = \zeta \circ r^\beta \circ \mu$ acting
    on local energies $V = \zeta v \in C_0$:
    \begin{equation}
        R^\beta(V)_\bb = - \frac 1 \beta
        \ln \sum_{\aa \cont \bb} c_\aa \: \pi_*^{\aa \to \bb}
        \big(\rho^\beta_\aa(V_\aa) \big) 
        - \F_b^\beta(V_\bb),
    \end{equation}
    which amounts to applying the linear projection $\zeta^* c$ on beliefs
    before choosing energies as \eqref{W}.
    \end{proof}

\end{multicols}
\end{document}